\newtheorem{remark}{Remark}
\newtheorem{exam}{Example}
\newtheorem{thm}{Theorem}
\newtheorem{lemma}{Lemma}
\newtheorem{corollary}{Corollary}
\newtheorem*{dec*}{Decoder}
\newcommand{\subparagraph}[1]{\par {\em\underline{#1:}}}
\xdef\csname vec\x \endcsname{\noexpand\ensuremath{\noexpand\bm{\x}}}
\xdef\csname vec\x \endcsname{\noexpand\ensuremath{\noexpand\bm{\x}}}
\xdef\csname c\x \endcsname{\noexpand\ensuremath{\noexpand\mathcal{\x}}}
\xdef\csname bb\x \endcsname{\noexpand\ensuremath{\noexpand\mathbb{\x}}}
\newcommand{\cavc}{CAVC}
\newcommand{\cs}{compound-state}
\newcommand{\Cs}{Compound-state}
\newcommand{\avcstate}{AVC-state}
\newcommand{\ww}{\ensuremath{{\overline{\mathcal{W}}_1 \cap \overline{\mathcal{W}}_2 }}}
\newcommand{\wu}{\ensuremath{{\overline{\mathcal{W}}_1 \cup \overline{\mathcal{W}}_2 }}}
\newcommand{\wk}{\ensuremath{{\overline{\mathcal{W}}_k} }}
\newcommand{\wa}{\ensuremath{{\overline{\mathcal{W}}_1} }}
\newcommand{\wb}{\ensuremath{{\overline{\mathcal{W}}_2} }}
\newcommand{\pn}[2]{\cP_{#1}^{(#2)}}
\def\Cdetcomm{C^{\mathsf{d}}_{\mathsf{com}}}
\def\Crancomm{C^{\mathsf{r}}_{\mathsf{com}}}
\def\Cdetboth{C^{\mathsf{d}}_{\mathsf{and}}}
\def\Cranboth{C^{\mathsf{r}}_{\mathsf{and}}}
\def\Cdetauth{C^{\mathsf{d}}_{\mathsf{or}}}
\def\Cranauth{C^{\mathsf{r}}_{\mathsf{or}}}
\def\Pand{\Phi^{\mathsf{and}}}
\def\Por{\Phi^{\mathsf{or}}}
\def\pand{\phi^{\mathsf{and}}}
\def\por{\phi^{\mathsf{or}}}
\begin{document}

\title{Compound Arbitrarily Varying Channels} 

\author{\IEEEauthorblockN{Syomantak Chaudhuri}
\IEEEauthorblockA{IIT Bombay, India}
\and
\IEEEauthorblockN{Neha Sangwan}
\IEEEauthorblockA{TIFR, India}
\and
\IEEEauthorblockN{Mayank Bakshi}
\IEEEauthorblockA{Huawei, Hong Kong}
\and
\IEEEauthorblockN{Bikash Kumar Dey}
\IEEEauthorblockA{IIT Bombay, India}
\and
\IEEEauthorblockN{Vinod M. Prabhakaran}
\IEEEauthorblockA{TIFR, India}
}
\maketitle

\begin{abstract}
We propose a communication model, that we call compound arbitrarily varying channels
(\cavc), which unifies and generalizes compound channels and arbitrarily varying channels (AVC).
A \cavc\ can be viewed as a noisy channel with a fixed, but unknown,
\cs\ and an \avcstate\ which may vary with every channel use. 
The \avcstate\ is controlled by an adversary who is aware of the \cs. 
We study three problems in this setting: `communication', `communication
and \cs\ identification', and `communication or \cs\ identification'.
For these problems, we study conditions for feasibility and
capacity under deterministic coding and random coding.

\end{abstract}


\section{Introduction} 

In communication systems modeled as discrete memoryless channels (DMC), 
it is assumed that the channel characteristics is fixed and known beforehand.
However, the compound DMC introduced by Blackwell et al. \cite{blackwell-compound}
models channels with fixed but unknown characteristics due to an unknown natural
state. Backwell et al. \cite{blackwell-AVC} also introduced arbitrarily varying channels (AVC)
where the channel state may vary arbitrarily in a
worst case manner for each symbol of transmission. The worst case
variation of the channel state in an AVC may be viewed as the act of a 
malicious adversary.

The capacity of a compound DMC was characterized in \cite{wolf}.
For AVC, the communication capacity under random coding was obtained
in \cite{blackwell-AVC}. The deterministic coding
capacity of an AVC is zero if the channel satisfies a
condition called {\it symmetrizability} which allows the 
adversary to mount an attack with a spurious message so as 
to confuse the decoder between this message and the sent message. 
When the channel is not symmetrizable, the deterministic coding 
capacity is the same as the random coding
capacity \cite{avc-csiszar-narayan}.

In this work, we consider a generalization where there is an unknown
\cs\ as well as an \avcstate\ determined by an adversary (see Figure~\ref{fig:CAVC}). 
The \cs\ is fixed over a blocklength of transmission, 
whereas the \avcstate\ may change for every symbol of transmission. 
We assume that the adversary knows the \cs.
Associated with each \cs, the adversary has a set of channels that 
can be be instantiated (by setting the \avcstate). 
We call this the Compound Arbitrarily Varying Channel (CAVC). 
This is a generalization of both compound channels and AVCs.
For simplicity, in this paper we only consider the case 
of two \cs s. 

We characterize the capacity of \cavc s under both random coding and
deterministic coding. For non-zero rates to be achievable under 
deterministic coding, first, the AVC under each \cs\ should be 
non-symmetrizable.
In addition, the channel should not satisfy a new condition, 
called {\em trans-symmetrizability}, which provides the adversary 
with an attack strategy that can confuse the decoder between the 
sent message under one \cs\ with another message under the other \cs\
(see Fig.~\ref{fig:trans}). We show that when a \cavc\ is not 
symmetrizable in either of these senses, the deterministic coding 
capacity is same as the random coding capacity.

\begin{figure}
    \centering
    \includegraphics[width=0.4\textwidth]{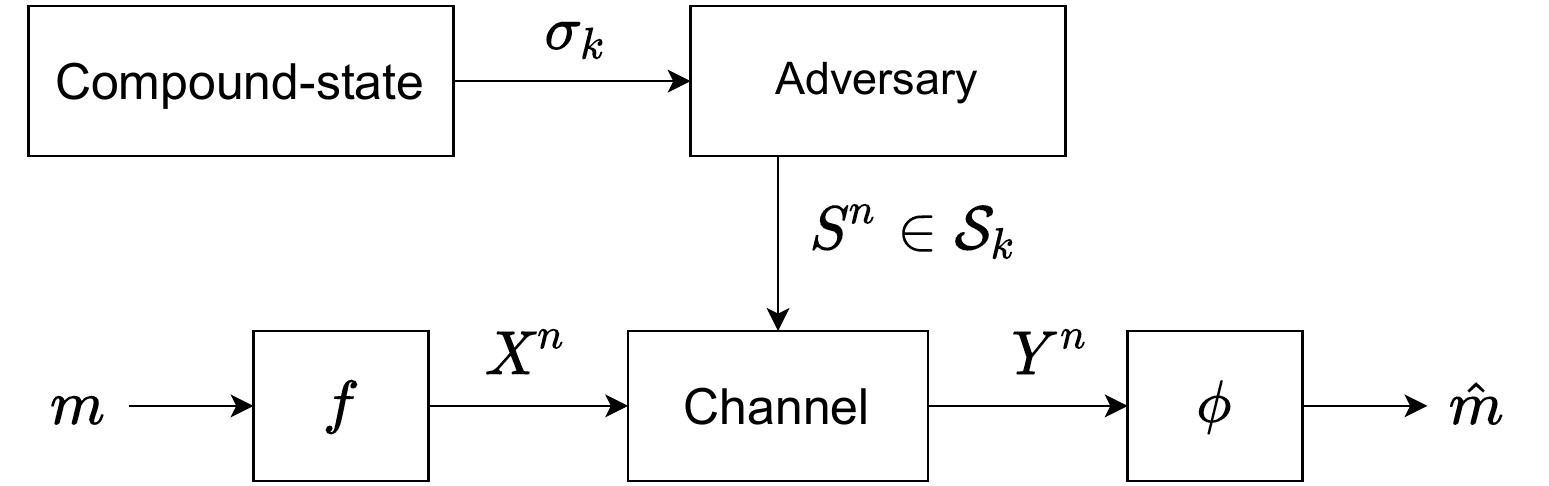}
    \caption{Compound Arbitrarily Varying Channel: The adversary knows the \cs\ $\sigma_k$
    and for each \cs, the adversary has a set of \avcstate s $\cS_k$. 
    The \cavc\ is modeled to be discrete memoryless and 
    the \cs\ remains fixed through out the transmission of a block.}
    \label{fig:CAVC}
\end{figure}

Another way to view the \cavc\ model is to associate an adversary 
with each \cs\ and exactly one of them being active 
for the entirety of the transmission. 
Associated with each adversary, there is a family of channels from
which it can instantiate a channel for each channel use.
In such a situation,
it is also of interest to identify\footnote{Note that this is
significantly different from identifying an \textit{internal} 
adversary in a multiuser channel with byzantine
users \cite{NehaISIT21}.} 
the active adversary.
Thus, in addition to the communication problem, we also study two other
problems in the \cavc\ setup --
joint `communication {\em and} \cs\ identification' and
{`communication {\em or} \cs\ identification'}.
In the first (resp. second) problem above, the decoder needs to decode
the message and (resp. or) identify the \cs . 
In both these settings, we characterize the condition for non-zero rates
under deterministic codes
and also the capacities under deterministic coding and random coding.

If the \cs\ was known to the decoder, the CAVC model would be
a special case of arbitrarily varying broadcast 
channels~\cite{Jahn1981,PeregSteinberg2017,HofBross2006}. 
The trans-symmetrizability condition for
non-zero rates in a CAVC arises precisely because the decoder does not
know the \cs.
In \cite{kosut-kliewer,BKKGYu,GYuS}, on {authentication} 
in channels which may be controlled by an adversary, a relaxed 
decoding requirement is considered. 
When there is no adversary, the decoded message must be correct; but when
the adversary is active, the decoder is allowed to declare 
the presence of the adversary without decoding the message
(however, if the decoder outputs a message instead, it must be correct).
These models are close to our `communication {\em or} identification' model. In fact, we recover the result in 
\cite{kosut-kliewer} as a special case (see Remark~\ref{rem:KK}).
The work in \cite{KK2} considers communication in a Compound-Arbitrarily-Varying network 
where the adversary selects a subset of edges from a network which are then attacked with arbitrary transmissions. 


In Section~\ref{sec:2}, we formally describe the \cavc\ model and present
the problems studied in this paper. We
present our results on the three problems in Sections \ref{subs:comm},
\ref{subs:both}, and \ref{subs:auth}. Section \ref{sec:proofs} 
provides proof sketches for the results.
\section{System model} \label{sec:2}

\begin{table*}[!ht]
\centering
\caption{A brief summary of the problems studied and the results presented in this work.}
\label{tab:t1}
\begin{tabular}{|c|c|c|c|c|}
\hline
\textbf{Task} & \textbf{\begin{tabular}[c]{@{}c@{}} Output set\\ $\widehat \cM$ \end{tabular}} & \textbf{\begin{tabular}[c]{@{}c@{}} Error set \\ $\cE_{m,k}$ \end{tabular}} & \textbf{\begin{tabular}[c]{@{}c@{}} Conditions for positive \\ deterministic capacity \end{tabular}} & \textbf{\begin{tabular}[c]{@{}c@{}} Capacity \\ expression \end{tabular}} \\ \hline
Communication & $\cM$ & $\{m' \in \widehat \cM: m' \neq m \}$ & Non-any-sym. & $\max_{P_X}\min_{W \in \wu} I(X;Y)$ \\ \hline
\begin{tabular}[c]{@{}c@{}}Communication\\ and\\ \Cs\ Identification\end{tabular} & $\cM \times \{\sigma_1,\sigma_2\}$ & $\{m' \in \widehat \cM: m' \neq (m,\sigma_i) \}$ & \begin{tabular}[c]{@{}c@{}}Non-any-sym.\\ $\ww = \emptyset$ \end{tabular}  & $\max_{P_X}\min_{W \in \wu} I(X;Y)$ \\ \hline
\begin{tabular}[c]{@{}c@{}}Communication\\ or\\ \Cs\ Identification\end{tabular} & $\cM \cup \{\sigma_1,\sigma_2\}$ & $\{m' \in \widehat \cM: m' \notin \{m,\sigma_i\} \}$ & Non-trans-sym. & $\max_{P_X}\min_{W \in \ww} I(X;Y)$ \\ \hline
\end{tabular}
\end{table*}

\textit{Notation:} We use bold symbols like $\vecx$,$\vecy$ to denote vectors and capital letters like $X$,$Y$ to denote random variables with $P_X,P_Y$ denoting their distributions respectively. The $i$-th element of a vector $\vecy$ is denoted as $y_i$. 
For a vector $\vecx$, the notation $P_{\vecx}$ refers to its empirical
distribution.  
For any subset ${\cal B}$ in a finite dimensional space $\mathbb{R}^k$,
its convex closure is denoted by ${\bar{\cal B}}$.


A discrete-memoryless Compound Arbitrarily Varying Channel (\cavc) with a 
finite input alphabet $\cX$, a finite output alphabet $\cY$, and two \cs s
$\sigma_1$ and $\sigma_2$ is described by two
families, $\cW_1$ and $\cW_2$, of channels with input alphabet $\cX$ and output
alphabet $\cY$. These families of channels correspond to the
\cs s $\sigma_1$ and $\sigma_2$ respectively. In each family, the channels are indexed
by a finite set
$\cS_k$ ($k=1,2$) called the \avcstate\ alphabet and, in particular, $\cW_k$
($k=1,2$) is a set of channels $\{W(\cdot|\cdot,s),\ s \in \cS_k\}$.  On input
$\vecx \in \cX^n$ over $n$ uses of the channel, $n \in
\{1,2,\ldots\}$, the probability of receiving  $\vecy \in \cY^n$ is given by
$W^n(\vecy|\vecx,\vecs) = \prod_{i=1}^n W(y_i|x_i,s_i)$ for some $\vecs \in
\cS_1^n \cup \cS_2^n$. 


We study the \cavc\ under three distinct but closely-related problem 
settings as specified at the end of this section. 
In all three problems, the \cavc\ is analyzed under both deterministic 
and random (shared-randomness between encoder and decoder unknown to the 
adversary) coding regimes. An $(M,n)$ deterministic code is characterized by
\begin{enumerate}
    \item a message set $\cM = \{1,\dots,M\}$,
    \item an encoder $f:\cM \to \cX^n$, and
    \item a decoder $\phi: \cY^n \to \widehat\cM$.
\end{enumerate}
The set $\widehat \cM$ is different for the three problems, and 
is described later in this section. Table~\ref{tab:t1} gives a short description of each problem and the results we present. 
The problems are studied under the average probability of 
error and it is assumed that the adversary is unaware of the message sent by the 
transmitter but is aware of the encoder and decoder pair $(f,\phi)$ used for transmission.

Let $\cE_{m,k} \subseteq \widehat \cM $ correspond to the set of erroneous
outputs from the decoder when message $m$ is sent and $\sigma_k$ is the \cs .
$\cE_{m,k}$ depends on the problem definition and we specify it 
at the end of this section for each problem.
For $k=1,2$, define 
\begin{align}
    P_{\mathsf{e}}^{\mathsf{d}}(f,\phi,k) &= \max_{\vecs \in \cS_k^n} \frac{1}{M}\sum_{m=1}^M W^n(\phi^{-1}(\cE_{m,k}) |f(m),\vecs). \label{eq:fpi}  
\end{align}
The average probability of error $P_{\mathsf{e}}^{\mathsf{d}}(f,\phi)$ is given by 
\begin{align}
    P_{\mathsf{e}}^{\mathsf{d}}(f,\phi) &= \max\{ P_{\mathsf{e}}^{\mathsf{d}}(f,\phi,1), P_{\mathsf{e}}^{\mathsf{d}}(f,\phi,2)  \}. \label{eq:fp} 
\end{align} 
A rate $R$ is defined to be achievable under deterministic coding if there exists a sequence of 
$(2^{nR},n)$ deterministic codes $\{f^{(n)},\phi^{(n)} \}_{n=1}^{\infty}$ such that 
$P_{\mathsf{e}}^{\mathsf{d}}(f^{(n)},\phi^{(n)}) \to 0$ as $n \to \infty$. 
The \textit{deterministic code capacity} is defined as the supremum of all achievable rates 
under deterministic coding.

Let $\cF$ be the set of all encoders $f:\cM \to \cX^n$ and $\cG$ be the set of all 
decoders $\phi:\cY^n \to \widehat \cM$. An $(M,n)$ random code is given by the pair 
$(F,\Phi) \sim Q(f,\phi)$ where $Q$ is a distribution on $\cF \times \cG$.
The adversary has the knowledge of the distribution $Q$ but does not know the realisation 
of $(F,\Phi)$ used during the transmission and it is unaware of the transmitted 
message as well. For $k=1,2$, define
\begin{align}
    P_{\mathsf{e}}^{\mathsf{r}}(Q,k) &= \max_{\vecs \in \cS_k^n} \hspace{-2pt} \sum_{(f,\phi) \in \cF\times\cG} \hspace{-10pt}
    Q(f,\phi) \frac{1}{M} \sum_{m=1}^M W^n(\phi^{-1}(\cE_{m,k}) |f(m),\vecs). \label{eq:r-qi}
\end{align}
The average probability of error $P_{\mathsf{e}}^{\mathsf{r}}(Q)$ for a random code is given by 
\begin{align}
    P_{\mathsf{e}}^{\mathsf{r}}(Q) &= \max\{ P_{\mathsf{e}}^{\mathsf{r}}(Q,1), P_{\mathsf{e}}^{\mathsf{r}}(Q,2). \label{eq:r-q} \} 
\end{align} 
 
A rate $R$ is defined to be achievable under random coding if there exists a sequence of 
$(2^{nR},n)$ random codes $\{Q^{(n)} \}_{n=1}^{\infty}$ such that $P_{\mathsf{e}}^{\mathsf{r}}(Q^{(n)}) \to 0$ 
as $n \to \infty$. The \textit{random code capacity} is defined as the supremum of all achievable 
rates under random coding.

We now define the three specific problems.

\paragraph*{Communication over \cavc} In this problem, the decoder needs to 
reconstruct the encoded message. Therefore, the decoder's reconstruction
alphabet is $\hat\cM = \cM$ and the set $\cE_{m,k}$ of erroneous 
decoder outputs is given by
$$\cE_{m,k} = \{m' \in \widehat \cM: m' \neq m \}.$$

\paragraph*{Joint Communication and \Cs\ Identification over \cavc}
Here the decoder needs to reconstruct the encoded message, and also identify
the \cs . Hence
$\widehat\cM = \cM \times \{\sigma_1,\sigma_2\}$ and the set $\cE_{m,k}$ is given by
\[
\cE_{m,k} = \{m' \in \widehat \cM: m' \neq (m,\sigma_k) \}. \]

\paragraph*{Communication or \Cs\ Identification over \cavc}
Here the decoder needs to either reconstruct the encoded message or
identify the \cs .
Hence $\widehat \cM = \cM \cup \{\sigma_1,\sigma_2\}$ and the set $\cE_{m,k}$ is given by
\[ \cE_{m,k} = \{m' \in \widehat \cM: m' \notin \{m,\sigma_k\} \}.  \]

\section{Main results} \label{sec:results}

We now present the main results on the three problems in three respective
subsections.

\subsection{Communication over \cavc} \label{subs:comm}
We denote the \cavc\ capacity for the communication problem under 
deterministic coding as $\Cdetcomm$ and that under randomized coding 
as $\Crancomm$. 

Communication over a \cavc\ is closely related to communication over 
an Arbitrarily Varying Channel (AVC). An AVC from $\cX$ to $\cY$ is given by a 
set of channels $\{W(\cdot|\cdot,s),s\in\cS\}$ parameterized by the state 
alphabet $\cS$. The \avcstate\ of the channel can change arbitrarily 
during the transmission. A \cavc\ is an AVC when $\cS_1 = \cS_2$. 
Csiszar and Narayan in \cite{avc-csiszar-narayan} defined the notion of a 
\textit{symmetrizable} AVC and showed that the deterministic coding capacity of an AVC, $C^{\mathsf{d}}_{\mathsf{AVC}}$, is positive if and only 
if the channel is not symmetrizable. 
An AVC is symmetrizable if there exists some channel $U:\cX \to \cS$ such that
$\forall x,x' \in \cX, y \in \cY$,
\begin{equation} \label{eq:cissym-csiszar}
\sum_{s} U(s|x')W(y|x,s) =  \sum_{s} U(s|x)W(y|x',s).
\end{equation}
\textbf{Cis-symmetrizability:} 
For a \cavc, symmetrizability can be defined 
under each \cs. For $k=1$ or $2$, we define a \cavc\ 
to be \textit{$\cS_k$-symmetrizable} if there exists a channel 
$U: \cX \to \cS_k$ such that \eqref{eq:cissym-csiszar} 
holds $\forall x,x' \in \cX, y \in \cY$ (see Figure~\ref{fig:cis}). 
If the \cavc\ is $\cS_k$-symmetrizable for $k=1$ or $k=2$ or both, 
then we say the \cavc\ is \textit{cis-symmetrizable}.

\begin{figure}
 \center
  \includegraphics[width=0.39\textwidth]{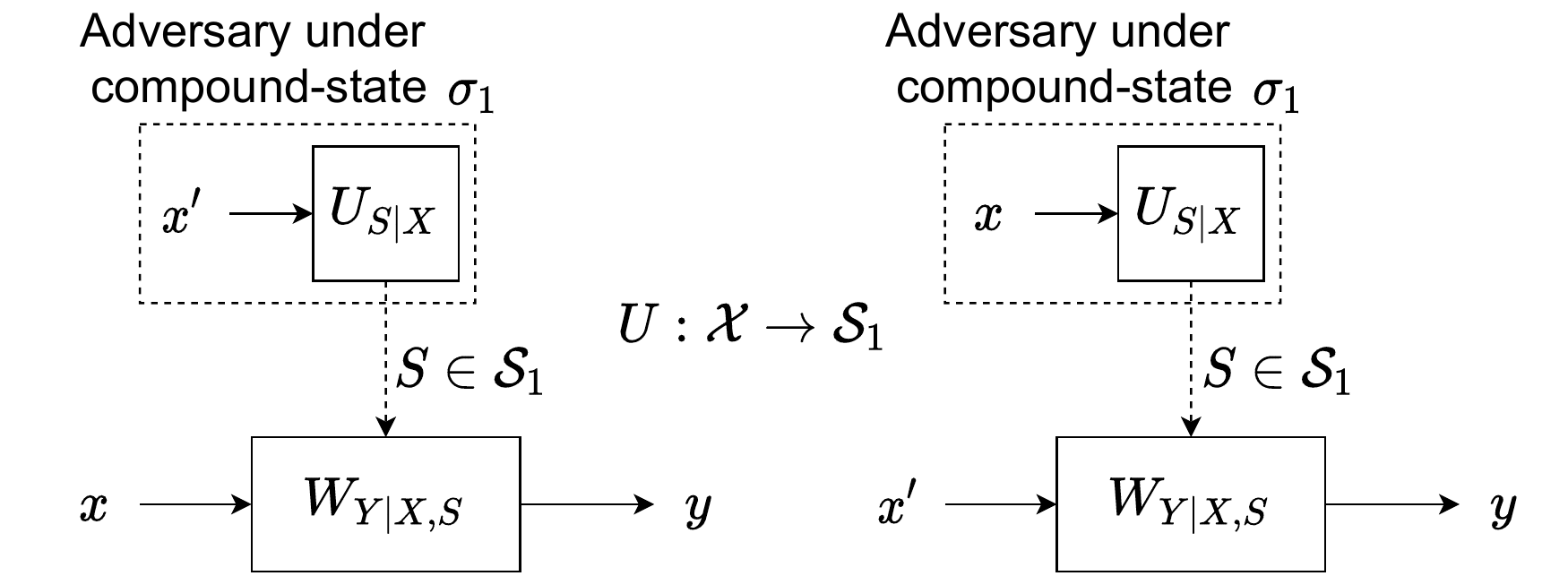}
  \caption{\textbf{$\cS_1$-symmetrizability:} 
  If there exists a channel $U:\cS_1 \to \cX$ such that the output distributions
  in the above two scenarios are the same for every pair of symbols $(x,x')\in \cX^2$
  then we call the channel $\cS_1$-symmetrizable.}
  \label{fig:cis}
\end{figure}

If the channel is $\cS_k$-symmetrizable and the \cs\ is $\sigma_k$, then for two 
distinct codewords $\vecx_m$, $\vecx_{m'}$ and $U$ satisfying \eqref{eq:cissym-csiszar}, 
the following two situations are indistinguishable :  
(i) the sender sends $\vecx_m$ and the adversary attacks 
when the \cs\ is $\sigma_k$ with \avcstate\ sequence from the output of the distribution 
$U^n(\cdot|\vecx_{m'})$ and (ii) the sender sends $\vecx_{m'}$ and the adversary 
attacks when the \cs\ is $\sigma_k$ with the output of the distribution $U^n(\cdot|\vecx_m)$. 
Thus, this argument is formalized in Section~\ref{sec:4} and it is possible to 
show that reliable decoding is not possible if a \cavc\ is cis-symmetrizable. \\
\textbf{Trans-symmetrizability:} The presence of two \cs s in a \cavc\ 
introduces another sufficient condition for $\Cdetcomm=0$ which we call 
trans-symmetrizability (see Figure~\ref{fig:trans}). 
Define a \cavc\ to be \textit{trans-symmetrizable} if 
there exists a pair of channels $U  : \cX \to \cS_1$, 
$V : \cX \to \cS_2$ such that $\forall x,x' \in \cX, y \in \cY$,
\begin{equation} \label{eq:transsym}
\sum_{s} U(s|x')W(y|x,s) =  \sum_{s} V(s|x)W(y|x',s).
\end{equation}
In a trans-symmetrizable \cavc\ with $U,V$ satisfying \eqref{eq:transsym} 
and $\vecx_m$, $\vecx_{m'}$ being distinct codewords, the following 
two situations are indistinguishable: (i) the sender sends codeword $\vecx_m$ 
and the adversary attacks when the \cs\ is $\sigma_1$ with the \avcstate\ sequence 
as the output of the distribution $U^n(\cdot|\vecx_{m'})$  and (ii) the sender 
sends codeword $\vecx_{m'}$ and the adversary attacks when the \cs\ is 
$\sigma_2$ with the state sequence as the output of the distribution 
$V^n(\cdot|\vecx_{m})$. Note that neither of cis- and trans-symmetrizability 
imply the other as demonstrated by the following two examples.

Consider a \cavc\ where $\cW_1$ with output alphabet $\cY_1$ and $\cW_2$ 
with output alphabet $\cY_2$ are symmetrizable AVCs satisfying 
$\cY_1 \cap \cY_2 = \emptyset$. Clearly, the \cavc\ is cis-symmetrizable 
but not trans-symmetrizable. Example~\ref{example:1} below presents a 
\cavc\ which is trans-symmetrizable, but not cis-symmetrizable.

\begin{exam}
\label{example:1}
Consider a \cavc\ with input alphabet $\cX$ and output alphabet $\cY$. Let $\cS_k = \cX \times \{k\}$. For $x \in \cX$ and $(x',k) \in \cS_k$, 
\begin{equation}
    y =   \begin{cases} (x,x') & \text{if } k=1, \\
    (x',x) & \text{if } k=2.
    \end{cases}
\end{equation}
This \cavc\ is clearly trans-symmetrizable using $U(s|x') = 1 $ 
if $s=(x',1)$ and $V(s|x) = 1$ if $s=(x,2)$. To show non-cis-symmetrizability, 
consider the case when the \cs\ is $\sigma_1$ and the input symbol is $x$. 
Since the channel reveals the input and the \avcstate\ completely when  
the \cs\ is $\sigma_k$, $k=1,2$, it cannot be cis-symmetrizable.
\end{exam}

We call a \cavc\ \textit{any-symmetrizable} if it is cis-symmetrizable or 
trans-symmetrizable (or both). Note that if a \cavc\ is any-symmetrizable then 
$\Cdetcomm = 0$. 
Further, for a \cavc\ with $\cW_k$ being the family of channels corresponding 
to \cs\ $\sigma_k$ , the capacity of the AVC with the family of channels 
$\cW = \cW_1 \cup \cW_2$ given by 
$$C^{\mathsf{d}}_{\mathsf{AVC}} = \max_{P_X} \min_{W \in \overline{\cW_1 \cup \cW_2}} I(X;Y)$$
is a simple lower bound on $\Cdetcomm$. Recall that $\overline{\cW_1 \cup \cW_2}$ 
refers to the convex closure of the family of channels $\cW_1 \cup \cW_2$. 
Using the compound nature of the channel, this bound can be improved. 
In particular, we show the following.

\begin{thm} \label{thm:com}
(i) The random coding capacity for communication over \cavc\ is given by
\begin{equation} \label{eq:ca}
    \Crancomm = \max_{P_X} \min_{W \in \wu} I(X;Y).
\end{equation}
(ii) The deterministic capacity $\Cdetcomm > 0 $ if and only if the \cavc\ is not any-symmetrizable. If $\Cdetcomm > 0$, then 
\begin{equation} \label{eq:cdcom}
     \Cdetcomm = \Crancomm.
\end{equation} 
\end{thm}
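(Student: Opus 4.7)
My plan for part (i) is a two-sided reduction to the compound DMC. Fix any $P_X$. For achievability, if $R < \min_{W \in \wu} I(X;Y)$ then $R < \min_{W \in \overline{\cW}_k} I(X;Y)$ for both $k\in\{1,2\}$. The standard AVC random coding theorem of Blackwell--Breiman--Thomasian (refined by Ahlswede and by \cite{avc-csiszar-narayan}) furnishes, for each $k$ separately, a $(2^{nR},n)$ random code---for instance uniform over constant-composition codebooks of type $P_X$ with an MMI decoder---whose error against the AVC $\cW_k$ vanishes. Because the same code and decoder serve both $k$, they succeed against the \cavc. For the converse, under compound state $\sigma_k$ the adversary may play an i.i.d. state drawn from any $q_k \in \Delta(\cS_k)$, yielding the DMC $W_{q_k}(y|x) = \sum_s q_k(s) W(y|x,s) \in \overline{\cW}_k$. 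Any \cavc-reliable code is therefore reliable on every channel in $\wu$, i.e., it is a reliable code for the compound DMC with channel set $\wu$, and Wolfowitz's compound channel theorem gives $R \leq \max_{P_X} \min_{W \in \wu} I(X;Y)$.

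For part (ii), the ``only if'' direction follows from the symmetrizability arguments already sketched in the excerpt: cis-symmetrizability licenses the Csiszar--Narayan spurious-message attack within one compound state, while trans-symmetrizability lets the adversary make a true message under $\sigma_1$ indistinguishable from a spurious message under $\sigma_2$; in either case the average error of any deterministic code is bounded away from $0$. For the ``if'' direction my plan is the two-step programme: (a) show that non-any-symmetrizability implies $\Cdetcomm > 0$, then (b) lift this to $\Cdetcomm \geq \Crancomm$ by Ahlswede's elimination technique---prepend a short preamble of vanishing rate that deterministically conveys a random seed to the decoder, and use the seed to realise the random code of part (i) on the main block. The converse $\Cdetcomm \leq \Crancomm$ is immediate.

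The main obstacle is step (a), constructing a positive-rate deterministic code, which I would do by adapting the constant-composition/MMI argument of \cite{avc-csiszar-narayan}. Pick $P_X$ with $\min_{W \in \wu} I(X;Y) > 0$ and a small rate $R$, and draw a random constant-composition codebook of type $P_X$. On observing $\vecy$, the decoder searches for pairs $(\hat m, \hat k)$ for which some $\vecs \in \cS_{\hat k}^n$ makes $(\vecx_{\hat m}, \vecs, \vecy)$ jointly typical with a channel in $\cW_{\hat k}$, and outputs $\hat m$ only if this is unique across all $(\hat m, \hat k)$. The error events reduce to pairs $(m,k) \neq (m', k')$ being simultaneously consistent with $\vecy$, and method-of-types bounds control their probability in terms of joint-type inequalities on $P_{\vecx_m \vecx_{m'} \vecy}$. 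Three obstructions can violate these inequalities: $\cS_1$-symmetrizability (when $k=k'=1$), $\cS_2$-symmetrizability ($k=k'=2$), and trans-symmetrizability ($k \neq k'$); when all three fail, the inequalities hold, and expurgation yields the sought positive-rate deterministic code. The technical heart is thus a joint-type lemma that extends the Csiszar--Narayan one to allow state sequences from either compound-state family.
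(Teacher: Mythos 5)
Your proposal is correct in substance and, for part (i), coincides with the paper's proof: the paper's achievability (Lemma~\ref{lem:crcom-ach}) is precisely the observation that one random constant-composition code with a $k$-independent threshold mutual-information decoder handles every $\vecs\in\cS_1^n\cup\cS_2^n$ simultaneously, and its converse (Lemma~\ref{lem:rcom-conv}) is the same i.i.d.-state reduction to the compound DMC over $\wu$ followed by Fano. The genuine divergence is in part (ii). You propose the two-step route ``positive rate, then Ahlswede elimination''; the paper instead runs the extended Csisz\'ar--Narayan machinery at the \emph{full} rate: Lemma~\ref{lem:nice-code} (their extension of Csisz\'ar--Narayan's Lemma~3 to state sequences in $\cS_1^n\cup\cS_2^n$, which is exactly the joint-type lemma you anticipate) together with a decoder that accepts $i$ when some $\vecs$ gives $P_{\vecx_i,\vecs,\vecy}\in\cC_{\eta}$ and every consistent competitor $\vecx_j$ satisfies $I(XY;X'|S)\le\eta$, with Lemmas~\ref{lem:disamb-id} and~\ref{lem:disamb-comm} ruling out ambiguity under exactly the three obstructions you list. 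The paper does use your elimination route, but only for Theorem~\ref{thm:auth}(ii); for $\Cdetcomm$ your version is more modular (it reuses part (i) as a black box plus random-code reduction), while the paper's gives a self-contained deterministic construction. Two cautions on your step (a): your decoder must demand uniqueness of the \emph{message} only, not of the pair $(\hat m,\hat k)$, since non-any-symmetrizability is compatible with $\ww\neq\emptyset$ (e.g.\ $\cS_1=\cS_2$ with a non-symmetrizable AVC), in which case both $(\hat m,1)$ and $(\hat m,2)$ are typically consistent and a pairwise-uniqueness rule would never decode; and bare unique joint typicality is insufficient even for a single AVC, so the secondary condition $I(XY;X'|S)\le\eta$ on competitors must appear explicitly in the decoding rule rather than only in the analysis.
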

\vspace{-10pt}

Refer to Section~\ref{sec:proofs} for proof sketches of Theorem~\ref{thm:com}.

\begin{figure}

 \center

  \includegraphics[width=0.4\textwidth]{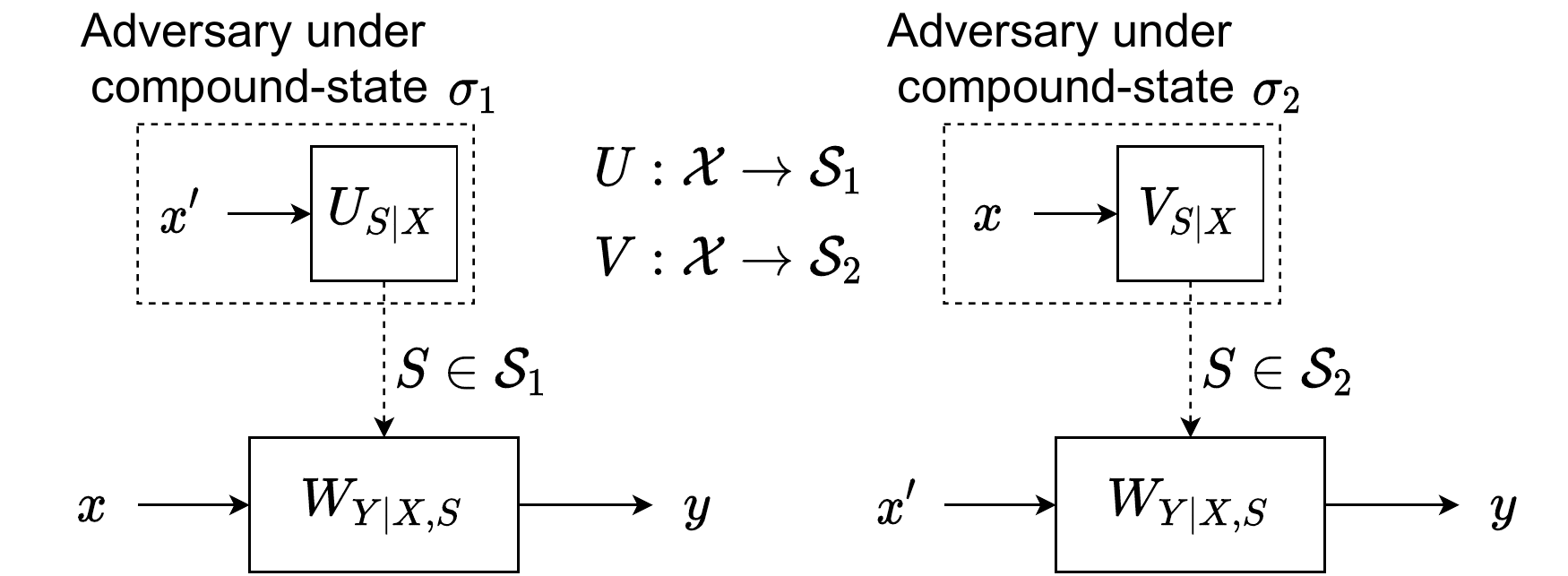}

  \caption{\textbf{Trans-symmetrizability:} 
  If there exists a pair of channels $U : \cX \to \cS_1$, $V : \cX \to \cS_2$ such that 
  the output distributions in the above two scenarios are the same for every 
  pair of symbols $(x,x')\in \cX^2$ then we call the channel trans-symmetrizable.}
  \label{fig:trans}

\end{figure}

\subsection{Joint Communication and \Cs\ Identification over \cavc} \label{subs:both}

Let the deterministic capacity of the \cavc\ for the joint communication
and \cs\ identification be denoted by $\Cdetboth$ and let the random code
capacity be denoted by $\Cranboth$. Note that $\Cdetboth \leq \Cdetcomm$ as an
additional
constraint has been imposed in this problem. From Theorem~\ref{thm:com}, it is
clear that non-any-symmetrizability is required for joint communication and
\cs\ identification. Further, if $\ww \neq \emptyset$, then it is
possible for the adversary to emulate the channels in $\ww$ for either \cs. So it is not
possible to identify the \cs\ in such situations - this is true even under
random coding. Thus, $\ww = \emptyset$ is a necessary condition for joint
communication and \cs\ identification.

Any-symmetrizability and non-emptiness of $\ww$ are not implied by each other. 
This can be seen by the example satisfying $\cY_1 \cap \cY_2 = \emptyset$ in 
Section~\ref{subs:comm} and  the following example. 
Consider any non-symmetrizable AVC with state symbols in the set $\cS$. 
The \cavc\ with $\cS_1 = \cS_2 = \cS$ is not any-symmetrizable, but 
has $\ww \neq \emptyset$.

\begin{thm} \label{thm:both}
(i) The random coding capacity for joint communication and \cs\ identification over 
\cavc\  $\Cranboth=0$ if $\ww \neq \emptyset$. If $\ww = \emptyset$, then
\begin{equation} \label{eq:crboth}
    \Cranboth = \Crancomm.
\end{equation}
(ii) The deterministic capacity for joint communication and \cs\ identification 
$\Cdetboth > 0 $ if and only if the \cavc\ is not any-symmetrizable and $\ww = \emptyset$. 
If $\Cdetboth > 0$, then 
\begin{equation} \label{eq:cdboth}
     \Cdetboth = \Crancomm.
\end{equation}
\end{thm}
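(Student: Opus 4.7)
The converses are straightforward. For part (i), if $\ww \neq \emptyset$, fix $W^* \in \wa \cap \wb$ with representations $W^* = \sum_{s \in \cS_k}\lambda^{(k)}_s W(\cdot|\cdot,s)$ for $k=1,2$; the adversary draws $\vecs$ i.i.d.\ from $\lambda^{(k)}$ under CS $\sigma_k$, inducing the identical conditional law $(W^*)^n(\cdot|\vecx)$ under either CS. The decoder's CS estimate is thus statistically independent of the true CS, so the identification error is at least $1/2$, forcing $\Cranboth = 0$. When $\ww = \emptyset$, the bound $\Cranboth \le \Crancomm$ is trivial since any joint code is also a communication code. The deterministic converse in (ii) combines this random-coding impossibility (which subsumes the deterministic case) with Theorem~\ref{thm:com} to yield both $\ww = \emptyset$ and non-any-symmetrizability as necessary for $\Cdetboth > 0$.

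For achievability in both parts, I would use a two-stage decoder built atop the communication code from Theorem~\ref{thm:com}. First, decode $\hat m$ at rate arbitrarily close to $\Crancomm$ with vanishing error uniformly over $\vecs \in \cS_1^n \cup \cS_2^n$ (random for (i), deterministic for (ii), the latter enabled by non-any-symmetrizability). Second, identify the CS via the empirical joint type $P_{f(\hat m),\vecy}$. Because $\wa$ and $\wb$ are compact convex subsets of the channel simplex and disjoint (by $\ww = \emptyset$), they sit at positive Euclidean distance, so for any input distribution $P_X$ the sets $\{P_X \cdot W : W \in \wa\}$ and $\{P_X \cdot W : W \in \wb\}$ are separated by some gap $3\delta > 0$. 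Declare $\hat\sigma = \sigma_k$ when $P_{f(\hat m),\vecy}$ lies within Euclidean distance $\delta$ of $\{P_{f(\hat m)} \cdot W : W \in \wk\}$. Correctness reduces, via the method of types, to showing that for any $\vecs \in \cS_k^n$ the joint $P_{f(\hat m),\vecy}$ concentrates near $P_{f(\hat m)} \cdot \wk$ with high probability.

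The main obstacle is ensuring that the effective adversarial channel lies inside the joint convex closure $\wk$, rather than merely having rows in the convex hull of $\{W(\cdot|x,s) : s \in \cS_k\}$ separately for each $x$. This requires the empirical joint of $\vecs$ with the transmitted codeword to be approximately product, so that the effective channel $W^* = \sum_s P_{\vecs}(s) W(\cdot|\cdot,s)$ uses mixing weights independent of $x$ and hence belongs to $\wk$. For random coding this is automatic: the adversary is ignorant of the codeword realization, so an i.i.d.\ codebook and any fixed $\vecs$ yield $P_{\vecx_m,\vecs} \approx P_X \cdot P_{\vecs}$ with high probability. For the deterministic case, this product structure must be built into the codebook, which is precisely the feature exploited by the Csiszar--Narayan-style derandomization underlying Theorem~\ref{thm:com}. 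Threading the derandomized code through the identification rule, with a union bound over the polynomially many joint types of $\vecs$, yields exponentially small identification error and thus the capacity $\Crancomm$ in both (i) and (ii).
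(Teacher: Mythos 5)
Your converse is essentially the paper's own argument (Lemma~\ref{lem:ad-id} together with Lemma~\ref{lem:rcom-conv}): the adversary simulates a common channel $W^*\in\ww$ i.i.d.\ under either \cs, making the identification error at least $1/2$ on average, and the deterministic necessity of non-any-symmetrizability is inherited from Theorem~\ref{thm:com}. Your achievability, however, takes a genuinely different route from the paper. The paper does not identify the \cs\ from the joint type of the \emph{communication} codeword with the output; for part (i) it appends a fixed pilot block $\tilde\vecx$ of length $|\cX|\log n$ containing $\log n$ copies of \emph{every} input symbol, randomly interleaved with the codeword via a shared permutation, and identifies the \cs\ from the effective channel estimated on that block (Lemmas~\ref{lb-1} and~\ref{lb-2}); for part (ii) it uses a single-stage Csisz\'ar--Narayan typicality decoder that outputs $(i,\sigma_k)$ whenever some $\vecs\in\cS_k^n$ explains $\vecy$, with $\ww=\emptyset$ guaranteeing via a Pinsker argument that the same codeword cannot be explained by state sequences from both $\cS_1^n$ and $\cS_2^n$.

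The reason the paper insists on a full-alphabet pilot exposes the one genuine gap in your argument: your claimed $3\delta$ separation between $\{P_X\cdot W: W\in\wa\}$ and $\{P_X\cdot W: W\in\wb\}$ is false for a general capacity-achieving $P_X$. If $P_X(x)=0$ for some $x$, two channels $Z\in\wa$ and $V\in\wb$ may agree on $\mathrm{supp}(P_X)$ and differ only off-support, so $P_X\cdot Z=P_X\cdot V$ and the joint type of codeword and output carries no information about the \cs; disjointness of $\wa$ and $\wb$ yields separation of the induced input--output laws only when $\min_x P_X(x)>0$, which is exactly the hypothesis of the paper's Lemma~\ref{lb-2}. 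Your scheme can be repaired by perturbing $P_X$ to a full-support distribution at arbitrarily small rate loss (by continuity of $\min_{W}I(X;Y)$ in $P_X$), but as written the identification step is not justified. The remaining ingredients --- approximate product structure of $P_{\vecx_m,\vecs}$ for random codes because the adversary is oblivious to the codeword realization, and property \eqref{eq:nice-code-2} of the derandomized codebook supplying that structure for all but an exponentially small fraction of codewords in the deterministic case --- are sound and consistent with the machinery the paper itself uses.
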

\vspace{-10pt}

\subsection{Communication or \Cs\ Identification over \cavc} \label{subs:auth}
Let the deterministic code capacity for the \cavc\ for the `communication or
\cs\ identification' problem be denoted by $\Cdetauth$ and the random code capacity
as $\Cranauth$. Observe that $\Cdetboth \leq \Cdetcomm \leq \Cdetauth$.  Since
the decoder needs to either communicate or identify the \cs , this is not
possible if the \cavc\ is trans-symmetrizable as trans-symmetrizability hinders
both the tasks of \cs\ identification and communication.
In Theorem~\ref{thm:auth}, we claim that non-trans-symmetrizability is necessary and sufficient for positive capacity 
- a significantly more relaxed condition as compared to non-any-symmetrizability.

\begin{remark} \label{rem:KK}
    If $\wb \subseteq \wa$, then the decoder cannot identify the \cs\ $\sigma_2$
    reliably, and therefore, the decoder must recover the message in this case.
    The model in \cite{kosut-kliewer}
    considers an AVC (with state alphabet $\cS$) with a special
    no-adversary state $s_0 \in \cS$.  The decoder must decode the message
    correctly w.h.p. when the \avcstate\ sequence is $s_0^n = (s_0,\ldots,s_0)$. 
    For any other \avcstate\ sequence $\vecs \neq
    s_0^n$, the decoder may declare adversarial interference.
    This is a special case of our model with $\cS_2=\{s_0\}\subseteq \cS_1$.
    %
\end{remark}    
For either \cs, consider the case when the adversary samples the \avcstate\ symbols independently and identically distributed (i.i.d.) according to $P_S$ such 
that $\sum_{s}P_S(s)W_{Y|X,S=s} \in \ww$. Here, the decoder cannot identify the \cs\ reliably, therefore the decoder
must recover the message.
Thus, for any channel $W \in \ww$, the capacity of $W$ is an upper bound on 
$\Cdetauth$, i.e., $\Cdetauth \leq \max_{P_X} \min_{W \in \ww} I(X;Y)$. 
It is also possible to show that this upper bound is achievable when the \cavc\ is not trans-symmetrizable as described in Section~\ref{sec:proofs}.

\begin{thm} \label{thm:auth}
(i) The random coding capacity for `communication or \cs\ identification' 
over \cavc\ is given by
\begin{equation} \label{eq:cb}
    \Cranauth = \max_{P_X} \min_{W \in \ww} I(X;Y).
\end{equation}
In particular, if $\ww = \emptyset$, then $ \Cranauth = \infty$.\\
(ii) The deterministic capacity $\Cdetauth > 0 $ if and only if the \cavc\ 
is not trans-symmetrizable. If $\Cdetauth > 0$, then 
\begin{equation} \label{eq:crauth}
        \Cdetauth = \Cranauth.
\end{equation} 
\end{thm}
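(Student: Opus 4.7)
I will prove Theorem~\ref{thm:auth} in four steps: random-coding converse, random-coding achievability, deterministic converse (trans-symmetrizability forces zero rate), and deterministic achievability assuming non-trans-symmetrizability.

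\emph{Random-coding bound (part (i)).} For the converse, fix $W\in\ww$ and write $W=\sum_{s\in\cS_k}P_S^{(k)}(s)W(\cdot|\cdot,s)$ for $k=1,2$; this is possible because $W\in\bar{\cW}_1\cap\bar{\cW}_2$. For any random code, have the adversary draw $\vecs$ i.i.d.\ from $P_S^{(k)}$ under CS $\sigma_k$. Both attacks induce the same output law $W^n$ on $\vecy$, so the decoder's output probabilities $p_m,p_{\sigma_1},p_{\sigma_2}$ do not depend on $k$. The two acceptability constraints $p_m+p_{\sigma_1}\ge 1-\epsilon$ and $p_m+p_{\sigma_2}\ge 1-\epsilon$, together with $p_m+p_{\sigma_1}+p_{\sigma_2}\le 1$, force $p_m\ge 1-2\epsilon$. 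A standard Fano bound on the DMC $W$ then gives $R\le\max_{P_X}I(X;Y)+o(1)$, and taking the infimum over $W\in\ww$ yields~\eqref{eq:cb} (when $\ww=\emptyset$ the converse is vacuous, consistent with $\Cranauth=\infty$). For achievability, I would use an i.i.d.\ code at a maximizer $P_X^*$ with decoder: output $m$ if $(\vecx_m,\vecy)$ is jointly typical with $P_X^*W$ for a unique $m$ and some $W\in\ww$; else output $\sigma_k$ if the empirical behavior is consistent with $\bar{\cW}_k$ only. The analysis parallels the random-coding part of Theorem~\ref{thm:com}(i), with the adversary's effective family cut down from $\wu$ to $\ww$.

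\emph{Deterministic converse.} The bound $\Cdetauth\le\Cranauth$ is immediate. For the positivity claim, assume trans-symmetrizability with $U,V$ satisfying~\eqref{eq:transsym}. For any pair of distinct codewords $\vecx_m,\vecx_{m'}$, the two scenarios (encoder $\vecx_m$, CS $\sigma_1$, $\vecs\sim U^n(\cdot|\vecx_{m'})$) and (encoder $\vecx_{m'}$, CS $\sigma_2$, $\vecs\sim V^n(\cdot|\vecx_m)$) induce identical output distributions, and their acceptable decoder-output sets $\{m,\sigma_1\}$ and $\{m',\sigma_2\}$ are disjoint. Randomizing the choice of the spurious message $m'$ (as in the Csisz\'ar--Narayan symmetrizability converse) then pushes the average error to at least $1/2-o(1)$.

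\emph{Deterministic achievability (the main step).} Following the Csisz\'ar--Narayan template, I would take a codebook of $M=2^{nR}$ codewords of a common type close to $P_X^*$ satisfying standard joint-type spreading (produced by random selection and expurgation). Set
\begin{equation*}
\cA_k(\vecy)=\{m:\exists\,\vecs\in\cS_k^n\text{ such that }(\vecx_m,\vecs,\vecy)\text{ is jointly typical under the CAVC at CS }\sigma_k\}.
\end{equation*}
The decoder outputs the unique element of $\cA_1(\vecy)\cup\cA_2(\vecy)$ when that set is a singleton; else outputs $\sigma_k$ if $\cA_{3-k}(\vecy)=\emptyset\neq\cA_k(\vecy)$; else declares error. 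Crucially, cis-symmetrizability can inflate $\cA_k(\vecy)$ with spurious messages, but as long as $\cA_{3-k}(\vecy)$ remains empty the second rule correctly outputs $\sigma_k$; this is exactly why the threshold relaxes from non-any-sym.\ to non-trans-sym. The hard part is bounding the probability, under the true CS $\sigma_{k^*}$, that $\cA_{3-k^*}(\vecy)$ becomes non-empty with some wrong message $m'\ne m^{\mathrm{sent}}$: such an $m'$ together with a witness $\vecs'\in\cS_{3-k^*}^n$ would yield empirical conditional types on the pair $(\vecx_{m^{\mathrm{sent}}},\vecx_{m'})$ that approximately satisfy~\eqref{eq:transsym}. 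A compactness/continuity argument on the non-trans-symmetrizable CAVC supplies a uniform gap $\gamma>0$, which standard method-of-types inequalities convert into an exponentially small bound on the density of such witnesses. This cross-CS combinatorial step is where I expect the principal technical effort to lie, the remainder being routine adaptations of the AVC argument.
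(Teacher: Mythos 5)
Your converse arguments (both the random-coding one and the trans-symmetrizability zero-rate argument) are sound and essentially the paper's; note only that fixing a single $W\in\ww$ i.i.d.\ gives $\min_{W}\max_{P_X}I(X;Y)$, so you need the minimax/saddle-point step (valid since $\ww$ is convex and compact) or, as the paper does, reduce to the AVC over the family $\ww$. The two achievability steps, however, each have a genuine gap. For part (i), your identification branch is circular: when the message list is empty or ambiguous the decoder has no known input with which to compute ``the empirical behavior,'' and the test ``consistent with $\wk$ only'' fails outright if, say, $\overline{\cW}_{3-k}$ contains a near-useless channel $W_0\notin\ww$ whose output is almost independent of the input --- then every codeword is jointly typical with $W_0$ for essentially every $\vecy$, so consistency with $\overline{\cW}_{3-k}$ can never be excluded, and under \cs\ $\sigma_k$ with an effective channel in $\wk\setminus\ww$ your decoder can neither decode nor identify. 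The paper avoids this by appending a fixed, publicly known training block $\tilde\vecx$ ($\log n$ repetitions of each input symbol), randomly interleaved with the data block via a shared permutation so the adversary cannot attack the two parts with different state statistics; the decoder then estimates the effective channel from the known pilot and applies Lemma~\ref{lb-2}. Some such message-independent channel-estimation mechanism is needed and is absent from your sketch.

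For the deterministic achievability in part (ii), your plan is to run the $\cA_1,\cA_2$ (equivalently $B_1,B_2$) decoder directly at rates up to $\max_{P_X}\min_{W\in\ww}I(X;Y)$. This does not work: when the true \cs\ is $\sigma_{k^*}$ and the adversary induces a channel in $\overline{\cW}_{k^*}\setminus\ww$, a wrong codeword $m'$ can enter $\cA_{3-k^*}$ via a witness channel $V'\in\overline{\cW}_{3-k^*}$ of low mutual information; the packing bounds only control the number of such intruders when $R<I(X';\cdot)$ for the relevant joint types, and the resulting error exponent (the paper's \eqref{eq:tough}) is $\eta-R-3\epsilon$, which is positive only for rates $R$ below the small disambiguation parameter $\eta$. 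So the direct decoder establishes only $\Cdetauth>0$, not capacity. The paper closes the gap with Ahlswede's elimination technique: a random code concentrated on $n^2$ encoder--decoder pairs achieves $\Cranauth$, and the $O(\log n)$ bits indexing the pair are sent with the small-but-positive-rate deterministic code from the first stage. Your proposal is missing this second stage, and without it the claimed equality $\Cdetauth=\Cranauth$ is not reached.
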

\vspace{-10pt}

If the \cs\ can be identified, then the message need not be decoded. So 
the capacity is infinite for such a \cavc. 
Thus, Theorem~\ref{thm:auth} implies that \cs\ can be identified  (i) under 
random coding if and only if $\ww = \emptyset$, and (ii) under deterministic 
coding if and only if the \cavc\ is not trans-symmetrizable and $\ww = \emptyset$.

\begin{corollary}
For a \cavc\ under deterministic coding, the \cs\ can be 
identified with arbitrarily small probability of error 
for sufficiently large block lengths if and only if the 
\cavc\ is not trans-symmetrizable and $\ww = \emptyset$.
\end{corollary}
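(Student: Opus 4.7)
The plan is to derive the corollary directly from Theorem~\ref{thm:auth} by establishing that the ability to reliably identify the \cs\ is equivalent to $\Cdetauth = \infty$.

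For the only-if direction, I would first argue that any reliable \cs\ identification scheme yields a communication-or-identification code of arbitrary rate: pair any encoder on any message set $\cM$ with a decoder that outputs the identified \cs\ label $\sigma_k$; since $\sigma_k \in \{m, \sigma_k\}$, the success criterion is satisfied and the error vanishes. As $|\cM|$ is arbitrary, this forces $\Cdetauth = \infty$. Applying Theorem~\ref{thm:auth}(ii), $\Cdetauth = \infty$ requires the \cavc\ to be non-trans-symmetrizable (else $\Cdetauth = 0$) and $\ww = \emptyset$ (else the capacity formula $\max_{P_X}\min_{W \in \ww} I(X;Y)$ is finite, upper-bounded by $\log|\cY|$).

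For the if direction, Theorem~\ref{thm:auth} gives $\Cdetauth = \infty$ under the assumed conditions (using the convention that the minimum over an empty set is $+\infty$). To convert this into a standalone \cs\ identification scheme, I would lift the achievability construction underlying Theorem~\ref{thm:auth}: the encoder uses a codebook for which a joint-typicality decoder first recovers a message estimate $\hat m$ via AVC-style arguments (enabled by non-trans-symmetrizability), and then performs a secondary hypothesis test on the empirical conditional type $P_{\vecy|\vecx_{\hat m}}$ that separates the disjoint compact convex sets $\wa$ and $\wb$ by a strict hyperplane (guaranteed by $\ww = \emptyset$), outputting the \cs\ label instead of the message.

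The main obstacle will be making the if direction rigorous: under an adversarial $\vecs \in \cS_k^n$, the empirical conditional distribution $P_{\vecy|\vecx_{\hat m}}$ need not lie in $\wk$ a priori, so one must leverage the typicality structure built into the Theorem~\ref{thm:auth} achievability (which controls the joint type of $(\vecx_{\hat m}, \vecs)$) together with the strict separation of $\wa$ and $\wb$ to show the secondary test succeeds with vanishing error. This is precisely where the non-trans-symmetrizability condition, beyond the mere disjointness $\ww = \emptyset$ (which suffices for random coding), plays its role.
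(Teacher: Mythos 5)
Your ``only if'' direction is sound and matches the paper's own reasoning: a reliable \cs-identifier turns any encoder into a valid `communication or identification' code of arbitrary rate, so $\Cdetauth=\infty$, and Theorem~\ref{thm:auth}(ii) then forces non-trans-symmetrizability and $\ww=\emptyset$ (the latter because $\max_{P_X}\min_{W\in\ww}I(X;Y)\le\log|\cY|$ when $\ww\neq\emptyset$). The problem is in your ``if'' direction.

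The gap is that your identification scheme begins by ``recovering a message estimate $\hat m$ via AVC-style arguments (enabled by non-trans-symmetrizability)'' and then tests $P_{\vecy|\vecx_{\hat m}}$. Non-trans-symmetrizability does \emph{not} enable reliable message decoding: the corollary's hypotheses permit the \cavc\ to be cis-symmetrizable (e.g., the paper's example of two symmetrizable AVCs with disjoint output alphabets is cis-symmetrizable, non-trans-symmetrizable, and has $\ww=\emptyset$), in which case $\Cdetcomm=0$ and no decoder recovers $\hat m$ reliably, so the empirical conditional you propose to test is anchored to the wrong codeword. Relatedly, $\Cdetauth=\infty$ does not by itself yield an identifier: an `or'-code of huge rate is only guaranteed to output something in $\{m,\sigma_k\}$ on average over $m$ for each $\vecs$, and extracting a single input sequence on which the decoder outputs $\sigma_k$ uniformly over all adversarial $\vecs$ requires an argument you have not supplied. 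The paper's route is different: identification never goes through message decoding. Under random coding, a fixed (permuted) training sequence is sent and the decoder estimates the effective channel and checks membership in $\wa$ versus $\wb$, which needs only $\ww=\emptyset$ (Lemmas~\ref{lb-1}--\ref{lem:ad-id-ach}). Non-trans-symmetrizability enters only to derandomize: it guarantees a positive-rate deterministic `or'-code (Lemma~\ref{lem:ach-auth}) with which the $O(\log n)$ bits of common randomness needed for the elimination-of-randomness step can be shared --- and if that short transmission instead results in the decoder identifying the \cs, the task is already complete. Your proof needs to be restructured along these lines; as written, the first stage of your decoder fails on admissible channels.
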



Note that for a non-trans-symmetrizable, but cis-symmetrizable \cavc\ with $\ww \neq \emptyset$, 
it is impossible to just communicate and it is impossible to identify the \cs\ separately; 
cis-symmetrizability hinders communication while $\ww \neq \emptyset$ hinders \cs\ identification. 
However, such channels would have a positive capacity according to Theorem~\ref{thm:auth} for the problem of 
`communication or \cs\ identification'.

\section{Proof Sketches} \label{sec:proofs}
We give a brief proof outline for the theorems. 
The full proofs can be found in Section~\ref{sec:4}. Let $\cP_k$ denote 
the set of all distributions over $\cS_k$, $k=1,2$.

\subsection{Proof Sketch for Theorem~\ref{thm:com} (i)} 

Both the achievability and converse parts of the proof follow along
similar lines as that for standard AVCs. The achievability argument uses
a randomly generated (and shared with the decoder) codebook where all code
 symbols are generated i.i.d. $\sim P_X$, a maximizing distribution of \eqref{eq:ca}.


\subsection{Proof Sketch for Theorem~\ref{thm:both} (i)}

If $\ww \neq \emptyset$, then the adversary under either \cs\  can induce any
effective channel in $\ww$ using a suitable state distribution.
Thus the \cs\  cannot be identified reliably in this case.
The converse for the case $\ww = \emptyset$ follows from the converse of Theorem~\ref{thm:com}~(i). We now
outline the achievability argument under $\ww \neq \emptyset$.

For achievability, the encoder constructs
a vector with two parts $\vecx = (\hat \vecx, \tilde \vecx)$.
The first part is used for communication and the second part is used
for \cs\ identification. The vector $\vecx$ is randomly permuted before
transmission so that the adversary cannot apply different types of attack
on the two parts. The permutation is shared with the decoder, so that it
can recover $\vecx$. The encoding of the message in $\hat\vecx$ and its
decoding is similar to that in the proof of Theorem~\ref{thm:com}~(i). The
second part $\tilde \vecx$ is a fixed $|\cX|\log(n)$ length sequence
consisting of $\log(n)$ repetitions of each symbol in $\cX$.
The decoder estimates the effective channel law from this part and
identifies the \cs\ based on whether it is in \wa or in \wb.
The condition $\ww = \varnothing$ ensures that it is not in both \wa
and \wb.

\subsection{Proof Sketches for Theorem~\ref{thm:auth} (i)}


For the converse proof, we first note that  since the adversary under either
\cs\ can induce a channel from $\ww$, the \cs\ cannot be
identified if the induced channel is in $\ww$. So the decoder must
decode the message reliably in such situation. However, by standard
arguments, the decoder cannot decode reliably if the rate is more than
$\Cranauth$.

We now discuss the achievability argument. The same coding scheme is used
as in Theorem~\ref{thm:both} (i) using a distribution $P_X$ that
maximizes \eqref{eq:cb}. If the effective channel induced (in
both $\tilde \vecx$ and $\hat \vecx $) 
by the adversary is in $\ww$, then
the reliability in decoding follows using standard arguments since
the rate is less than $\min_{W \in \ww} I(X;Y)$. On the other hand,
if the effective channel is outside $\ww$, then the \cs\ can be identified,
as discussed in the proof of Theorem~\ref{thm:both} (i).

\subsection{Proof Sketches for Theorem~\ref{thm:com}~(ii), Theorem~\ref{thm:both}~(ii), 
Theorem~\ref{thm:auth}~(ii)}

It can be shown that $\Crancomm>0$ (resp. $\Cdetauth > 0$) 
when the channel is not any-symmetrizable (resp. trans-symmetrizable). 
 
The achievability proof for deterministic coding follows  along similar lines of 
argument as in \cite{avc-csiszar-narayan}.
A suitable codebook with codewords $\vecx_1,\ldots,\vecx_M$ of type $P_X$ 
can be obtained using an extension of \cite[Lemma 3]{avc-csiszar-narayan}
for all the three theorems with appropriate $P_X$. We only describe the
decoders below, and refer the reader to Section~\ref{sec:4} for the detailed analysis. 
The decoder for the task of joint `communication and \cs\ identification' 
(Theorem~\ref{thm:both}~(ii)) is as described below.
Let 
\begin{equation}
    \mathcal{C}_{\eta} = \{  P_{XSY}: D(P_{XSY} || P_X \times P_S \times W) \leq \eta , \ P_S \in \cP_1 \cup \cP_2 \}.
\end{equation}
\begin{dec*} \label{dec:both-d}
Given codewords $\vecx_j$, $j=1,\ldots,M$, set $\pand(\vecy) = (i,\sigma_k)$, $i \in \cM, k\in\{1,2\}$, iff an $\vecs \in \cS_k^n$ exists such that:
\begin{enumerate}
    \item the joint type $P_{\vecx_i,\vecs,\vecy} \in \mathcal{C}_{\eta}$
    \item for each  $\vecx_j, j \neq i$ such that there exists $\vecs' \in \cS_{1}^n \cup \cS_2^n$, 
    $P_{\vecx_j,\vecs',\vecy} \in \mathcal{C}_{\eta}$, we have $I(XY;X'|S) \leq \eta$ where $P_{XX'SY} = P_{\vecx_i,\vecx_j,\vecs,\vecy}$.
\end{enumerate}
Set $\pand(\vecy)=(1,a_1)$ if no such $(i,a_k)$ exists.
\end{dec*}
The condition $\ww = \emptyset$ ensures that if there exists $\vecs \in \cS_1^n,\ P_{\vecx_m,\vecs,\vecy} \in \cC_{\eta}$ then 
$\forall \vecs' \in \cS_2^n,\ P_{\vecx_m,\vecs',\vecy} \notin \cC_{\eta}$.
For two distinct codewords $\vecx_i,\vecx_j$, and their corresponding $\vecs_i,\vecs_j$ respectively, 
(i) non-cis-symmetrizability ensures that they do not simultaneously satisfy both the 
decoder conditions when both $\vecs_i,\vecs_j \in \cS_k^n$ for some $k\in\{1,2\}$, 
(ii) non-trans-symmetrizability ensures they do not simultaneously satisfy both the decoder 
conditions when $\vecs_i \in \cS_k^n,\ \vecs_j \in \cS_{3-k}^n$ for some $k\in\{1,2\}$ (see Section~\ref{sec:4}). 

For Theorem~\ref{thm:com}, we can use a decoder similar to the above and disregard the decoder output corresponding to the \cs\ identity. 
For Theorem~\ref{thm:auth}~(ii), we show the achievability of  a non-zero rate, 
and then use the randomness reduction technique~\cite{csiszar-korner-book} 
to achieve the capacity. The following decoder is used to show positive capacity.
\begin{dec*} \label{dec:auth-d}
Given codewords $\vecx_j$, $j=1,\ldots,M$, let $B_k$ ($k=1,2$) be the set of messages $m \in \cM$ such that 
\begin{enumerate}
    \item the joint type $P_{\vecx_m,\vecs,\vecy} \in \mathcal{C}_{\eta}$
    \item for every $m'\neq m$ such that there exists  
    $\vecs' \in \cS_{3-k}^n$, $P_{\vecx_{m'},\vecs',\vecy} \in \mathcal{C}_{\eta}$, 
    we have $I(XY;X'|S) \leq \eta$ where $P_{XX'SY} = P_{\vecx_m,\vecx_{m'},\vecs,\vecy}$.
\end{enumerate}
If $B_1 = B_2 = \{m\}$, then $\por(\vecy)=m$. If for some $k\in\{1,2\}$,
$B_k=\emptyset \neq B_{3-k}$, then 
the decoder outputs the \cs\ $\por(\vecy)=\sigma_{3-k}$.
\end{dec*}
Non-trans-symmetrizability ensures that the two cases for $B_k$ 
described in the decoder are the only cases which can occur (see Section~\ref{sec:4}).

The rate-converses follow from the converse for the randomized coding capacity.
The zero-rate converse ideas have been discussed in Section~\ref{sec:results} and
are elaborated in Section~\ref{sec:4}.

\section{Complete Proofs} \label{sec:4}

We use the notation $W_P$ to refer to the channel $W_P:\cX \to \cY$
given by $\sum_s P(s)W_{Y|X,S=s}$.
The $\epsilon$-typical set of a random variable
be denoted by $\tau_X^{\epsilon} =
\{\vecx:|P_{\vecx}(x)-P_X(x)|\leq \epsilon\  \forall x \in \cX\}$. In
particular, $\tau_X$ denotes the typical set when $\epsilon=0$.
Let $\pn{L}{n}$ denote the set all emirical distributions of length $n$ 
over the set $L$.

\subsection{Converse Proofs Under Random Coding} \label{subs:ran-conv}
\renewcommand\qedsymbol{$\blacksquare$}
\begin{lemma} \label{lem:rcom-conv}
$$ \Crancomm \leq \max_{P_X} \min_{W \in \wu} I(X;Y) $$
\end{lemma}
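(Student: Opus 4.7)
My plan is to combine the standard DMC Fano inequality with the observation that the random code's average input distribution is a single object serving all channels in $\wu$ simultaneously. This is what produces the max--min form directly; a channel-by-channel bound $R\le C(W)$ would only give the weaker $\min_W\max_{P_X}$ form, which need not equal the claimed quantity since $\wu=\overline{\cW}_1\cup\overline{\cW}_2$ is a union of convex sets and is not in general convex.

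First, I would fix any sequence of random codes $\{Q^{(n)}\}$ of rate $R$ with $P_{\mathsf{e}}^{\mathsf{r}}(Q^{(n)})\to 0$. For every $W\in\wu$ there exist $k\in\{1,2\}$ and $P_S\in\cP_k$ such that $W(y|x)=\sum_s P_S(s)W(y|x,s)$, so under compound state $\sigma_k$ the adversary may sample $\vecs\in\cS_k^n$ i.i.d.\ from $P_S$, making the effective channel the DMC $W^n=W_{P_S}^n$. The hypothesis $P_{\mathsf{e}}^{\mathsf{r}}(Q^{(n)},k)\to 0$ then gives vanishing average error for the random code $Q^{(n)}$ on this DMC.

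Next, I would apply Fano to this DMC with $M$ uniform on $\cM$, $X^n=F(M)$ and $Y^n\sim W^n(\cdot\mid X^n)$: $nR\le I(M;Y^n\mid F,\Phi)+n\epsilon_n$. Data processing bounds this by $I(X^n;Y^n\mid F)+n\epsilon_n$, and the standard DMC single-letterization $I(X^n;Y^n\mid F)\le\sum_i I(X_i;Y_i)$ together with concavity of $I_{P_X,W}(X;Y)$ in $P_X$ for fixed $W$ yields $\tfrac{1}{n}\sum_i I(X_i;Y_i)\le I_{\bar P_X^{(n)},W}(X;Y)$, where $\bar P_X^{(n)}:=\tfrac{1}{n}\sum_i P_{X_i}$ is the average single-letter input distribution determined by $Q^{(n)}$ alone. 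Crucially, $\bar P_X^{(n)}$ does not depend on $W$, and therefore $R\le I_{\bar P_X^{(n)},W}(X;Y)+\epsilon_n$.

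Finally, since the same $\bar P_X^{(n)}$ is valid for every $W\in\wu$, I minimize over $W$ and relax to the supremum over all input distributions:
\[
R\le \min_{W\in\wu} I_{\bar P_X^{(n)},W}(X;Y)+\epsilon_n \le \max_{P_X}\min_{W\in\wu} I(X;Y)+\epsilon_n.
\]
Letting $n\to\infty$ closes the bound. The main subtlety is precisely the ordering of quantifiers in the last display: the argument is structured so that the minimum over $W$ is taken \emph{inside} the bound for a fixed $P_X$ arising from the code, and one must resist the natural impulse to first maximize Fano by $\max_{P_X}I_{P_X,W}=C(W)$ for each $W$, since this would lose the max--min tightness that the lemma requires.
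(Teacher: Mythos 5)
Your proposal is correct and follows essentially the same route as the paper's proof: reduce to a DMC by letting the adversary sample the AVC-state i.i.d.\ so as to induce any $W\in\wu$, apply Fano and single-letterization, and then exploit the fact that the averaged input distribution is fixed by the code alone so that the minimum over $W$ can be taken inside before relaxing to $\max_{P_X}$. Your concavity step is exactly the paper's time-sharing argument with $L\sim\mathrm{Uniform}[1,n]$ and $I(X_L;Y_L\mid L)\le I(X_L;Y_L)$, so there is no substantive difference.
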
 
\begin{proof}
Consider the adversarial strategy for \cs\ $\sigma_k$ where 
the adversary chooses a distribution $P(\vecs)$ with support over $\cS_k^n$ and randomly samples a 
vector $\vecs$ distributed according to $P$. 
Note that the \cavc\ average error probability under the worst-case $P$ is same as that under worst-case $\vecs$
(\textit{c.f.} \cite[Lemma 12.3, Page 210]{csiszar-korner-book}). In other words, if $\cP_i^{(n)}$ represents all 
distributions over $\cS_k^n$, then
$$ P_e^r(Q,k) =  P_e^p(Q,\cP_k^{(n)}), $$
where 
$$ P^p_e(Q,\cP_k^{(n)}) = \max_{P \in \cP_k^{(n)}}\sum_{\vecs}P(\vecs) \sum_{(f,\phi)}
    Q(f,\phi) \frac{1}{M} \sum_{m=1}^M W^n(\phi^{-1}(\cE_{m,k}) |f(m),\vecs).  $$

Here, $\cE_{m,k}$ is the error event corresponding to communication error $\{m' \in \cM:m' \neq m \}$.

Consider a particular class of adversarial strategies for \cs\ $\sigma_k$ where the adversary chooses the state 
sequence $\vecs$ with each bit independently from the distribution $ P_k \in \cP_k$, i.e., $P(\vecs) = P_k^n(\vecs) = \prod_{j=1}^n P_k(s_j)$.
The probability of error under this adversarial stragey is given by 

$$ \sum_{(f,\phi)} Q(f,\phi) \frac{1}{M} \sum_{m=1}^M W^n_{P_k}(\phi^{-1}(\cE_{m,k}) |f(m)), $$
where $W_{P_k}(y|x) = \sum_{s \in \cS_k} P_k(s)W(y|x,s)$.  
Therefore, channel distribution is given by Discrete Memoryless Channel (DMC) $W_{P_k}$. \\

Under such i.i.d. adversarial strategy, consider a sequence of codes with rate $R'$
such that the error probability $P_e^{(n)}$ tends to $0$ for large block-length. 
Let $M$ be the message which is encoded into vector $X^n$ and transmitted,  and let 
$Y^n$ be the vector received by the decoder. 
Then, $(M,Y^{i-1}) \longleftrightarrow X_i \longleftrightarrow Y_i$
form a Markov Chain under this adversarial strategy (as $W_{P_k}$ is a DMC). 
Let $\hat M$ be the decoded message.
By Data-Processing and Fano's inequalities,
 $$ H(M|Y^n) = H(M|\hat M) \leq 1+P_e^{(n)}nR'=n\epsilon_n, $$
where $\epsilon_n$ is defined as $\frac{1}{n}+P_e^{(n)}R'$. 
Next, we note that
\begin{align}
    nR' &= H(M) \\
        &= H(M|Y^n) + I(M;Y^n) \\
        &\leq n\epsilon_n + I(M;Y^n).   \label{eq:fano}
\end{align} 

Consider the term $I(M;Y^n)$ - 
\begin{align}
    I(M;Y^n) &= \sum_{j=1}^n I(M;Y_j|Y^{j-1}) \\
            &\leq \sum_{j=1}^n I(M,X_j,Y^{j-1};Y_j) \\
            &= \sum_{j=1}^n I(X_j;Y_j), \\
\end{align}
where the last equality follows from the property of Markov Chains 
($(M,Y^{i-1}) \longleftrightarrow X_i \longleftrightarrow Y_i$). \\

Let $L\sim Uniform[1,n]$ be independent of other random variables. 
Note that $L \longleftrightarrow X_L \longleftrightarrow Y_L$ forms a Markov Chain.
Thus, we have,
\begin{align}
   \frac{1}{n}\sum_{j=1}^n I(X_j;Y_j) &= I(X_L;Y_L|L) \\
   &\leq I(X_L,L;Y_L) \\
   &= I(X_L;Y_L).
\end{align}
Since \eqref{eq:fano} has to hold for all such i.i.d. adversarial strategies,
$$ R' \leq \epsilon_n +  \min_{P \in \cP_1 \cup \cP_2} I(X;Y), $$
where  $Y$ is related to $X$ via the DMC $W_{P}$. 
Further, $\epsilon_n$ can be made arbitrarily small by choosing $n$ large enough 
since $P_e^{(n)}$ vanishes for large $n$.
Therefore, for every achievable rate $R'<\Crancomm$, we have,
$$ \Crancomm  \leq \max_{P_X} \min_{W \in \wu} I(X;Y). $$
\end{proof}

\vspace{20pt}

Formally, we define the task of only \cs\ identification (without requiring reliable message decoding). 
Let $\hat \cM = \{\sigma_1,\sigma_2\}$ and
define $ \hat \cE_{k} := \left\{\sigma_{3-k} \right\} $ (similar to $\cE_{m,k}$ defined in 
Section~\ref{sec:2}).
Denote the probability of error in \cs\ identification as $P_{id}^r(Q)$ which is described in terms
of $P_{id}^r(Q,k)$ as  
\begin{align}
    P_{id}^r(Q,k) &\triangleq \max_{\vecs \in \cS_k^n} \sum_{(f,\phi)} Q(f,\phi) \frac{1}{M} \sum_{m=1}^M W^n(\phi^{-1}(\hat \cE_{k}) |f(m),\vecs), \text{ and}\\  
    P_{id}^r(Q) &= \max\{ P_{id}^r(Q,1), P_{id}^r(Q,2) \label{eq:def} \}. 
\end{align}

We first show that $\ww = \emptyset$ is necessary for \cs\ identification, which also implies that it is 
necessary for simultaneous \cs\ identification and communication. 

\begin{lemma} \label{lem:ad-id}
$\ww = \emptyset$ is necessary for \cs\ identification under random coding.
\end{lemma}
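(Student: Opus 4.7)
The plan is to establish the contrapositive: assume $\ww \neq \emptyset$ and show that no sequence of random codes achieves $P_{id}^r(Q^{(n)}) \to 0$. First I would pick any $W^* \in \overline{\cW_1} \cap \overline{\cW_2}$. Since $\cP_k$ is a compact set of distributions over the finite alphabet $\cS_k$, the convex closure $\overline{\cW_k}$ equals $\{W_{P} : P \in \cP_k\}$ in the $W_P$ notation introduced at the start of Section~\ref{sec:4}. Hence there exist $P_1 \in \cP_1$ and $P_2 \in \cP_2$ with $W_{P_1} = W_{P_2} = W^*$.

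The key step is a common i.i.d.\ attack. For each \cs\ $\sigma_k$, I would consider the adversarial strategy that draws $\vecs$ coordinate-wise i.i.d.\ from $P_k$. Under this strategy the effective channel from $X^n$ to $Y^n$ is the DMC $W^{*n}$ in both cases, irrespective of the code realization. Consequently, for any $(f,\phi)$ and any message $m$, the quantity $\alpha(f,\phi,m) \defineqq W^{*n}\!\left(\phi^{-1}(\{\sigma_1\}) \,\middle|\, f(m)\right)$ is \emph{identical} under the two adversarial strategies. Averaging $\alpha$ over $(f,\phi) \sim Q$ and uniform $m$ yields a single number $\bar\alpha \in [0,1]$, and the average error probabilities induced by these two i.i.d.\ attacks are $E_1 = 1 - \bar\alpha$ (decoder wrongly reports $\sigma_2$) and $E_2 = \bar\alpha$ (decoder wrongly reports $\sigma_1$). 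Since $E_1 + E_2 = 1$, we get $\max(E_1, E_2) \geq 1/2$.

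To conclude, I would invoke the Csisz\'ar--K\"orner reduction already used in the proof of Lemma~\ref{lem:rcom-conv}, namely that the worst-case error over deterministic $\vecs \in \cS_k^n$ equals the worst-case error over distributions on $\cS_k^n$. The i.i.d.\ attack $P_k^n$ lies in that class, so $P_{id}^r(Q,k) \geq E_k$ and therefore $P_{id}^r(Q) \geq 1/2$ for every random code $Q$ at every blocklength. This constant lower bound contradicts $P_{id}^r(Q^{(n)}) \to 0$, completing the contrapositive. I do not anticipate a substantive obstacle: the whole content is the observation that when $\ww$ is nonempty the adversary can \emph{simulate} the very same memoryless channel $W^*$ under either \cs, so the joint distribution of $(F,\Phi,M,X^n,Y^n)$ under the two hypotheses coincide and no decoder rule can tell them apart better than chance.
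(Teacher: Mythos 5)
Your proposal is correct and follows essentially the same route as the paper: both pick a common channel $W^*=Z\in\ww$, have the adversary simulate it via i.i.d.\ state sequences $P_k^n$ under either \cs, and observe that the two error events $\phi^{-1}(\{\sigma_1\})$ and $\phi^{-1}(\{\sigma_2\})$ partition $\cY^n$, forcing $P_{id}^r(Q,1)+P_{id}^r(Q,2)\geq 1$ and hence $P_{id}^r(Q)\geq 1/2$. Your $\bar\alpha$ bookkeeping is just a repackaging of the paper's direct summation over $k$, so there is nothing substantive to add.
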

\begin{proof}
Let $\ww  \neq \emptyset $, then $\exists$ channel $Z:\cX \to \cY$ , $Z_{Y|X} \in \ww$. 
Therefore, we can choose distribution $P_k$ over $\cS_k$ such that 
$Z_{Y|X} = \sum_{s}  P_k(s)W_{Y|X,S=s}$ for $k=1,2$. \\
Let $T_k(\vecs) := \prod_{j=1}^n P_k(s_j)$. Consider an adversarial
stragey where the adversary chooses the state i.i.d.  from distribution $P_k$
when the compound state is $\sigma_k$. Under this attack and compound state $\sigma_k$, 
we have,  
\begin{align}
  P_{id}^r(Q,k) &\geq  \sum_{\vecs} T_k(\vecs) \sum_{(f,\phi)} Q(f,\phi) \frac{1}{M} \sum_{m=1}^M W^n(\phi^{-1}(\hat\cE_{k}) |f(m),\vecs) \\
    &=   \frac{1}{M} \sum_{m=1}^M \sum_{\vecs} \sum_{(f,\phi)} \sum_{\vecy \in \phi^{-1}(\hat\cE_k)} Q(f,\phi) T_k(\vecs)  W^n(\vecy |f(m),\vecs) \\
    &=  \frac{1}{M} \sum_{m=1}^M \sum_{\vecs} \sum_{(f,\phi)} \sum_{\vecy \in \phi^{-1}(\hat\cE_k)} Q(f,\phi) \prod_{j=1}^n T_k(s_j)  W^n(y_j |f(m)_j,\vecs_j) \\
    &=  \frac{1}{M} \sum_{m=1}^M  \sum_{(f,\phi)}  Q(f,\phi) Z^n(\phi^{-1}(\hat\cE_k) |f(m)).
\end{align}
Hence,
\begin{align}
    P_{id}^r(Q,1)+P_{id}^r(Q,2) &\geq  \frac{1}{M} \sum_{m=1}^M  \sum_{(f,\phi)}  Q(f,\phi) Z^n(\phi^{-1}(\hat\cE_1) \cup \phi^{-1}(\hat\cE_2)|f(m)) \\
     &\geq  1 \hspace{1em} \forall Q, \label{eq:conv}
\end{align}
where \eqref{eq:conv} follows as $\hat\cE_1 \cup \hat\cE_2 = \cY^n$. 
Therefore, \cs\ identification is not possible if $\ww \neq \emptyset$.

\end{proof}
\vspace{20pt}

Note that the probability of error in only \cs\ identification is strictly less than or equal to the probability of 
error in joint \cs\ identification and communication. Thus,
if the error probability in \cs\ identification is not vanishing for a \cavc , then the error probability 
in joint communication and \cs\ identification cannot vanish. 
Lemma~\ref{lem:ad-id} establishes that $\Cranboth = 0$ if $\ww \neq \emptyset$. 
If $\ww = \emptyset$, the fact $\Cranboth \leq \Crancomm$ and Lemma~\ref{lem:rcom-conv} establish that 
$$\Cranboth \leq \max_{P_X} \min_{W \in \wu} I(X;Y)$$.

\begin{lemma} \label{lem:crauth-conv}
\begin{equation}
    \Cranauth \leq \max_{P_X} \min_{W \in \ww} I(X;Y) \label{eq:crauth-conv}
\end{equation}
\end{lemma}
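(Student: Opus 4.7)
The plan is to exploit the fact that every $W \in \ww$ can be simulated by the adversary under either \cs\ via a suitable i.i.d. state distribution, stripping the decoder of any useful information for \cs\ identification and reducing the converse to a standard DMC argument. If $\ww = \emptyset$ the right-hand side of \eqref{eq:crauth-conv} is vacuous, so I assume $\ww \neq \emptyset$ and fix an arbitrary $W \in \ww$; by definition there exist $P_1 \in \cP_1$ and $P_2 \in \cP_2$ with $W_{P_1} = W_{P_2} = W$. I let the adversary draw the \avcstate\ sequence i.i.d. from $P_k$ when the \cs\ is $\sigma_k$. Under this attack, the joint law of $(M, X^n, Y^n, \hat M)$ is the same under both \cs s, since both reduce to transmission over the product DMC $W^n$ followed by the (random) decoder.

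In this common probability space, $P_{\mathsf{e}}^{\mathsf{r}}(Q^{(n)}) \to 0$ yields simultaneously $\Pr(\hat M \notin \{M, \sigma_1\}) \to 0$ and $\Pr(\hat M \notin \{M, \sigma_2\}) \to 0$. Because $\sigma_1 \neq \sigma_2$, one checks the set inclusion $\{\hat M \neq M\} \subseteq \{\hat M \notin \{M, \sigma_1\}\} \cup \{\hat M \notin \{M, \sigma_2\}\}$, so a union bound gives $\Pr(\hat M \neq M) \to 0$: once the adversary has equalized the two \cs s, the ``or'' decoder is forced to recover the message reliably over the DMC $W$. This is the step that replaces $\wu$ by the smaller set $\ww$ in the bound and distinguishes this lemma from Lemma~\ref{lem:rcom-conv}.

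The remaining DMC converse then mirrors Lemma~\ref{lem:rcom-conv}: Fano gives $H(M \mid Y^n) \leq n \epsilon_n$ with $\epsilon_n \to 0$; the Markov chain $(M, Y^{j-1}) \leftrightarrow X_j \leftrightarrow Y_j$ (valid since the induced channel is the DMC $W$) gives $I(M; Y^n) \leq \sum_j I(X_j; Y_j)$; and introducing a time-sharing variable $L \sim \mathrm{Uniform}[1, n]$ collapses this to $R \leq I(X_L; Y_L) + \epsilon_n$, where $X_L \to Y_L$ is the single-letter DMC $W$. The marginal $P_{X_L^{(n)}}$ depends only on the code, not on $W$, so minimizing over $W \in \ww$ and passing to a subsequence along which $P_{X_L^{(n)}}$ converges to some $P_X^*$ (by compactness of the simplex on $\cX$ and continuity of $W \mapsto I(X; Y)$) gives $R \leq \min_{W \in \ww} I(X^*; Y) \leq \max_{P_X} \min_{W \in \ww} I(X; Y)$. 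The only non-routine step is the set-algebraic reduction in the second paragraph; I expect that to be the main (though short) obstacle, and everything else is a transcription of the converse in Lemma~\ref{lem:rcom-conv}.
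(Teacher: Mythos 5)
Your proposal is correct and follows essentially the same route as the paper: the key step in both is that the adversary can induce a common channel from $\ww$ under either \cs, so that $\cE_{m,1}\cup\cE_{m,2}=\{m'\neq m\}$ forces reliable message decoding, after which the argument of Lemma~\ref{lem:rcom-conv} applies. The only (immaterial) difference is that the paper lets the adversary pick a possibly time-varying sequence $Z_1,\dots,Z_n\in\ww$ and appeals to the AVC converse over the family $\ww$, whereas you fix a single $W\in\ww$ and run the DMC converse before minimizing over $W$; both yield the stated bound, and your final compactness/subsequence step is not even needed since $P_{X_L^{(n)}}$ is channel-independent and can be bounded by the max over $P_X$ directly.
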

\begin{proof}
When $\ww = \emptyset$, RHS of \eqref{eq:crauth-conv} is infinity and the relation holds trivially.  

If $\ww \neq \emptyset$, then let $Z_1,\ldots,Z_n$ be any $n$ channels $\in \ww$.
We represent the $n$-length channel as $Z^{(n)}(\vecy|\vecx) = \prod_{i=1}^n Z_i(y_i|x_i)$. 
Define $P_{i,k}(s) \in \cP_k$ for $k=1,2$ such that $\sum_s P_{i,k}(s)W_{Y|X,S=s} = Z_i$. 
We have,
\begin{align}
        P_e^r(Q,k)  &\geq \sum_{\vecs \in \cS_k^n} \prod_{i=1}^n P_{i,k}^n(s_i) \sum_{(f,\phi)}Q(f,\phi) \frac{1}{M} 
                            \sum_{m=1}^M W^n(\phi^{-1}(\cE_{m,k})|f(m),\vecs) \\
                    &= \frac{1}{M} \sum_{(f,\phi)} \sum_{m=1}^M Q(f,\phi) Z^{(n)}(\phi^{-1}(\cE_{m,k})|f(m)); \\
\implies 2P_e^r(Q)  &\geq  \frac{1}{M} \sum_{(f,\phi)} \sum_{m=1}^M Q(f,\phi) Z^{(n)}(\phi^{-1}(\cE_{m,1} \cup \cE_{m,2})|f(m)) \\
                    &=  \frac{1}{M} \sum_{(f,\phi)} \sum_{m=1}^M Q(f,\phi) Z^{(n)}(\phi^{-1}(\{\sigma_1,\sigma_2\}\cup \cM \setminus m)|f(m)) \\
                    &=  \frac{1}{M} \sum_{(f,\phi)} \sum_{m=1}^M Q(f,\phi) Z^{(n)}(\phi^{-1}(m)^C|f(m)). 
\end{align}
In order to get $P_e^r(Q) \to 0$, we must ensure the RHS vanishes as $n$ increases for all $Z^{(n)}$ with $Z_i \in \ww$. 
The RHS is exactly the probability of error for communication over an AVC with the family of channels $\ww$. Thus, we have,
$$ \Cranauth \leq \max_{P_X} \min_{Z \in \ww} I(X;Y)$$
\end{proof}
\vspace{20pt}
\subsection{Achievability Proof of Theorem~\ref{thm:com} (i)} \label{subs:ran-ach}

\begin{lemma} \label{lem:crcom-ach}
$$\Crancomm \geq \max_{P_X} \min_{W \in \wu} I(X;Y)$$
\end{lemma}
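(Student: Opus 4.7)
The plan is to reduce this achievability statement to the standard random coding capacity of a single AVC whose state alphabet is $\cS_1\cup\cS_2$ and whose channel family is $\cW_1\cup\cW_2$. Since any adversarial sequence $\vecs\in\cS_k^n$ for compound state $\sigma_k$ also lies in $(\cS_1\cup\cS_2)^n$, the CAVC communication problem is at most as hard as this combined AVC for either value of $k$. Random coding achievability for the combined AVC therefore transfers directly to the CAVC.

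Concretely, I would fix a distribution $P_X^*$ attaining $\max_{P_X}\min_{W\in\wu} I(X;Y)$ and any rate $R$ strictly below this maximum. The random code $Q^{(n)}$ generates $M=2^{nR}$ codewords $F(1),\ldots,F(M)$ i.i.d.\ according to $P_X^{*n}$, and the realised codebook is shared with the decoder. For decoding I would use a universal rule, e.g., an MMI/typicality decoder that outputs $\hat m$ maximising the empirical mutual information $I(P_{F(m),\vecy})$ together with a joint-typicality check, as in \cite{avc-csiszar-narayan,csiszar-korner-book}.

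The error analysis then proceeds in two standard steps. First, using the same reduction that appeared in the converse Lemma~\ref{lem:rcom-conv} (\cite[Lemma~12.3]{csiszar-korner-book}), the averaged error probability over the random codebook depends on a deterministic state sequence $\vecs$ only through its empirical type, so it suffices to consider i.i.d.\ attacks with distribution $P\in\cP_k$. Under such an attack the channel becomes the DMC $W_P(y|x)=\sum_s P(s)W(y|x,s)$, which lies in $\wk\subseteq\wu$. Second, the classical random coding / packing lemma argument for universal decoding over a DMC $V$ yields exponentially vanishing expected error whenever $R<I(X^*;Y)$ computed with respect to $V$. Applying this with $V=W_P$ and taking the worst case over $P\in\cP_1\cup\cP_2$, equivalently over $W\in\wu$, gives vanishing $P_{\mathsf{e}}^{\mathsf{r}}(Q^{(n)},k)$ for both $k=1,2$, establishing the claim.

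I do not anticipate a serious obstacle: the only technical content is the uniformity of the universal decoder across the family $\wu$, which is precisely the AVC random coding theorem of \cite{avc-csiszar-narayan} applied with state alphabet $\cS_1\cup\cS_2$. The sole CAVC-specific observation is that the rate must be robust against both compound states simultaneously, and this is built in automatically by taking the minimum over all of $\wu$ (rather than over $\wa$ or $\wb$ alone) in the choice of $R$. The convex-closure operation in $\wu$ is also automatic since empirical types of $\vecs\in\cS_k^n$ can be any element of $\cP_k$ and any $W_P$ for $P\in\cP_1\cup\cP_2$ lies in $\wu$.
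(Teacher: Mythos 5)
Your error analysis in the third paragraph is essentially the paper's argument: the paper follows \cite[Lemma 5]{kosut-kliewer}, drawing the codebook uniformly from a type class, sharing it with the decoder, using an empirical-mutual-information threshold decoder, and showing by a method-of-types computation that for any $\vecs\in\cS_k^n$ the pair $(\vecX_i,\vecy)$ concentrates around $P_X\times W_{P_{\vecs}}$ with $W_{P_{\vecs}}\in\wk$. But the reduction you lead with, and return to in your closing paragraph, does not prove the stated bound. The random coding theorem of \cite{avc-csiszar-narayan} applied to the single AVC with state alphabet $\cS_1\cup\cS_2$ yields $\max_{P_X}\min_{W\in\overline{\cW_1\cup\cW_2}}I(X;Y)$, where the minimum runs over the convex closure of the \emph{union}; the paper explicitly flags this as a ``simple lower bound'' that is in general strictly smaller than $\max_{P_X}\min_{W\in\wu}I(X;Y)$, because $\wu=\overline{\cW}_1\cup\overline{\cW}_2$ excludes mixtures combining states from $\cS_1$ with states from $\cS_2$. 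Such cross mixtures are exactly what a combined-AVC adversary can realize with $\vecs\in(\cS_1\cup\cS_2)^n$ but a CAVC adversary cannot, since the compound state pins $\vecs$ to $\cS_1^n$ or to $\cS_2^n$ for the entire block. So the claim that the lemma ``is precisely the AVC random coding theorem applied with state alphabet $\cS_1\cup\cS_2$'' is false as stated; you must redo the type counting under the restriction $P_{\vecs}\in\cP_1\cup\cP_2$, which is what your worst-case-over-$P\in\cP_1\cup\cP_2$ step implicitly does. Keep that step and drop the black-box reduction.

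A smaller point: for achievability you cannot literally ``reduce to i.i.d.\ attacks'' via \cite[Lemma 12.3]{csiszar-korner-book}; that equivalence is the converse direction (a randomized jammer doing poorly does not imply every deterministic $\vecs$ does poorly). What you actually need, and also state, is that the codebook distribution is permutation invariant and the decoder depends only on joint types, so the expected error against a deterministic $\vecs$ is a function of $P_{\vecs}$ alone; a bound that is uniform over the polynomially many types in $\cP_1\cup\cP_2$ then finishes the proof. The paper carries this out directly by summing over joint types $P_{XSY}$ with $\vecs\in\tau_S$ and bounding the exponent by $D(P_{XY}\Vert P_X\times W_{P_S})$, which is bounded away from zero whenever $I(X;Y)<R+\delta$ and $W_{P_S}\in\wu$.
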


The proof is along the lines of \cite[Lemma 5]{kosut-kliewer}.
For any $R < \Crancomm$, choose $\delta>0$ such that $R + \delta < \Crancomm$. 
We describe the encoder-decoder pair $(F_R,\Phi_R)$ (parameterized by the rate $R$) 
used to achieve the capacity. The codebook for a $(2^{nR},n)$ code is obtained by uniformly
and independently sampling $M$ vectors ($\vecX_1,\ldots,\vecX_M$)  $\in \tau_{X}$ where 
$\tau_{X}$ is the typical set corresponding to some $P_X \in \pn{\cX}{n}$, and $F_{R}(i) = \vecX_i$. 
The decoder outputs $\Phi_{R}(\vecy)=i \in \cM$ if there is a unique $i$ for which 
$I(X;Y) \geq R + \delta$ where $P_{XY} = P_{\vecX_i,\vecy}$, and $\Phi_{R}(\vecy)=1$ if no such $i$ exists. 

If message $i$ is sent and the \avcstate\ sequence is $\vecs$ during transmission, 
we need to prove the following two results to show that rate $R$ is achievable :
\begin{align}
    \bbP\{ (\vecX_i,\vecy) \in \tau_{XY}, I(X;Y) < R + \delta  \} & \xrightarrow{\: n \to \infty \: } 0 \ \ \forall \vecs \in \cS_1^n \cup \cS_2^n, \text{ and} \label{eq:part-1} \\
    \bbP\{ (\vecX_j,\vecy) \in \tau_{XY}, I(X;Y) \geq R + \delta, \text{for some } j \neq i \} &\xrightarrow{\: n \to \infty \: } 0 \ \ \forall \vecs \in \cS_1^n \cup \cS_2^n. \label{eq:part-2} 
\end{align}
The probability expression in the LHS of \eqref{eq:part-1} is equal to
\begin{align}
    & \sum_{\substack{P_{XSY}:I(X;Y)<R + \delta, \\ \vecs \in \tau_S}} \  \sum_{\vecx \in \tau_{X|S}(\vecs)} |\tau_X|^{-1} \sum_{\vecy \in \tau_{Y|XS}(\vecx,\vecs)} W^n(\vecy|\vecx,\vecs)\\
    \leq& \sum_{\substack{P_{XSY}:I(X;Y)<R + \delta, \\ \vecs \in \tau_S}} \  \sum_{\vecx \in \tau_{X|S}(\vecs)} |\tau_X|^{-1} \exp\{-nD(P_{XSY}||P_{XS} \times W) \} \\ 
    =& \sum_{\substack{P_{XSY}:I(X;Y)<R + \delta, \\ \vecs \in \tau_S}} \  \frac{|\tau_{X|S}(\vecs)|}{|\tau_X|} \exp\{-nD(P_{XSY}||P_{XS} \times W) \} \\
    \leq& \sum_{\substack{P_{XSY}:I(X;Y)<R + \delta, \\ \vecs \in \tau_S}} \   \exp\{-n(D(P_{XSY}||P_{XS} \times W) + I(X;S) - \epsilon )\}. \label{eq:poly}
\end{align}
Using the fact that  $D(P_{XSY}||P_{XS} \times W) + I(X;S) = D(P_{XSY}||P_{X} \times P_S \times W)$, and taking the marginals  along $\cX \times \cY$, 
while noting that divergence does not increase with marginalization, we have
$$ \bbP\{ (\vecX_i,\vecy) \in \tau_{XY}, I(X;Y) < R + \delta  \} \leq \sum_{\substack{P_{XSY}:I(X;Y)<R + \delta, \\ \vecs \in \tau_S}} \   \exp\{-n(D(P_{XY}||P_X \times W_{P_S}) - \epsilon )\}, $$
where $W_{P_S} = \sum_s P_S(s)W_{Y|X,S=s}$. 
In \eqref{eq:poly}, we can set $\epsilon$ arbitrarily small as $\epsilon$ is present to account for the $(n+1)^{|\cX|}$ term which 
grows polynomially. In particular, set $\epsilon < \epsilon'$, where $\epsilon'$ is described next. \\
Note that if $P_{XY} = P_X \times W_{P_S}$, then $R+\delta < I(X;Y)$ (as $W_{P_S} \in \wu$) by choice of $R$ 
and $\delta$ as described. Since mutual information and relative entropy are continuous functions of $P_{XY}$, 
there exists $\epsilon'>0$ such that if $I(X;Y)<R+\delta$, then
$$ D(P_{XY}||P_X \times W_{P_S}) \geq \epsilon' \ \forall P_S, \text{ or equivalently, }\forall \vecs. $$
Since there are only polynomially many types, for sufficiently large $n$, 
\eqref{eq:part-1} is less than $\exp\{-n(\epsilon'-\epsilon)/2\} \to 0$ as $n \to \infty$.  
\\

Next, we analyze the probability in the LHS of \eqref{eq:part-2}. The probability, for any $\vecs$, can be written as
\begin{align}
      &=    \sum_{\substack{P_{XX'SY}:I(X';Y)\geq R+\delta \\ \vecs \in \tau_{S}, \ P_X = P_{X'}}} \ \sum_{\vecx_i \in \tau_{X|S}(\vecs)} |\tau_X|^{-1}
            \sum_{j=1,j\neq i}^M \ \sum_{\vecx_j \in \tau_{X'|XS}(\vecx,\vecs)} |\tau_X|^{-1} 
            \sum_{\vecy \in \tau_{Y|XX'S}(\vecx_i,\vecx_j,\vecs) } W^n(\vecy|\vecx_i,\vecs) \\
    &\leq \sum_{\substack{P_{XX'SY}:I(X';Y)\geq R+\delta \\ \vecs \in \tau_{S}, \ P_X = P_{X'}}} \exp(-n(I(X;S)-\epsilon))
          \exp(nR) \exp(-n(I(X';XS)-\epsilon)) \exp(-n(I(Y;X'|XS)-\epsilon)) \\
    &\leq \sum_{\substack{P_{XX'SY}:I(X';Y)\geq R+\delta \\ \vecs \in \tau_{S}, \ P_X = P_{X'}}} \exp\{ -n(I(X;S)+I(X';XSY)-R -3\epsilon)\} \\
    &\leq \sum_{\substack{P_{XX'SY}:I(X';Y)\geq R+\delta \\ \vecs \in \tau_{S}, \ P_X = P_{X'}}} \exp\{ -n(I(X';Y)-R-3\epsilon)\} \\
    &\leq \sum_{\substack{P_{XX'SY}:I(X';Y)\geq R+\delta \\ \vecs \in \tau_{S}, \ P_X = P_{X'}}} \exp\{- n(\delta-3\epsilon)\} \\
    &\leq \exp\{- n(\delta-3\epsilon-\epsilon').\} 
\end{align}
Note that $\epsilon$ and $\epsilon'$ can be set arbitrarily small as they are present to account for polynomially many terms. 
This proves the achievability of the capacity $\Crancomm$.

\subsection{Achievability Proof of Theorem~\ref{thm:both} (i) and Theorem~\ref{thm:auth} (i)} \label{subs:ran-ach-2}

We begin this sub-section by focusing on identifying the \cs\ under random coding as the method discussed would be directly used
for proving achievability for Theorem~\ref{thm:both} and Theorem~\ref{thm:auth}. We present the following 2 lemmas 
before describing \cs\ identification.

\begin{lemma} \label{lb-1}
In a \cavc , let the random vector $\vecX$, chosen uniformly from  the typical set $\tau_X$ corresponding to some distribution $P_X \in \pn{\cX}{n}$,
be the input and the \avcstate\ sequence be $\vecs \in \mathcal{S}_k^n$. Suppose $\vecY$ represents the output sequence. 
Then, for any $\epsilon>0$ and sufficiently large $n$, the joint type $(\vecX,\vecY) \in \tau_{XY}^{\epsilon}$ 
with high probability, where $\tau_{XY}^{\epsilon}$ is the typical set corresponding to the distribution 
$P_{XY} = P_X \times \widetilde Z_{Y|X}$,  for some $\widetilde Z_{Y|X} \in \wk$.
\end{lemma}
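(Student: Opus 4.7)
The plan is to identify the channel $\widetilde Z \in \wk$ as the type-averaged channel induced by $\vecs$, and then argue that the empirical joint distribution of $(\vecX,\vecY)$ concentrates on the product $P_X \times \widetilde Z$. Concretely, let $P_S \in \pn{\cS_k}{n}$ denote the empirical type of $\vecs$. Since $\vecs \in \cS_k^n$, $P_S$ is supported on $\cS_k$, so
\begin{equation*}
\widetilde Z(y|x) \defineqq W_{P_S}(y|x) = \sum_{s \in \cS_k} P_S(s)\,W(y|x,s)
\end{equation*}
is a convex combination of channels from $\cW_k$ and therefore lies in $\wk$. This is the candidate channel in the statement.

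The first step is a direct expectation calculation. Because $\vecX$ is uniform on $\tau_X$, the marginal distribution of each coordinate $X_i$ is exactly $P_X$, and conditional on $\vecX$ the outputs satisfy $Y_i \sim W(\cdot\mid X_i, s_i)$ independently. Hence
\begin{equation*}
\bbE\!\left[P_{\vecX\vecY}(x,y)\right] \;=\; \tfrac{1}{n}\sum_{i=1}^{n} \bbP\{X_i = x\}\,W(y|x,s_i) \;=\; P_X(x)\sum_{s} P_S(s)\,W(y|x,s) \;=\; P_X(x)\,\widetilde Z(y|x),
\end{equation*}
which is exactly the desired target.

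The second step is concentration around this expectation, which I would split into two layers of randomness. Conditional on $\vecX=\vecx$ and $\vecs$, the $Y_i$'s are independent, so a Hoeffding bound gives $\left|P_{\vecx\vecY}(x,y) - \sum_{s} P_{\vecx\vecs}(x,s)W(y|x,s)\right|\le \epsilon/2$ for every $(x,y)$ with probability at least $1 - 2|\cX||\cY|\exp(-c\,n\epsilon^2)$. For the outer layer, I need $P_{\vecX\vecs}(x,s)\approx P_X(x)P_S(s)$. Viewing $\vecX$ as a uniformly random permutation of a multiset with $n P_X(x)$ copies of each symbol, the count $|\{i:X_i=x,\,s_i=s\}|$ follows a multivariate hypergeometric distribution, and the Serfling–Hoeffding inequality for sampling without replacement gives a deviation bound with exponentially small failure probability. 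Combining the two layers by a union bound over the $|\cX||\cY|$ pairs $(x,y)$ yields $|P_{\vecX\vecY}(x,y) - P_X(x)\widetilde Z(y|x)|\le \epsilon$ uniformly with probability tending to $1$, which is the claimed typicality.

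The main obstacle is the non-independence of the coordinates of $\vecX$ caused by the uniform-on-a-type-class law, which precludes a direct Chernoff argument. This is handled cleanly by the without-replacement concentration bound, which matches the i.i.d.\ rate up to constants; since the alphabets $\cX,\cY,\cS_k$ are fixed, the union bound over coordinate pairs contributes only a polynomial prefactor and does not disturb the exponential decay in $n$.
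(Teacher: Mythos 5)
Your proof is correct, but it takes a genuinely different route from the paper's. You pick $\widetilde Z = W_{P_{\vecs}}$, the channel averaged over the full empirical type of $\vecs$, and establish concentration of the joint empirical distribution $P_{\vecX\vecY}$ directly in two layers: a Hoeffding bound on the channel noise (valid since the $Y_i$ are conditionally independent given $\vecx,\vecs$), followed by a sampling-without-replacement bound showing $P_{\vecX\vecs}(x,s)\approx P_X(x)P_{\vecs}(s)$, which handles the dependence among the coordinates of a vector drawn uniformly from a type class. The paper instead partitions the coordinates by state value, discards the states occurring fewer than $\epsilon_1 n$ times, applies the hypergeometric bound to show each surviving subvector of $\vecx$ is itself typical for $P_X$, invokes the conditional typicality lemma block by block to get $(\vecx_i,\vecy_i)$ jointly typical for $P_X\times W_{Y|X,S=S_i}$, and then reassembles the global empirical distribution, absorbing the rare-state blocks into an additive $(|\cS_k|-1)\epsilon_1$ error; its $\widetilde Z$ is consequently normalized over the frequent states only. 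Both constructions give a channel in $\wk$, so either serves the lemma. Your version buys a cleaner statement (the natural candidate $W_{P_{\vecs}}$, a single exact expectation identity, and two explicit concentration inequalities with a union bound over the finite alphabets), avoiding the paper's threshold $\epsilon_1$ and its cascade of $\epsilon_1,\dots,\epsilon_4$; the paper's version stays entirely within the standard typicality toolkit and makes the per-state structure of the output explicit, which is occasionally convenient downstream. One small point to make explicit when writing this up: in the outer layer the per-pair deviation must be summed over $s\in\cS_k$ before comparing $\sum_s P_{\vecX\vecs}(x,s)W(y|x,s)$ with $P_X(x)\widetilde Z(y|x)$, so the tolerance in the hypergeometric step should be taken as $\epsilon/(2|\cS_k|)$; since $|\cS_k|$ is a fixed constant this costs nothing.
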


The proof for Lemma~\ref{lb-1} can be found in the Appendix.

\begin{lemma} \label{lb-2}
If $\ww = \emptyset$ then for any $Z:\cX \to \cY, Z_{Y|X} \in \wa$, 
any $V:\cX \to \cY,V_{Y|X} \in \wb$, and any distribution $P$ over $\cX$ 
such that $P(a) > 0,\ \forall a \in \mathcal{X}$, there exists some $\eta>0$  such that
$$\sup_{(a,b) \in \cX\times \cY } \{ |P(a)Z_{Y|X}(b|a) - P(a)V_{Y|X}(b|a)| \} > \eta.$$
\end{lemma}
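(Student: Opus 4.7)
The plan is to exploit the compactness of \wa\ and \wb, combined with the hypothesis $\ww = \emptyset$, to obtain a uniform positive separation between the two sets in the sup-norm; scaling by $p_{\min} := \min_{a \in \cX} P(a) > 0$ then converts this separation into the desired $\eta$.

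First I would observe that each \wk\ is a compact convex polytope. This holds because $\cW_k = \{W(\cdot|\cdot,s) : s \in \cS_k\}$ is a finite set of points in the Euclidean space $\bbR^{|\cX||\cY|}$, so its convex closure \wk\ coincides with its convex hull, which is automatically closed and bounded. Equip the space of channels with the sup-norm $d(Z,V) := \max_{(a,b) \in \cX \times \cY} |Z(b|a) - V(b|a)|$; this is continuous on the compact product $\wa \times \wb$, so the minimum $d^* := \min\{d(Z,V) : Z \in \wa,\, V \in \wb\}$ is attained. Because $\ww = \emptyset$ means no $Z = V$ can lie in both $\wa$ and $\wb$, we must have $d^* > 0$.

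Finally, for any $Z \in \wa, V \in \wb$, picking $(a^*, b^*) \in \cX \times \cY$ that achieves the maximum defining $d(Z,V)$ gives
$$\sup_{(a,b)} |P(a) Z(b|a) - P(a) V(b|a)| \;\geq\; P(a^*)\, d(Z,V) \;\geq\; p_{\min}\, d^* \;>\; 0,$$
so $\eta := p_{\min}\, d^* / 2$ works. Note that this $\eta$ is in fact uniform over all $Z \in \wa, V \in \wb$ for the given $P$, which is the form in which downstream arguments (distinguishing \wa\ from \wb\ via empirical-channel estimates, as needed after Lemma~\ref{lb-1}) actually consume the statement. The only nontrivial ingredient is the compactness of the two convex closures together with the observation that disjointness of compact sets forces a positive minimum distance; once that is in hand, the rest is a short continuity-plus-scaling computation, so I expect no real obstacle.
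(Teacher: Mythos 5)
Your proposal is correct and matches the argument the paper intends: the paper gives no explicit proof but remarks that the lemma ``holds for any two closed and disjoint sets of channels,'' which is precisely your compactness-plus-disjointness argument (positive minimum sup-norm distance between the disjoint compact polytopes \wa\ and \wb, then scaling by $\min_a P(a)$). Your observation that the resulting $\eta$ is uniform over $Z \in \wa$, $V \in \wb$ is also the form actually needed downstream in the \cs\ identification decoder analysis, so nothing is missing.
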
 

In fact, instead of just $\wa$ and $\wb$, Lemma~\ref{lb-2} holds for any 
two closed and disjoint sets of channels.

\begin{lemma} \label{lem:ad-id-ach}
    $\ww = \emptyset$ is sufficient for \cs\ identification under random coding.
\end{lemma}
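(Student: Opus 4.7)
My plan is to exhibit a random code that transmits a short training sequence and identifies the \cs\ from the empirical channel it induces. Fix $\ell_n=|\cX|\lceil\log n\rceil$, let $\tilde\vecx\in\cX^{\ell_n}$ be the vector containing each $x\in\cX$ exactly $\lceil\log n\rceil$ times, and use a uniformly random permutation $\Pi$ of $\{1,\dots,\ell_n\}$ as the shared randomness. The encoder transmits $\Pi(\tilde\vecx)$; after applying $\Pi^{-1}$ to the received vector, the decoder computes, for each $x\in\cX$, the empirical conditional output distribution $\widehat Z(\cdot\mid x)$ from the $\lceil\log n\rceil$ positions that originally carried $x$. Because $\Pi$ is hidden from the adversary, whatever $\vecs\in\cS_k^{\ell_n}$ it selects is effectively reshuffled against $\tilde\vecx$, so the states aligned with the copies of any given $x$ form a uniformly chosen size-$\lceil\log n\rceil$ subset of $\{s_1,\dots,s_{\ell_n}\}$ sampled without replacement.

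A Hoeffding-type bound for sampling without replacement, together with a conditional Hoeffding bound on the channel outputs, then gives, for every $\epsilon>0$ and every $\vecs\in\cS_k^{\ell_n}$,
\begin{equation}
\Pr\!\Bigl\{\max_{x,y}\bigl|\widehat Z(y\mid x)-W_{\widehat P_{\vecs}}(y\mid x)\bigr|>\epsilon\Bigr\}\le 2|\cX||\cY|\exp\!\bigl(-c\,\epsilon^{2}\log n\bigr)\xrightarrow{n\to\infty}0,
\end{equation}
where $\widehat P_{\vecs}$ is the empirical type of $\vecs$ and $W_{\widehat P_{\vecs}}=\sum_{s}\widehat P_{\vecs}(s)W(\cdot\mid\cdot,s)$. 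Since $\widehat P_{\vecs}\in\cP_k$, the limit channel $W_{\widehat P_{\vecs}}$ lies in $\wk$ by convexity.

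To convert this concentration into a decision rule I invoke Lemma~\ref{lb-2} together with the compactness of $\wa$ and $\wb$ in the finite-dimensional simplex of channels. Disjointness (implied by $\ww=\emptyset$) then upgrades the pointwise separation of Lemma~\ref{lb-2} to a uniform gap: there exists $\eta_0>0$ such that $\max_{x,y}|Z(y\mid x)-V(y\mid x)|\ge\eta_0$ for every $Z\in\wa$ and every $V\in\wb$. The decoder declares $\sigma_k$ iff $\widehat Z$ lies within sup-distance $\eta_0/3$ of $\wk$; this is well defined by the uniform gap, and when the true \cs\ is $\sigma_k$ the concentration bound with $\epsilon=\eta_0/3$ places $\widehat Z$ within $\eta_0/3$ of the correct $W_{\widehat P_{\vecs}}\in\wk$ with probability $1-o(1)$. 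Since this holds uniformly in $\vecs$, we get $P_{id}^r(Q,k)\to 0$ for both $k$, which is exactly the definition of identification with vanishing error.

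The main technical obstacle is the concentration step: after the permutation, the $\lceil\log n\rceil$ state samples aligned with a given input are \emph{not} independent. I will handle this via the classical fact that Hoeffding's inequality (or Serfling's refinement) for sampling without replacement from a finite population yields the same exponential tail bound as for i.i.d.\ sampling, which—combined with the bounded-differences structure of the empirical channel estimate—supplies the required concentration in $\log n$. The remaining ingredients, namely the compactness-based uniform separation of $\wa$ from $\wb$ and the reduction of the decoder's task to checking which $\wk$ is close to $\widehat Z$, are direct.
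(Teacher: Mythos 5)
Your proof is correct and follows essentially the same route as the paper's: randomize the input so the adversary cannot align its state sequence with it, show the empirical channel concentrates on a member of $\wk$, and decide using the separation of the disjoint closed convex sets $\wa$ and $\wb$ (the paper's Lemma~\ref{lb-2}). The only cosmetic difference is that the paper's standalone proof transmits a random codeword drawn uniformly from a typical set $\tau_X$ and tests joint typicality via Lemma~\ref{lb-1}, whereas you transmit a permuted $\log n$-length training sequence and estimate the channel directly---which is precisely the device the paper itself uses (and notes is equivalent) in Lemma~\ref{lem:crboth-ach}.
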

\begin{proof}
    Refer to equation \eqref{eq:def} for definition of probability of error in the \cs\ identification task.
    In this setting, there is no particular need or meaning in sending any `message' since
    the decoder does not even try to decode the message. However,
    since there is a message term used in the error probability definition in \eqref{eq:def}, we still need
    to describe the encoder in terms of messages.
    For our achievability scheme, consider an encoder which randomly samples a vector from $\vecF \in \tau_X$ 
    (for some distribution $P_X \in \pn{\cX}{n}$)
    and for each message, it outputs the same vector $\vecF$, i.e., for any realisation of the encoder, 
    the output is same for all the messages (this form of degenrate encoder is sufficient for proving the lemma). 
    Since the decoders knows which encoder is used (shared randomness), 
    it knows the exact vector which is transmitted by the encoder. 
    Represent the encoder output as $F(i) = \vecF \in \tau_{X} \ \forall i \in \cM$. 
    
    \begin{dec*}
    $G(\vecy) = \sigma_k$ if $\exists$ $Z_{Y|X} \in \wk$ such that 
    $(\vecF,\vecy) \in \tau_{XY}^{\epsilon}$ for $P_{XY} = P_X \times Z_{Y|X} $ and there exists no such $Z_{Y|X} \in \overline{\cW}_{3-k}$. \\ 
    Else arbitrarily set $G(\vecy) = \sigma_1$.
    \end{dec*}
    We sepcify $\epsilon$ later in this proof.
    
    Probability of error in identification for the encoder-decoders described is given by
    \begin{align}
    P_{id}^r(Q,k) &= \max_{\vecs \in \cS_k^n} \sum_{\vecf} |\tau_{X}|^{-1} \frac{1}{M} \sum_{m=1}^M W^n(\Phi^{-1}(\hat \cE_{k}) |\vecf,\vecs) \\
     &= \max_{\vecs \in \cS_k^n} |\tau_{X}|^{-1} \sum_{\vecf}  W^n(\Phi^{-1}( \cE_{k}) |\vecf,\vecs).
    \end{align}
    
    The error event $\hat \cE_k$ can be due to 2 events - \\
    (A) When no such $Z_{Y|X} \in \wk$ such that $(\vecf,\vecy)$ is in the typical set. \\
    (B) When there is a $V_{Y|X} \in \overline{\cW}_{3-k}$ such that $(\vecf,\vecy)$ is in the typical set. \\ \\
    For each $\vecs$, we now analyze these 2 cases. \\
    \textbf{(A)}:\\
    By choosing $ Z_{Y|X}$ as defined in Lemma~\ref{lb-1}, for any $\epsilon$ and sufficiently large $n$, 
    the probability of this event can be made arbitrarily small. \\ \\
    \textbf{(A)$^C \cap$ (B)}:\\
    The event $(A)^C \cap (B)$ implies $\exists V_{Y|X} \in \overline{\cW}_{3-k}$ such that 
    $(\vecf,\vecy) \in \tau_{XY}^{\epsilon}$ for $P_{XY}=U_X\times V_{Y|X}$ and 
    $\exists Z_{Y|X} \in \wk$ such that $(\vecf,\vecy) \in \tau_{XY}^{\epsilon}$ for $P_{XY}= U_X\times  Z_{Y|X}$.
    Therefore,
    $$ |U(a) Z_{Y|X}(b|a) - U(a) V_{Y|X}(b|a) |<2\epsilon \  \forall (a,b) \in \mathcal{X}\times\mathcal{Y}.$$
    We can choose sufficiently small $\epsilon$ such that $\epsilon < \eta/2$ which would violate Lemma~\ref{lb-2}, 
    implying that this case occurs with arbitrarily low probability.  \\
    Hence, $P_{id}^r(Q,k)$ can be made arbitrarily small for large $n$. 
    Thus, we can identify the \cs\ under random coding as stated in the theorem when $\ww = \emptyset$. 
    
\end{proof}

For achievability of both Theorem~\ref{thm:both} (i) and Theorem~\ref{thm:auth} (ii), we use a similar encoding scheme.
Let $\tilde \vecx$ be an $|\cX|\log(n)$ length sequence consisting of $\log(n)$ repitions of each symbol in $\cX$. 
For Theorem~\ref{thm:both} (i), a $(2^{nR'},n')$, code $(F^{\mathsf{and}},\Pand)$ consists of a 
length-$n$ communication part and length $n'-n$ \cs\ identification part where $n$ is such that $n' = n + |\cX |\log(n)$. 
The communication part of a code is given in terms encoder of Lemma~\ref{lem:crcom-ach} $F_R$, $R = \frac{R'n'}{n}$, 
and the indetification part consists of the constant vector $\tilde \vecx$ as shown in Figure~\ref{fig:vec}.
Let $\Gamma$ be a random and uniformly choosen permutation of length $n'=n+|\cX|\log(n)$.
The encoder $F^{\mathsf{and}}(i) = \Gamma(F_{R}(i),\tilde \vecx),\ i \in \{1,\ldots,2^{nR}\}$. 
Note that the rate $R'=\frac{Rn}{n'}$ of the code is governed by $R$ for large block length.
For Theorem~\ref{thm:auth} (i), we use the same structure of the encoder but operate at a different rate $R'$. 
The encoder of a $(2^{nR'},n')$, code $(F^{\mathsf{or}},\Por)$ is given by 
$F^{\mathsf{or}}(i) = \Gamma(F_{R}(i),\tilde \vecx),\ i \in \{1,\ldots,2^{nR}\}$ ($R',R$ is different for 
$F^{\mathsf{and}}$ and $F^{\mathsf{or}}$).

\begin{figure}[ht]
    \centering
    \includegraphics[width=0.5\textwidth]{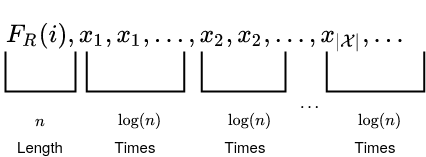}
    \caption{The vector $(F_R(i),\tilde \vecx)$}
    \label{fig:vec}
\end{figure}

Due to the shared randomness, the decoder knows the realisation of $F_R$ and $\Gamma$.
The decoder uses $\Gamma$ to get back the original ordering, i.e., to get $(\hat \vecy,\tilde \vecy) = \Gamma^{-1}(\vecy)$. 
Here, $\hat \vecy$ represents the vector corresponding to the first $n$ symbols and 
$\tilde \vecy$ represent the vector corresponding to the last $|\cX|\log(n)$ symbols of $\Gamma^{-1}(\vecy)$.  
If the \avcstate\ sequence during transmission is represented as $\vecs$, 
then let $\vecs_a = [\Gamma^{-1}(\vecs)]^{n}_1$ and $\vecs_b = [\Gamma^{-1}(\vecs)]_{n+1}^{n'}$ - this notation
is explained in the footnote\footnote{For a sequence $\vecy$, we use the notation $[\vecy]_a^b,\ (b>a)$ to
refer to the subsequence $(y_a,\ldots,y_b)$.}.

\begin{lemma} \label{lem:crboth-ach}
    When $\ww = \emptyset$,
    $$\Cranboth \geq \max_{P_X} \min_{W \in \wu} I(X;Y).$$
\end{lemma}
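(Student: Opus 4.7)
The plan is to combine the random-coding construction of Lemma~\ref{lem:crcom-ach} with the compound-state identification scheme of Lemma~\ref{lem:ad-id-ach}, patched together by the concatenate-and-permute encoder $F^{\mathsf{and}}(i)=\Gamma(F_R(i),\tilde\vecx)$ defined just before the lemma statement. Fix any rate $R<\max_{P_X}\min_{W\in\wu}I(X;Y)$ and a maximizing $P_X\in\pn{\cX}{n}$, and build $F_R$ as in Lemma~\ref{lem:crcom-ach}. Because the training overhead $|\cX|\log n$ is $o(n)$, the effective external rate $R'=Rn/n'\to R$, so it is enough to drive the average error to zero while maintaining rate $R$ on the internal length-$n$ codeword. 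The decoder, sharing $\Gamma$ and $F_R$ with the encoder, splits $\Gamma^{-1}(\vecy)=(\hat\vecy,\tilde\vecy)$, passes $(F_R,\hat\vecy)$ to the Lemma~\ref{lem:crcom-ach} decoder to obtain $\hat m$, passes $(\tilde\vecx,\tilde\vecy)$ to the Lemma~\ref{lem:ad-id-ach} decoder to obtain $\hat k$, and outputs $\pand(\vecy)=(\hat m,\sigma_{\hat k})$.

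The crucial structural step is that the uniformly random permutation $\Gamma$ forces any adversarial $\vecs\in\cS_k^{n'}$ with overall type $P_S\in\cP_k$ to produce nearly the same empirical state type on both sub-blocks. Specifically, conditioned on $\vecs$, the positions falling in the training block after permutation form a uniformly random $|\cX|\log n$-subset of $[n']$; a standard sampling-without-replacement concentration inequality therefore gives $\max_{s\in\cS_k}|P_{\vecs_b}(s)-P_S(s)|\le\epsilon_n$ with probability $1-o(1)$ for some $\epsilon_n\to 0$, and the analogous bound on $P_{\vecs_a}$ holds for the much longer communication sub-block. On this good event, the communication block is driven by an effective DMC $W_{P_{\vecs_a}}\in\wk\subseteq\wu$ and the training block by $W_{P_{\vecs_b}}\in\wk$.

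For the communication sub-block, this is exactly the setting of Lemma~\ref{lem:crcom-ach}: the induced channel lies in $\wu$, the codebook is $F_R$, and the same decoder analysis gives $\Pr[\hat m\neq m]\to 0$ uniformly in $\vecs$. For the identification sub-block, $\tilde\vecx$ has the full-support uniform input type; by the remark following Lemma~\ref{lb-2} applied to the closed disjoint sets $\wa$ and $\wb$ (disjoint because $\ww=\emptyset$), there is a uniform separation $\eta>0$: every pair $(Z,V)\in\wa\times\wb$ satisfies $\sup_{a,b}|P_X(a)(Z(b|a)-V(b|a))|>\eta$. Fixing the typicality slack $\epsilon$ in the Lemma~\ref{lem:ad-id-ach} decoder at any $\epsilon<\eta/4$ then rules out spurious matches in $\overline{\cW}_{3-k}$, while on the good permutation event the true channel $W_{P_{\vecs_b}}\in\wk$ yields $\epsilon$-typicality with probability $1-o(1)$, so $\hat k=k$ with probability $1-o(1)$. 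A union bound over the three bad events (permutation concentration, message decoding, state identification) drives $P_{\mathsf e}^{\mathsf r}(Q^{(n)},k)\to 0$ for $k=1,2$.

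The main obstacle is the permutation-concentration step, which must hold uniformly over every adversarial $\vecs$ and with deviation $\epsilon_n$ small enough to stay below the separation $\eta$. Since the training block has length only $|\cX|\log n$, the available concentration is of order $1/\sqrt{\log n}$; this is slow but still $o(1)$, which is precisely why a superconstant (yet sublinear) training block is chosen in the encoder design. Everything else is a routine synthesis of the two earlier lemmas via the union bound.
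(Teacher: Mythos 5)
Your proposal is correct and follows essentially the same route as the paper: the same concatenate-and-permute encoder $F^{\mathsf{and}}(i)=\Gamma(F_R(i),\tilde\vecx)$, with the decoder splitting $\Gamma^{-1}(\vecy)$ and invoking the Lemma~\ref{lem:crcom-ach} decoder for the message and the Lemma~\ref{lem:ad-id-ach} identification decoder (via Lemma~\ref{lb-2} and the permutation/type-concentration of Lemma~\ref{lem:type}) for the \cs, followed by a union bound and the observation that the $|\cX|\log n$ overhead costs no rate. You merely spell out the concentration and separation constants more explicitly than the paper does.
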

\begin{proof}
We use the encoding scheme described above and use $\Gamma^{-1}$ at the decoder $\Pand$, i.e., 
the decoder obtains $(\hat \vecy,\tilde \vecy) = \Gamma^{-1}(\vecy)$. By the method described in Lemma~\ref{lem:ad-id-ach}, 
one can identify the \cs\ as $G(\tilde \vecy)$ (with $\vecF$ in the lemma being the vector $\tilde \vecx$) correctly w.h.p. for large block length. 
Note that this encoding scheme of shuffling $\tilde \vecx$ is equivalent to sending a vector from the typical set 
of the uniform distrbituion over $\cX$ described in Lemma~\ref{lem:ad-id-ach}. \\

For any $R = \frac{n'R'}{n} < \max_{P_X} \min_{W \in \wu} I(X;Y)$, we use the same decoder $\Phi_R$ used in Lemma~\ref{lem:crcom-ach}
to decode the message. We obtain the message $\hat m = \Phi_R(\hat \vecy)$ correctly w.h.p. 
Thus, using the $(2^{n'R'},n')$ code, we can communicate at rate $R' = \frac{nR}{n'}$. For large block length, $R' \to R$.
\end{proof}

We now focus on proving achievability of Theorem~\ref{thm:auth}. We present two lemmas before going into 
the main proof. The following Lemma is a well-known result and can be found in \cite{skala2013hypergeometric}.
\begin{lemma} \label{lem:urn}
An urn contains $M$ white balls and $N-M$ black balls. If $n$ balls are drawn uniformly without replacement 
and $i$ represents the number of white balls drawn then, $\bbE[i] = n\frac{M}{N}$. 
Further, we can bound the deviations from the mean as shown,
\begin{align}
    \bbP[i \geq \bbE[i] + tn] &\leq e^{-2t^2n}, \\
    \bbP[i \leq \bbE[i] - tn] &\leq e^{-2t^2n}, \\
    \bbP[ |i - \bbE[i]| \geq tn] &\leq 2e^{-2t^2n}. 
\end{align}
\end{lemma}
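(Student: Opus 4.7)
The plan is to handle the mean and the deviation bounds separately, relying on two standard ingredients: (i) linearity of expectation with indicator variables for the mean, and (ii) Hoeffding's classical reduction from sampling without replacement to sampling with replacement, followed by the usual Hoeffding/Chernoff argument, for the tail bounds.

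For the expectation I would let $I_j$ be the indicator that the $j$-th drawn ball is white, so that $i = \sum_{j=1}^n I_j$. A symmetry argument shows that for each fixed $j \in \{1,\dots,n\}$ the $j$-th drawn ball is uniform over the $N$ balls in the urn (there is a bijection between orderings in which a given ball sits in position $j$ and orderings in which any other specified ball does), hence $\bbP[I_j = 1] = M/N$. Linearity of expectation then yields $\bbE[i] = nM/N$ directly.

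For the concentration bounds, I would invoke Hoeffding's 1963 theorem stating that for any continuous convex $f$,
\begin{equation}
\bbE[f(I_1 + \cdots + I_n)] \le \bbE[f(B_1 + \cdots + B_n)],
\end{equation}
where $B_1,\dots,B_n$ are i.i.d.\ Bernoulli$(M/N)$. Applied to $f(x) = e^{\lambda x}$ for $\lambda > 0$, this reduces the moment generating function of $i$ to that of a sum of $n$ i.i.d.\ Bernoulli variables bounded in $[0,1]$. The standard Chernoff-Hoeffding argument (optimizing $\lambda$) then gives
\begin{equation}
\bbP[i \ge \bbE[i] + tn] \le e^{-2t^2 n}.
\end{equation}
The lower-tail bound $\bbP[i \le \bbE[i] - tn] \le e^{-2t^2 n}$ follows by applying exactly the same argument to the complementary count $n - i$ of black balls, which is also hypergeometric (swap $M$ with $N - M$). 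The two-sided statement is then a union bound.

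The main obstacle is really only cosmetic: the non-independence of the $I_j$'s rules out a direct Hoeffding-on-independent-sums application, and the cleanest fix is precisely Hoeffding's coupling/convexity lemma for sampling without replacement. Since the result is quoted here as standard, I would simply cite \cite{skala2013hypergeometric} for this reduction rather than reconstruct the coupling, and check only that the two-sided inequality follows from the one-sided bounds by summation.
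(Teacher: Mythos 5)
Your proposal is correct, and it matches the paper's treatment: the paper gives no proof of this lemma, simply quoting it as a well-known fact and citing the same reference, while your sketch (indicators for the mean, Hoeffding's convexity reduction to i.i.d.\ Bernoullis for the tails, complementation for the lower tail, union bound for the two-sided statement) is exactly the standard argument behind that citation.
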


Using Lemma~\ref{lem:urn}, we obtain the following.
\begin{lemma} \label{lem:type}
    Let random variable $S$ be distributed as $P_{\vecs}$. Then
    \begin{align}
        \bbP( \vecs_a \notin \tau_S^{\eta}) &\leq 2\max\{|\cS_1|,|\cS_2|\}n^{-2\eta^2|\cX|} \ ,\\
        \bbP( \vecs_b \notin \tau_S^{\eta}) &\leq 2\max\{|\cS_1|,|\cS_2|\}e^{-2\eta^2n}.
    \end{align}
\end{lemma}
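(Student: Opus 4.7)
The plan is to apply the hypergeometric tail bound of Lemma~\ref{lem:urn} on a per-symbol basis and then union-bound over the state alphabet, producing the two claimed deviation bounds for $\vecs_a$ and $\vecs_b$.

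The key observation is that because $\Gamma$ is a uniform random permutation on $\{1,\dots,n'\}$ with $n' = n + |\cX|\log n$, its inverse is also uniform; consequently, the sub-sequence $\vecs_a$ (respectively $\vecs_b$) consists of the state entries of $\vecs$ read off at a uniformly random subset of positions of the relevant size, with the complementary positions giving the other sub-sequence. Fix any symbol $s\in\cS_k$ and let $i$ denote its occurrence count in the sub-sequence under consideration. Identifying copies of $s$ in $\vecs$ as ``white balls'' in an urn of $N=n'$ total balls with $M=n' P_{\vecs}(s)$ whites, the count $i$ is exactly hypergeometric with mean equal to (sample size)~$\cdot\, P_{\vecs}(s)$, so Lemma~\ref{lem:urn} applies directly.

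Next, I invoke Lemma~\ref{lem:urn} with deviation parameter $t=\eta$. The event that the empirical frequency differs from $P_{\vecs}(s)$ by more than $\eta$ is exactly $|i-\bbE i|>\eta\cdot(\text{sample size})$, which is bounded by $2\exp(-2\eta^2\cdot(\text{sample size}))$. A union bound over the at most $\max\{|\cS_1|,|\cS_2|\}$ distinct symbols in the support of $\vecs$ introduces the leading factor $2\max\{|\cS_1|,|\cS_2|\}$ in each of the two bounds. Reading off the resulting Hoeffding exponents at the two different sub-sequence lengths produces the polynomial factor $n^{-2\eta^2|\cX|}$ claimed for $\vecs_a$ (the $-2\eta^2|\cX|\log n$ exponent collapses to a power of $n$) and the exponential factor $e^{-2\eta^2 n}$ claimed for $\vecs_b$.

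The only mild bookkeeping is verifying that the symbol counts are genuinely hypergeometric under a uniformly random $\Gamma$. This follows because a uniform permutation of indices induces a uniformly random choice of subset for the positions assigned to $\vecs_a$ (and its complement for $\vecs_b$), which is exactly the sampling-without-replacement model of Lemma~\ref{lem:urn}. I do not expect any deeper obstacle; the entire argument is a direct combinatorial reduction to Lemma~\ref{lem:urn} combined with a union bound.
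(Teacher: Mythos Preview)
Your approach is exactly the one the paper uses: the paper's entire proof reads ``Since $\Gamma$ shuffles randomly and uniformly, this follows directly from the definition of typicality and Lemma~\ref{lem:urn},'' and you have simply spelled out that reduction (per-symbol hypergeometric count plus a union bound over the state alphabet).

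There is, however, one concrete slip in your last step. You write that the sample size for $\vecs_a$ produces the exponent $-2\eta^2|\cX|\log n$ and hence the polynomial factor $n^{-2\eta^2|\cX|}$, while $\vecs_b$ gives $e^{-2\eta^2 n}$. But recall the definitions: $\vecs_a=[\Gamma^{-1}(\vecs)]_1^n$ has length $n$, and $\vecs_b=[\Gamma^{-1}(\vecs)]_{n+1}^{n'}$ has length $n'-n=|\cX|\log n$. Applying Lemma~\ref{lem:urn} with the correct sample sizes therefore yields $2e^{-2\eta^2 n}$ for $\vecs_a$ and $2e^{-2\eta^2|\cX|\log n}=2n^{-2\eta^2|\cX|}$ for $\vecs_b$, i.e., the two right-hand sides in the displayed statement are interchanged. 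Your argument is otherwise sound; the discrepancy is a labeling issue (and is harmless for the later applications, since both bounds vanish as $n\to\infty$), but you should not claim to have derived the bounds exactly as written without flagging this swap.
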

\begin{proof}
    Since $\Gamma$ shuffles randomly and uniformly, this follows directly from the definition of typicality 
    and Lemma~\ref{lem:urn}. The $\max\{.\}$ operator is present to ensure that the inequality is valid when $\vecs$ belongs 
    to either of the two \cs .
\end{proof}
\vspace{20pt}
    
Lemma~\ref{lem:type} shows that the \avcstate\ sequence vector corresponding to the identification part 
and the communication part have roughly the same type as the entire vector $\vecs$. 

\begin{lemma}
\begin{equation}
    \Cranauth \geq \max_{P_X} \min_{W \in \ww} I(X;Y) \label{eq:crauth-ach}
\end{equation}
\end{lemma}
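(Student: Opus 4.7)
\end{lemma}
\begin{proof}
The plan is to reuse the two-part encoder of Lemma~\ref{lem:crboth-ach}---a communication block produced by the codebook $F_R$ of Lemma~\ref{lem:crcom-ach} concatenated with the constant identification suffix $\tilde\vecx$, and jointly shuffled by the random permutation $\Gamma$---but to modify the decoder so that it first attempts \cs\ identification and falls back to plain message decoding whenever identification is inherently ambiguous.

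First I would dispense with the case $\ww=\emptyset$, where the RHS of \eqref{eq:crauth-ach} is $+\infty$. Here the identification scheme of Lemma~\ref{lem:ad-id-ach} already declares the correct \cs\ with vanishing error; pairing it with a trivial encoder that outputs the same vector $\vecF\in\tau_X$ regardless of the message makes every finite rate achievable, since the decoder's output $\sigma_k$ always lies in $\{m,\sigma_k\}$ and hence never belongs to $\cE_{m,k}$.

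For the main case $\ww\neq\emptyset$, I would fix $R<\max_{P_X}\min_{W\in\ww}I(X;Y)$, take $P_X$ achieving the max, and choose $\delta>0$ with $R+\delta<\min_{W\in\ww}I(X;Y)$. The encoder $F^{\mathsf{or}}(i)=\Gamma(F_R(i),\tilde\vecx)$ is imported verbatim from Lemma~\ref{lem:crboth-ach}. After recovering $(\hat\vecy,\tilde\vecy)=\Gamma^{-1}(\vecy)$, the decoder $\Por$ would, for a small $\epsilon>0$ to be fixed later, check for each $k\in\{1,2\}$ whether there is some $Z_k\in\wk$ with $(\tilde\vecx,\tilde\vecy)\in\tau_{XY}^{\epsilon}$ for $P_{XY}=U\times Z_k$, where $U$ is uniform on $\cX$. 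If exactly one $k$ qualifies, it outputs $\sigma_k$; otherwise it runs $\Phi_R$ of Lemma~\ref{lem:crcom-ach} on $\hat\vecy$. Conditioning on an adversarial $\vecs\in\cS_k^n$ and letting $P_S$ be its empirical distribution, Lemmas~\ref{lem:type} and \ref{lb-1} ensure that with high probability both $\tilde\vecy$ and $\hat\vecy$ are outputs of channels close to $W_{P_S}\in\wk$. Two cases then arise: (A)~if $W_{P_S}\in\ww$, both $Z_1,Z_2$ are witnessed, $\Phi_R$ is invoked, and since the effective channel on $\hat\vecy$ is close to a channel in $\ww$ with $R+\delta<\min_{W\in\ww}I(X;Y)$, the random-coding argument of Lemma~\ref{lem:crcom-ach} (specialised to the family $\ww$) gives vanishing decoding error; (B)~if $W_{P_S}\in\wk\setminus\overline{\cW}_{3-k}$, no witness $Z_{3-k}$ exists for small enough $\epsilon$ and the decoder correctly declares $\sigma_k$.

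The hard part will be choosing a single $\epsilon$ that works uniformly over all adversary strategies, since in Case~B the separation from $\overline{\cW}_{3-k}$ depends on the particular $P_S$. I would handle this by a compactness-and-continuity argument: let $\cA^{\epsilon}$ denote the set of channels within $\epsilon$ of both $\wa$ and $\wb$; then $\cA^{\epsilon}$ shrinks to $\ww$ as $\epsilon\downarrow 0$, and by continuity of mutual information on the compact simplex of channels $\min_{W\in\cA^{\epsilon}}I(X;Y)\to\min_{W\in\ww}I(X;Y)$. Picking $\epsilon$ small enough to force $R+\delta<\min_{W\in\cA^{\epsilon}}I(X;Y)$ places every Case-A instance inside the regime where Lemma~\ref{lem:crcom-ach}'s arguments succeed, while outside $\cA^{\epsilon}$ the empirical channel is $\epsilon$-separated from $\overline{\cW}_{3-k}$, cleanly yielding Case~B. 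Since the code rate satisfies $R'=Rn/n'\to R$ as $n\to\infty$, every $R<\max_{P_X}\min_{W\in\ww}I(X;Y)$ is then achievable, completing the plan.
\end{proof}
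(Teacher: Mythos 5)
Your proposal is correct and follows essentially the same route as the paper: the same permuted two-part encoder, the same reliance on Lemmas~\ref{lb-1}, \ref{lb-2}, \ref{lem:type} and the random-coding communication decoder of Lemma~\ref{lem:crcom-ach}, and the same continuity argument via an $\epsilon$-neighbourhood of $\ww$ (the paper's $\cW^+_{\epsilon}$ plays the role of your $\cA^{\epsilon}$). The only difference is cosmetic: the paper attempts message decoding first and falls back to \cs\ identification when the communication decoder declares $\bot$, whereas you test identifiability first and fall back to message decoding, so the two case analyses are mirror images of each other.
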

\begin{proof} 
We use the encoding scheme described after Lemma~\ref{lem:crcom-ach}. 
We specify the rate $R$ of communication corresponding to the communication part later. 
Let the encoder-decoder pair for the $(2^{n'R'},n')$, $R' = nR/n'$ code be $(F^{\mathsf{or}},\Por)$. 
Note that if $\ww = \emptyset$ then we can use the adversary identification scheme as described in 
Lemma~\ref{lem:ad-id-ach} to achieve infinite capacity using $(\tilde \vecx, \tilde \vecy)$. 
If $\ww \neq \emptyset$, then we first use a 
communication decoder $\tilde \Phi_R: \cY^n \to \cM \cup \bot$ described below to decode the message.

The codebook for a $(2^{n'R'},n')$ code is obtained by uniformly
and independently sampling $M=2^{n'R'}$ vectors ($\vecX_1,\ldots,\vecX_M$)  in $\tau_{X}$ with some $P_X \in \pn{\cX}{n}$ 
and $F^{\mathsf{or}}(i) = (\vecX_i,\tilde \vecx)$. 
The decoder outputs $\tilde \Phi_{R}(\vecy)=i \in \cM$ if there is a unique $i$ for which $I(X;Y) \geq R + \delta$ where $P_{XY} = P_{\vecX_i,\vecy}$, 
and $\tilde \Phi_{R}(\vecy)= \bot$ if no such $i$ exists.

We show that the communication decoder correctly decodes the message w.h.p. (with high probability) for a certain class of adversarial attacks. 
For other attacks, we show that the decoder may output the correct message or output $\bot$ but it would not decode 
to a wrong message w.h.p. On receiving an error ($\bot$), a second decoder - \cs\ decoder - would be used to identify the \cs . 

Suppose the \cs\ is $\sigma_k$ and the adversary operates with \avcstate\ sequence $\vecs \in \cS_k^n$. 
Let dummy random variable $S\sim P_{\vecs}$. Let $|| P_X ||$ denote the max norm of a distribution - $\max_x P_X(x)$. \\
We use $\hat \vecy$ (defined in the text following Lemma~\ref{lem:ad-id-ach}) and $\tilde \Phi_R$ for decoding the message. \\
Define the set $\cP_0 = \{P \in \cP_1 \cup \cP_2: W_{P} \in \ww \}$. 
Let $\cP^+_{\epsilon} = \{P \in \cP_1 \cup \cP_2: \exists P' \in \cP_0, \ ||P-P'|| \leq \epsilon \}$ 
and let $\cW^-_{\epsilon} = \{W_P: P \in \cP^+_{\epsilon} \}$. 
Also, define  $\cW^+_{\epsilon} = \overline{\cW^-_{\epsilon}}$.
Note that $\cW^+_{\epsilon} = \ww$ when $\epsilon = 0$ ($\ww$ is already a closed convex set).\\
Let $R < \min_{W \in \cW^+_{\epsilon}} I(X;Y)$ and let $\delta>0$ be small enough such that 
$R+\delta < \min_{W \in \cW^+_{\epsilon}} I(X;Y)$. \\ 
\\
\textbf{Case (A):} 
$P_S \in \cP_0$\\ 
W.h.p., $\vecs_a \in \tau_{S}^{\eta}$ by Lemma~\ref{lem:type} for sufficiently large $n$ 
- i.e., $||P_{\vecs_a} - P_S || \leq \eta$ w.h.p. Set the value of $\eta < \epsilon$. 
Thus, it is equivalent to communication over the expanded CAVC $\cW^+_{\epsilon}$ (i.e.,
closure of both families of channels for the CAVC is same and equal to $\cW^+_{\epsilon}$) 
so we get arbitrarily small error in message decoding. 
In particular, let $\epsilon = 3 \eta$.\\
\\
\textbf{Case (B):}
$P_S \notin \cP_0$\\
We further divide this case into two sub-cases: \\
i) $P_{\vecs_a} \in \cP_{\epsilon}^+$: Similar to Case (A), message decoding is correct and successful w.h.p. \\
ii) $P_{\vecs_a} \notin \cP_{\epsilon}^+$: Note that since $\vecs_a \in \tau_S^{\eta}$ whp and 
$P_{\vecs_a} \notin \cP_{\epsilon}^+$, we can see that that $P_{\vecs_b} \notin \cP_0$ whp. 
In fact, the following is also true 
$$ \forall P \in \cP_1 \cup \cP_2, \ ||P - P_{\vecs_b}|| \leq \frac{\eta}{2} \  \implies \ P \notin \cP_0. $$
Also, note that \eqref{eq:part-2} still remains valid even if $W_{P_{\vecs_b}} \notin \cW^+_{\epsilon}$. In other words,
for any attack vector $\vecs_b$, we still have $\eqref{eq:part-2}$ as it is a very low probability event that
a codeword which wasn't transmitted has high mutual information with the received vector $\hat \vecy$.  
Hence, w.h.p. the message decoder  would not output a wrong message - it may either decode correctly or declare $\bot$. 
If the decoder outputs $\bot$, then we identify the adversary by $G(\tilde \vecy)$- since $P_{\vecs_b} \notin \cP_0$, 
Lemma~\ref{lb-2} holds so the proof of achievability of Lemma~\ref{lem:ad-id-ach} holds as well.

Since $\epsilon$ can be made arbitrarily small, the lemma follows.
\end{proof}
\vspace{20pt}
\subsection{Achievability Proofs Under Deterministic Coding} \label{subs:det-ach}

Let, for channels $W:\cX\times\cS \to \cY$,
\begin{equation}
    \mathcal{C}_{\eta} = \{  P_{XSY}: D(P_{XSY} || P_X \times P_S \times W) \leq \eta , P_S \in \cP_1 \cup \cP_2 \},
\end{equation}
and let $$I(P) = \min_{W \in \wu, P_X = P} I(X;Y).$$

The following two lemmas establish the fact that the capacity expressions 
are indeed positive when the claimed necessary conditions are met. 
\begin{lemma}
If the channel is non-any-symmetrizable, then $\min_{W \in \wu} I(X;Y) > 0$ 
for all $P_X$ such that $P_X(x) > 0 \ \forall x \in \cX$.
\end{lemma}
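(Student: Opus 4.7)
The plan is proof by contradiction. Suppose $P_X$ has full support and some $W^\star \in \wu$ attains $I(X;Y)=0$. Such a minimizer exists because $\wu$ is compact --- each $\overline{\cW}_k$ is a convex polytope in the finite-dimensional simplex of channels $\cX\to\cY$, since $\cS_k$ is finite --- and because, for fixed $P_X$, the map $W\mapsto I(X;Y)$ is continuous. The target is then to use $W^\star$ to construct a symmetrizing kernel, contradicting the non-any-symmetrizability assumption.

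The key observation is that $I(X;Y)=0$ together with full support of $P_X$ forces $X\perp Y$, i.e.\ $W^\star(\cdot\mid x)$ is the same distribution on $\cY$ for every $x\in\cX$. Since $W^\star \in \overline{\cW}_k$ for some $k\in\{1,2\}$, I would write
\[
W^\star(y\mid x) \;=\; \sum_{s\in\cS_k} \lambda(s)\, W(y\mid x,s)
\]
for some probability distribution $\lambda$ on $\cS_k$, using that the convex closure of the finite family $\{W(\cdot\mid\cdot,s): s\in\cS_k\}$ consists exactly of such mixtures.

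With $\lambda$ in hand, I would define the constant-in-$x$ channel $U(s\mid x)\defineqq \lambda(s)$ and verify \eqref{eq:cissym-csiszar} for this $k$: for all $x,x'\in\cX$ and $y\in\cY$,
\[
\sum_s U(s\mid x')W(y\mid x,s) \;=\; \sum_s \lambda(s)W(y\mid x,s) \;=\; W^\star(y\mid x) \;=\; W^\star(y\mid x') \;=\; \sum_s U(s\mid x)W(y\mid x',s).
\]
Hence the CAVC is $\cS_k$-symmetrizable, therefore cis-symmetrizable, therefore any-symmetrizable, contradicting the hypothesis. I do not anticipate a real obstacle; the entire content is the observation that once the full-support hypothesis converts $I(X;Y)=0$ into $x$-independence of $W^\star(\cdot\mid x)$, the mixing weights in the convex-closure representation of $W^\star$ serve directly as an $x$-independent symmetrizing kernel.
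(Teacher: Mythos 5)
Your proposal is correct and follows essentially the same route as the paper: both convert $I(X;Y)=0$ under a full-support $P_X$ into $x$-independence of the mixture channel and then use the mixing distribution as a constant-in-$x$ kernel $U(\cdot\mid x)$ satisfying \eqref{eq:cissym-csiszar}, contradicting non-cis-symmetrizability. The only cosmetic difference is that the paper phrases the minimizer directly as a state distribution $P_S \in \cP_1 \cup \cP_2$ rather than first extracting $W^\star \in \wu$ and then its convex-combination weights.
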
 
\begin{proof}
Suppose the statement is false, then there exists $P_X$ and $P_S \in \cP_1 \cup \cP_2$ for which $I(X;Y)=0$.
Hence, there exists distribution $P_{XSY} \in \cC_0$ such that $X$ and $Y$ are independent, 
i.e., $P_{Y|X}(y|x) = \sum_{s} W(y|x,s)P_S(s) = P_Y(y) \ \forall x,y$.
The C-AVC is cis-symmetrizable in a trivial manner using $U(.|x) = P_S(.)$ in \eqref{eq:cissym-csiszar}, 
a contradiction.
\end{proof}

\begin{lemma}
If the channel is non-trans-symmetrizable, then $\min_{W \in \ww} I(X;Y) > 0$ 
for all $P_X$ such that $P_X(x) > 0 \ \forall x \in \cX$.
\end{lemma}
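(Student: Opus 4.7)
The plan is to argue by contrapositive: suppose $\min_{W \in \ww} I(X;Y) = 0$ for some $P_X$ with full support, and show that the \cavc\ must then be trans-symmetrizable.

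First, I would pick a minimizer $W^* \in \ww$ achieving $I(X;Y) = 0$ under input distribution $P_X$. Since $P_X(x) > 0$ for every $x \in \cX$ and the mutual information vanishes, $Y$ must be independent of $X$, i.e., $W^*(y|x) = P_Y(y)$ for all $x \in \cX$ and $y \in \cY$.

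Next, I would exploit $W^* \in \wa$ and $W^* \in \wb$. Since $\wa$ and $\wb$ are the convex closures of the finite families $\{W(\cdot|\cdot,s): s \in \cS_1\}$ and $\{W(\cdot|\cdot,s): s \in \cS_2\}$ respectively, there exist $P_1 \in \cP_1$ and $P_2 \in \cP_2$ with
\begin{equation}
\sum_{s \in \cS_1} P_1(s)\,W(y|x,s) \;=\; P_Y(y) \;=\; \sum_{s \in \cS_2} P_2(s)\,W(y|x,s) \qquad \forall x \in \cX,\ y \in \cY.
\end{equation}

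Finally, I would exhibit the trans-symmetrizing pair as constant channels: define $U(s|x') \defineqq P_1(s)$ for all $x' \in \cX$ and $V(s|x) \defineqq P_2(s)$ for all $x \in \cX$. Then for every $x,x' \in \cX$ and $y \in \cY$,
\begin{equation}
\sum_{s} U(s|x')\,W(y|x,s) \;=\; \sum_{s} P_1(s)\,W(y|x,s) \;=\; P_Y(y) \;=\; \sum_{s} P_2(s)\,W(y|x',s) \;=\; \sum_{s} V(s|x)\,W(y|x',s),
\end{equation}
which is precisely \eqref{eq:transsym}. This contradicts non-trans-symmetrizability and completes the proof.

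The argument is essentially a one-line unwinding of definitions, so no step is really an obstacle; the only thing to be careful about is invoking the fact that $\wa$ and $\wb$ consist exactly of convex combinations of their finitely many generators (so the closure is already achieved), which legitimizes extracting the distributions $P_1, P_2$ rather than limits of such distributions.
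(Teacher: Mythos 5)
Your proof is correct and matches the paper's argument essentially verbatim: both extract $P_1 \in \cP_1$ and $P_2 \in \cP_2$ representing the $X$-independent channel $W^* \in \ww$ and use the constant channels $U(\cdot|x)=P_1(\cdot)$, $V(\cdot|x)=P_2(\cdot)$ to witness trans-symmetrizability. The only cosmetic difference is that the paper explicitly notes the trivial case $\ww = \emptyset$, which your contrapositive handles implicitly since a minimizer $W^*$ can only exist when $\ww \neq \emptyset$.
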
 
\begin{proof}
If $\ww = \emptyset$ then the lemma is trivially true.
If $\ww \neq \emptyset$ then,
let $\cP_0 = \{P \in \cP_1 \cup \cP_2: W_{P} \in \ww \}$. 
Suppose the statement is false, then there exists 
$P_X$ and $P_S \in \cP_0$ for which $I(X;Y)=0$.
Hence, there exists distribution $P_{XSY} \in \cC_0$ such that $X$ and $Y$ are independent, 
i.e., $P_{Y|X}(y|x) = \sum_{s} W(y|x,s)P_S(s) = P_Y(y) \ \forall x,y$.
If $P_S \in \cP_k$, then there exists $P_{S'} \in \cP_{3-k}$ such that
$W_{P_S} = W_{P_{S'}}$ as $\ww \neq \emptyset$. 
The C-AVC is trans-symmetrizable in a trivial manner using $U(.|x) = P_S(.)$ and 
$V(.|x) = P_{S'}(.)$ in \eqref{eq:transsym}, a contradiction.
\end{proof}
\vspace{20pt}

For the achievability arguments, we describe some lemmas below. 
We first present a lemma based on \cite[Lemma 3]{avc-csiszar-narayan}.

\begin{lemma} \label{lem:nice-code}
For any $\epsilon > 0$, $n \geq n_0(\epsilon)$, $N\geq \exp(n\epsilon)$, 
and type $P$, there exists codewords $\vecx_1,\vecx_2,..,\vecx_N$ in $\mathcal{X}^n$, 
each of type $P$, such that for every $\vecx \in \mathcal{X}^n$, $\vecs \in \mathcal{S}_1^n \cup \mathcal{S}_2^n$, 
and every joint type $P_{XX'S}$ (with $P_S \in \cP_1 \cup \cP_2$), upon setting $R = \frac{1}{n} \log N$, we have:
\begin{align}\label{eq:nice-code-1}
\left| \left\{ j:(\vecx,\vecx_j,\vecs)\in \tau_{XX'S}  \right\} \right|  &\leq \exp \left\{ n \left( |R - I(X';XS)|^+ + \epsilon \right) \right\}; \\
\label{eq:nice-code-2}
\frac{1}{N}\left| \left\{ i:(\vecx_i,\vecs)\in \tau_{XS}  \right\} \right| &\leq \exp (-n\epsilon/2 ) \text{, if } I(X;S)>\epsilon; \\
\label{eq:nice-code-3}
\frac{1}{N}\left| \left\{ i:(\vecx_i,\vecx_j,\vecs)\in \tau_{XX'S} \text{ for some $j \neq i$ } \right\} \right| &\leq \exp (-n\epsilon/2) \text{, if }  I(X;X'S) - |R - I(X';S)|^+ > \epsilon.
\end{align}
\end{lemma}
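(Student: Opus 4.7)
The statement is a direct extension of \cite[Lemma 3]{avc-csiszar-narayan}: the only change is that $\vecs$ is allowed to range over $\cS_1^n \cup \cS_2^n$ and $P_S$ over $\cP_1 \cup \cP_2$. Since this just enlarges the state alphabet from $\cS$ to $\cS_1 \cup \cS_2$ (with the mild constraint that all coordinates of $\vecs$ lie in one of the two subsets), Csisz\'ar--Narayan's argument carries over. The plan is the standard probabilistic-method construction: draw $\vecX_1,\ldots,\vecX_N$ independently and uniformly from $T_P$, bound the probability that each of (\ref{eq:nice-code-1})--(\ref{eq:nice-code-3}) fails for a \emph{fixed} choice of $(\vecx,\vecs,P_{XX'S})$, and then union bound over all such choices, exploiting the fact that there are only polynomially many joint types while the universe of $(\vecx,\vecs)$ pairs is only singly exponential.

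For (\ref{eq:nice-code-1}), fix $\vecx,\vecs$ of joint type $P_{XS}$ and write $Z = |\{j:(\vecx,\vecX_j,\vecs)\in\tau_{XX'S}\}|$. The type-class cardinality estimates give $\mu := \mathbb{E}[Z] \leq \mathrm{poly}(n)\exp\{n(R - I(X';XS))\}$. I would split into two regimes. When $R > I(X';XS)$ the target is $\mu\exp(n\epsilon)$, and the multiplicative Chernoff bound $\Pr(Z \geq (1+\delta)\mu)\leq (e^{\delta}/(1+\delta)^{1+\delta})^{\mu}$ with $1+\delta \sim \exp(n\epsilon/2)$ gives a doubly-exponentially small failure probability once $\mu \geq 1$. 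When $R \leq I(X';XS)$, $\mu$ is at most polynomial, the target is $\exp(n\epsilon)$, and the binomial tail bound $\Pr(Z\geq k) \leq (e\mu/k)^{k}$ with $k = \exp(n\epsilon)$ yields $\exp(-\Omega(n\epsilon)\exp(n\epsilon))$, again doubly exponentially small. Inequalities (\ref{eq:nice-code-2}) and (\ref{eq:nice-code-3}) are handled by essentially the same Chernoff computation: the hypotheses $I(X;S) > \epsilon$ in (\ref{eq:nice-code-2}) and $I(X;X'S) - |R - I(X';S)|^+ > \epsilon$ in (\ref{eq:nice-code-3}) are precisely what is needed to push the relevant expected count a factor $\exp(n\epsilon)$ below the target, so the same Chernoff bound yields a doubly exponential failure probability.

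The final step is the union bound. There are at most $(n+1)^{|\cX|^2|\cS_1\cup\cS_2|}$ joint types $P_{XX'S}$ and at most $|\cX|^n\cdot\big(|\cS_1|^n+|\cS_2|^n\big)$ pairs $(\vecx,\vecs)\in\cX^n\times(\cS_1^n\cup\cS_2^n)$, the first polynomial and the second singly exponential. Since each individual failure event is doubly exponentially small, the union bound still gives total failure probability $o(1)$ for $n$ large. Hence a codebook realizing all three bounds simultaneously exists, which proves the lemma.

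\textbf{Main obstacle.} The delicate point is (\ref{eq:nice-code-1}): Markov's inequality only gives a singly-exponentially small failure probability and therefore cannot survive the singly-exponential union bound over $(\vecx,\vecs)$. One must use the sharper binomial-tail / Chernoff estimates described above to obtain doubly-exponential decay. The hypothesis $N \geq \exp(n\epsilon)$ is essential here: it keeps the small-$\mu$ regime for (\ref{eq:nice-code-2})--(\ref{eq:nice-code-3}) safely above the noise level $\exp(-n\epsilon/2)$ built into the bounds.
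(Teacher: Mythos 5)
Your proposal is correct and takes essentially the same approach as the paper: the paper's entire proof is the observation in your first two sentences, namely that \cite[Lemma 3]{avc-csiszar-narayan} applies directly once the state alphabet is enlarged to $\cS_1 \cup \cS_2$, so the lemma follows by citation. The remainder of your write-up is a sound reconstruction of Csisz\'ar--Narayan's own random-selection argument rather than a different route; the one point you gloss over is that in \eqref{eq:nice-code-3} the indicators are dependent across $i$ (each event involves all other codewords), which Csisz\'ar and Narayan handle with a conditional-expectation large-deviation bound rather than a plain independent-case Chernoff bound.
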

\begin{proof}
One can directly use \cite[Lemma 3]{avc-csiszar-narayan} to get the above result for a wider class of attacks by letting $\vecs \in (\cS_1 \cup \cS_2)^n$.
\end{proof}

\vspace{20pt}

\begin{lemma} \label{lem:cdboth-ach}
If the \cavc\ is non-any-symmetrizable and $\ww = \emptyset$ then 
$$ \Cdetboth \geq \max_{P_X} \min_{W \in \wu} I(X;Y).$$
\end{lemma}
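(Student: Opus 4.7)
The plan is to follow the Csisz\'ar--Narayan style argument for deterministic AVC achievability, lifted to the \cavc\ using the decoder $\pand$ sketched in Section~\ref{sec:proofs}. Fix $\delta>0$ and a type $P_X$ achieving (up to $\delta$) the maximum in $\max_{P_X} \min_{W \in \wu} I(X;Y)$. Set $R = \max_{P_X} \min_{W \in \wu} I(X;Y) - 2\delta$ and $N = \lceil \exp(nR) \rceil$, and invoke Lemma~\ref{lem:nice-code} to obtain codewords $\vecx_1,\ldots,\vecx_N$ of type $P_X$ satisfying \eqref{eq:nice-code-1}--\eqref{eq:nice-code-3} with a small parameter that will eventually be taken smaller than $\eta$. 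Apply the decoder $\pand$ with threshold $\eta$ to be fixed later.

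The error analysis has three pieces. First, when message $m$ is transmitted and the adversary picks $\vecs\in\cS_k^n$, a standard typicality/concentration argument shows that w.h.p.\ $P_{\vecx_m,\vecs,\vecy}\in\cC_\eta$, so condition~(1) of the decoder is met for $(m,\sigma_k)$. Second, for the correctness of the \cs\ label one must rule out the existence of some $\vecs''\in\cS_{3-k}^n$ with $P_{\vecx_m,\vecs'',\vecy}\in\cC_\eta$; marginalizing such a joint type along $\cX\times\cY$ would produce effective channels in both \wa\ and \wb\ that are $O(\eta)$-close. Invoking Lemma~\ref{lb-2} (the disjointness-with-a-gap statement guaranteed by $\ww=\emptyset$) this is impossible once $\eta$ is chosen strictly smaller than the gap. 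Third, and the heart of the proof, is ruling out confusion with any $j\neq m$: if the decoder could also assign $(j,\sigma_{k'})$ for some $k'$, there would exist $\vecs'\in\cS_{k'}^n$ with $P_{\vecx_j,\vecs',\vecy}\in\cC_\eta$ and the two mutual-information inequalities $I(XY;X'|S)\le\eta$ would hold simultaneously in both directions (swapping the roles of $\vecx_m$ and $\vecx_j$).

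This last event is handled by a case analysis depending on whether $\vecs'$ lies in the same $\cS_k^n$ as $\vecs$ or in $\cS_{3-k}^n$. In the ``cis'' case ($\vecs'\in\cS_k^n$), the joint type $P_{XX'SY}$ forced by both decoder conditions would, on letting $\eta\to 0$, provide a channel $U:\cX\to\cS_k$ satisfying \eqref{eq:cissym-csiszar}, contradicting non-cis-symmetrizability; in the ``trans'' case ($\vecs'\in\cS_{3-k}^n$) the same limit would provide a pair $(U,V)$ satisfying \eqref{eq:transsym}, contradicting non-trans-symmetrizability. Because the symmetrizability conditions are closed in the relevant polytopes, the non-symmetrizability hypotheses give a strictly positive slack, so for $\eta$ small enough no such $P_{XX'SY}$ (equivalently, $P_{XX'SS'Y}$ in the trans case) exists. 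The probability that, nonetheless, a competing $\vecx_j$ is empirically close to such a forbidden joint type is bounded by combining the codebook estimates \eqref{eq:nice-code-1}--\eqref{eq:nice-code-3} with the standard polynomial-in-types, exponential-in-divergence count used in Section~\ref{subs:ran-ach}; the $R$-sized list produced by \eqref{eq:nice-code-1} is dominated by the exponentially small probability from the typicality bound, since the slack $\delta$ in the rate choice absorbs the $|R-I(X';S)|^+$ terms.

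The main obstacle is the continuity/limit step that converts the ``$\eta$-approximate'' decoder coincidence into the exact equations \eqref{eq:cissym-csiszar} or \eqref{eq:transsym}, together with the bookkeeping that isolates the trans case from the cis case cleanly while working on the concatenated alphabet $\cS_1\cup\cS_2$ allowed by Lemma~\ref{lem:nice-code}. Once this is in place, taking $\eta\to 0$ after $n\to\infty$ and then $\delta\to 0$ gives the claimed lower bound, and Lemma~\ref{lem:rcom-conv} together with $\Cdetboth\le\Cranboth$ matches it from above.
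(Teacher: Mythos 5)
Your proposal is correct and follows essentially the same route as the paper: the same decoder $\pand$, the codebook from Lemma~\ref{lem:nice-code}, a Csisz\'ar--Narayan--style error analysis, disambiguation of competing codewords split into cis and trans cases via a continuity/gap argument (the paper packages this as Lemmas~\ref{lem:disamb-id} and~\ref{lem:disamb-comm}, proved with Pinsker's inequality), and the $\ww=\emptyset$ gap to rule out a wrong \cs\ label for the correct message. The only cosmetic difference is that you invoke Lemma~\ref{lb-2} for the state-label disambiguation where the paper runs the Pinsker computation directly, which is the same idea.
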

\begin{proof}
The decoder we use for achieving the capacity is described below for $\eta$ described later. 

\begin{dec*} 
Given codewords $\vecx_j$, $j=1,\ldots,M$, set $\pand(\vecy) = (i,\sigma_k)$, $i \in \cM, k\in\{1,2\}$, iff an $\vecs \in \cS_k^n$ exists such that:
\begin{enumerate}
    \item the joint type $P_{\vecx_i,\vecs,\vecy} \in \mathcal{C}_{\eta}$, and
    \item for each  $\vecx_j, j \neq i$ such that there exists $\vecs' \in \cS_{1}^n \cup \cS_2^n$, 
    $P_{\vecx_j,\vecs',\vecy} \in \mathcal{C}_{\eta}$, we have $I(XY;X'|S) \leq \eta$ where $P_{XX'SY} = P_{\vecx_i,\vecx_j,\vecs,\vecy}$.
\end{enumerate}
Set $\pand(\vecy)=(1,\sigma_1)$ if no such $(i,\sigma_k)$ exists.
\end{dec*}

First, we justify the consistency of the decoder 
- if $(i,\sigma_k)$ satisfies both the conditions then $(i',\sigma_{k'}), (i',k') \neq (i,k)$ 
can not satisfy the conditions. Consider the following three cases
\begin{enumerate}
    \item $i\neq i',\ k \neq k'$, or
    \item $i\neq i',\ k=k'$, or
    \item $i=i',\ k\neq k'$.
\end{enumerate}

Based on \cite[Lemma 4]{avc-csiszar-narayan}, we state the following two lemmas (proved later).

\begin{lemma} \label{lem:disamb-id}
If the \cavc\ is non-trans-symmetrizable and $\beta>0$, then for a sufficiently small $\eta$, 
no  quintuple of random variables $X,X',S,S',Y$, with $P_S \in \cP_1$ and $P_{S'} \in \cP_2$, can simultaneously satisfy 
\begin{align}
    P_X = P_{X'}=P &\text{ with } \min_{a \in \mathcal{X}} P(a) \geq \beta \\
    P_{XSY} \in \mathcal{C}_{\eta}, &\text{ } P_{X'S'Y} \in \mathcal{C}_{\eta} \\
    I(XY;X'|S)\leq \eta , &\text{ } I(X'Y;X|S')\leq \eta.
\end{align}
\end{lemma}

\begin{lemma} \label{lem:disamb-comm}
    If the \cavc\ is non-any-symmetrizable and $\beta>0$, then for a sufficiently small $\eta$, 
    no  quintuple of random variables $X,X',S,S',Y$, with $P_S,P_{S'} \in \cP_1 \cup \cP_2$, can simultaneously satisfy 
    \begin{align}
        P_X = P_{X'}=P &\text{ with } \min_{a \in \mathcal{X}} P(a) \geq \beta \\
        P_{XSY} \in \mathcal{C}_{\eta}, &\text{ } P_{X'S'Y} \in \mathcal{C}_{\eta} \\
        I(XY;X'|S)\leq \eta , &\text{ } I(X'Y;X|S')\leq \eta.
    \end{align} 
\end{lemma}

Case (1) can not occur as by Lemma~\ref{lem:disamb-comm} 
(Lemma~\ref{lem:disamb-id} can also be used), as it is impossible that first and 
second condition of decoder holds for both tuples $(i,k)$ and $(i',k')$. 

Case (2) can not occur because of the same reason mentioned above.

Case (3) can not occur due to $\ww = \emptyset$. 
If case (3) was true then $(\vecx,\vecs,\vecy) \in \cC_{\eta}$ and $(\vecx,\vecs',\vecy) \in \cC_{\eta}$. 
Let $X,S,S',Y$ be random variables defined by $(\vecx,\vecs,\vecs',\vecy) \in \tau_{XSS'Y} $. 
Using Pinkser's inequality, the definition of $\mathcal{C}_{\eta}$ and the fact that 
divergence won't increase if we project $P_{XSY}$ and $P_X \times P_S \times W$ on $\mathcal{X} \times \mathcal{Y}$,
\begin{align}
    \sum_{a,c} |P_{XY}(a,c) - \sum_b P_X(a)P_{S}(b)W(c|a,b)| 
    &\leq c\sqrt{ \eta } \\
    \sum_{a,c} |P_{XY}(a,c) - \sum_b P_X(a)P_{S'}(b)W(c|a,b)| 
    &\leq c\sqrt{ \eta } \\
    \sum_{a,c} |P_X(a)U(c|a) - P_X(a)V(c|a)| 
    &\leq 2c\sqrt{ \eta } ,
\end{align}
where $U(c|a) := \sum_b P_S(b) W(c|a,b) \in \mathcal{W}_1$ and similarly $V(c|a) \in \mathcal{W}_2$. 
If $\min_a P_X(a) = \beta$ then
\begin{align} \label{contra}
\max_{a,c} |U(c|a) - V(c|a)| &\leq \frac{2c\sqrt{ \eta }}{ \beta}  .
\end{align}
However, we know that $\wa$ and $\wb$ are disjoint so \eqref{contra} is not possible by setting $\eta$ 
to be small enough and hence, a contradiction. Choose $\eta$ sufficiently small so that \eqref{contra} is 
not true and Lemma~\ref{lem:disamb-id} and \ref{lem:disamb-comm} are satisfied.

We need to show that the correct output indeed satisfies the decoding conditions with high probability. 
For this, we can show that the actual input sequence $\vecx$ and the \avcstate\ sequence $\vecs$ which was present 
in the transmission does indeed satisfy the decoder criteria. 
We prove this based on \cite[Lemma 5]{avc-csiszar-narayan}.

For any arbitrarily small $\delta>0$, choose $R$ satisfying 
\begin{equation} \label{eq:assume}
    I(P) - \delta < R < I(P)-\frac{2}{3}\delta.
\end{equation}
Choose the codebook based on Lemma~\ref{lem:nice-code} with rate $R$ and codewords $\vecx_1,\ldots,\vecx_M$.
We analyze the error probability when 
the \avcstate\ sequence is $\vecs \in \cS_t^n$ and the \cs\ is $\sigma_t$, $t=1,2$. 
Since $\ww = \emptyset$, we can define the probability of error under \avcstate\ sequence $\vecs$ as shown below
\begin{align}
    P_{e}^d(f,\phi,\vecs) &= \frac{1}{M}\sum_{i=1}^M W^n(\phi^{-1}(\{i,\sigma_t\})^C|\vecx_i,\vecs) \\
    &= \frac{1}{M}\sum_{i=1}^M \ \sum_{\vecy:\phi(\vecy) \neq (i,\sigma_t)} W^n(\vecy|\vecx_i,\vecs).
\end{align}
By \eqref{eq:nice-code-2}, 
\begin{align}
    \frac{1}{M} \lvert \{  i:(\vecx_i,\vecs) \in \bigcup\limits_{I(X;S) > \epsilon} \tau_{XS} \} \rvert &\leq (\text{no. of joint types}).\exp(-n\epsilon/2) \\
    &\leq \exp(-n\epsilon/3),
\end{align}
for suitably large $n$, which depends on the choice of $\epsilon$ which is specified later.
Therefore, it suffices to only consider codewords $\vecx_i$ for which $(\vecx_i,\vecs) \in \tau_{XS}$ with $I(X;S) \leq \epsilon$. 
If $P_{XSY} \notin \cC_{\eta}$ then,
\begin{align}
    D(P_{XSY}||P_{XS}\times W) &= D(P_{XSY}||P_X \times P_S \times W) - I(X;S) \\
    &> \eta - \epsilon .
\end{align}
Thus, 
\begin{align}
    \sum_{\vecy \in \tau_{Y|XS}(\vecx_i,\vecs)} W^n(\vecy|\vecx_i,\vecs) &\leq \exp(-nD(P_{XSY}||P_{XS}\times W) \nonumber \\
    &< \exp(-n(\eta - \epsilon)) \nonumber. \\
    \therefore \frac{1}{M} \sum_{i=1}^M \sum_{\vecy: P_{\vecx_i,\vecs,\vecy} \notin \cC_{\eta}} W^n(\vecy|\vecx_i,\vecs) &\leq \exp(-n(\eta - 2\epsilon)) \label{eq:ref1}
\end{align}

Now, if $P_{\vecx_i,\vecs,\vecy} \in \cC_{\eta} $ and yet $\phi(\vecy) \neq (i,\sigma_t)$, 
then condition (2) of the decoder must be getting violated. 
Let $\mathcal{D}_{\eta}$ be the set of all joint distributions $P_{XX'SY}$ such that 
1) $P_{XSY} \in \cC_{\eta}$; 
2) $P_{X'S'Y} \in \cC_{\eta}$; 
3) $I(XY;X'|S) > \eta$ (and $\vecx \neq \vecx'$). Then,

\begin{equation}
    \sum_{\substack{\vecy: P_{\vecx_i,\vecs,\vecy} \in \cC_{\eta}; \\ \phi(\vecy) \neq (i,\sigma_t)}} 
    W^n(\vecy|\vecx_i,\vecs) \leq \sum_{P_{XX'SY} \in \cD_{\eta}} e_{XX'SY}(i,\vecs) 
\end{equation}
where
\begin{equation} \label{def-ex}
    e_{XX'SY}(i,\vecs) = \sum_{\substack{\vecy:(\vecx_i,\vecx_j,\vecs,\vecy) \in \tau_{XX'SY}\\ \text{for some $j\neq i$}}} W^n(\vecy|\vecx_i,\vecs).
\end{equation}
Combining the equations so far, we have
\begin{equation}
    P_{e}^d(f,\phi,\vecs) \leq \exp(-n\epsilon/3) + \exp(-n(\eta-2\epsilon)) + \frac{1}{M}\sum_{i=1}^M \sum_{P_{XX'SY} \in \cD_{\eta}} e_{XX'SY}(i,\vecs).
\end{equation}

Notice that because of \eqref{eq:nice-code-3} it suffices to deal with cases when $P_{XX'SY} \in \cD_{\eta}$ satisfies
\begin{equation} \label{condn}
    I(X;X'S) \leq |R-I(X';S)|^+ + \epsilon.
\end{equation}
From \eqref{def-ex}, 
\begin{align}
    e_{XX'SY}(i,\vecs) \leq \sum_{j:(\vecx_i,\vecx_j,\vecs)\in \tau_{XX'S}} \ \sum_{\vecy \in \tau_{Y|XX'S}(\vecx_i,\vecx_j,\vecs)}W^n(\vecy|\vecx_i,\vecs) .
\end{align}
Using the fact that $W^n(\vecy|\vecx_i,\vecs)$ is a constant upper bounded by $(|\tau_{Y|XS}(\vecx_i,\vecs) |)^{-1}$, 
the inner sum is upper bounded by $|\tau_{Y|XX'S}(\vecx_i,\vecx_j, \vecs)|/ |\tau_{Y|XS}(\vecx_i,\vecs)| \leq \exp\{-n(I(Y;X'|XS)-\epsilon)\}$. 
Hence, using  \eqref{eq:nice-code-1},
\begin{equation} \label{cases}
    e_{XX'SY}(i,\vecs) \leq \exp\{ -n\left( I(Y;X'|XS) - |R-I(X';XS)|^+ -2\epsilon \right) \}.
\end{equation}
We can split the problem into two cases:
\begin{enumerate}
    \item $R \leq I(X';S)$, or,
    \item $R > I(X';S)$.
\end{enumerate}
Case (1) and \eqref{condn} yields 
\begin{align}
    I(X;X'|S) \leq I(X;X'S) \leq \epsilon,
\end{align}
and by condition (3) in definition of $\cD_{\eta}$,
\begin{align}
    I(Y;X'|XS) \geq \eta - \epsilon.
\end{align}
Since $R\leq I(X';S) \leq I(X';XS)$, it follows from \eqref{cases} that
\begin{equation}
    e_{XX'SY}(i,\vecs) \leq \exp(-n(\eta - 3\epsilon)).
\end{equation}

For case (2), from \eqref{condn}, we get
\begin{align}
    R &> I(X;X'S) + I(X';S) -\epsilon \nonumber \\
    &= I(X';XS) + I(X;S) - \epsilon \nonumber \\
    &\geq I(X';XS) - \epsilon,
\end{align}
and hence,
$$ |R - I(X';XS)|^+ \geq R - I(X';XS) -\epsilon. $$
Substituting in (\ref{cases})
\begin{align}
   e_{XX'SY}(i,\vecs) &\leq \exp\{ -n(I(X';XSY) - R - 3\epsilon)\} \label{eq:tweak}\\
   &\leq \exp\{ -n(I(X';Y) - R - 3\epsilon)\}.
\end{align}

$P_{XX'SY} \in \cD_{\eta}$ implies that $P_{X'S'Y} \in \cC_{\eta}$ for some $S'$. 
Thus, by definition of $\cC_{\eta}$, $P_{X'S'Y}$ is arbitrarily close to 
$P_{X''S''Y} \in \cC_0$ defined by $P_{X''S''Y''} = P_X \times P_{S'} \times W$ if $\eta$ is sufficiently small. 
This implies $I(X';Y)$ is arbitrarily close to $I(X'';Y'')$, i.e., $I(X':Y) \geq I(X'';Y'') - \delta/3$. 
By definition of $I(P)$ and assumption \eqref{eq:assume}, 
$$ I(X';Y) - R \geq I(P) - \delta/3 - R \geq \delta/3 $$
if $\eta$ is sufficiently small and depends only on $\delta$ (and $\wa,\wb$). Therefore, for case (2),
$$ e_{XX'SY}(i,\vecs) \leq \exp \{ -n (\frac{\delta}{3} - 3\epsilon)\}$$
Therefore,
$$ P_{e}^d(f,\phi,\vecs) \leq \exp(-n\epsilon/4)$$
if $\epsilon \leq \min (\eta/4,\delta/10)$ and $n$ sufficiently large for all $\vecs$. 
\end{proof}
\vspace{20pt}

\textit{Proof of Lemma~\ref{lem:disamb-id}} : Suppose there exists $X,X',S,S',Y$ which simultaneously satisfy 
the three conditions. Then, by definition of $\cC_{\eta}$,
$$ D(P_{XSY}||P_X \times P_S \times W) = \sum_{x,s,y} P_{XSY}(x,s,y) \log \frac{P_{XSY}(x,s,y)}{P_X(x)P_S(s)W(y|x,s)} \leq \eta. $$
Adding $I(XY;X'|S)$ to it,
$$D(P_{XX'SY} || P_X \times P_{X'} \times P_{S|X'} \times W) \leq 2\eta .$$
Projecting both the distributions to $\cX \times \cX \times \cY$, the divergence can not increase,
$$D(P_{XX'Y} || P_X \times P_{X'} \times V) \leq 2\eta $$
where $V(y|x,x') = \sum_s W(y|x,s)P_{S|X'}(s|x')$. By Pinsker's inequality,
\begin{equation} \label{eq:pins-1}
    \sum_{x,x',y} |P_{XX'Y}(x,x',y) - P(x)P(x')V(y|x,x')| \leq c \sqrt{2\eta}.
\end{equation}
Similarly, starting with $P_{X'S'Y} \in \cC_{\eta}$ and $I(X'Y;X|S') \leq  \eta$, we get 
\begin{equation} \label{eq:pins-2}
    \sum_{x,x',y} |P_{XX'Y}(x,x',y) - P(x)P(x')V'(y|x,x')| \leq c \sqrt{2\eta}
\end{equation}
where $V'(y|x,x') = \sum_s W(y|x',s)P_{S'|X}(s|x)$. From \eqref{eq:pins-1} and \eqref{eq:pins-2},
\begin{equation} \label{eq:contra}
    \max_{x,x',y} |V(y|x,x') - V'(y|x,x')| \leq \frac{2c \sqrt{2\eta}}{\beta^2}.
\end{equation}
For a non-trans-symmetrizable \cavc , there exists a $\xi$ such that 
\begin{equation} 
    \max_{x,x',y} |\sum_s W(y|x,s)U_{S|X}(s|x') - \sum_s W(y|x',s)V_{S|X}(s|x)| \geq \xi
\end{equation}
for every $U_{S|X} \in \cP_{A|X},\ V_{S|X} \in \cP_{B|X}$. 
Setting $U_{S|X'} = P_{S|X},V_{S|X} = P_{S'|X}$ and $\eta < \frac{\xi^2\beta^4}{8c^2}$, we get a contradiction.
Lemma~\ref{lem:disamb-comm} can be proved in a similar manner as Lemma~\ref{lem:disamb-id}. \hspace{290pt} $\blacksquare$

\vspace{20pt}

\begin{lemma}
If the \cavc\ is non-any-symmetrizable then 
$$ \Cdetcomm \geq \max_{P_X} \min_{W \in \wu} I(X;Y).$$
\end{lemma}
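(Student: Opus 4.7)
The proof will closely mirror the achievability argument in Lemma~\ref{lem:cdboth-ach}, with the scheme simplified to only output a message rather than a (message, \cs) pair. The plan is to retain the codebook construction from Lemma~\ref{lem:nice-code} with some type $P_X$ that approximately maximizes $\min_{W \in \wu} I(X;Y)$, and to use a decoder of the form: $\phi(\vecy) = i$ iff there exists $\vecs \in \cS_1^n \cup \cS_2^n$ such that (1) $P_{\vecx_i,\vecs,\vecy} \in \cC_\eta$, and (2) for every $j\neq i$ with $P_{\vecx_j,\vecs',\vecy} \in \cC_\eta$ for some $\vecs' \in \cS_1^n \cup \cS_2^n$, the joint type satisfies $I(XY;X'|S)\leq \eta$; set $\phi(\vecy)=1$ if no such $i$ exists. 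This is simply the decoder of Lemma~\ref{lem:cdboth-ach} with the \cs\ label discarded.

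For decoder consistency, I need to show that no two distinct codeword indices $i\neq i'$ can both satisfy the conditions. In Lemma~\ref{lem:cdboth-ach} there were three cases to rule out corresponding to $(i,k)\neq (i',k')$; here only the two cases with $i\neq i'$ matter (the case $i=i'$, $k\neq k'$, which previously required $\ww=\emptyset$, is now irrelevant because the decoder ignores $k$). Both remaining cases are excluded by Lemma~\ref{lem:disamb-comm}, whose hypothesis is precisely non-any-symmetrizability (covering both the cis and trans cases since it allows $P_S, P_{S'} \in \cP_1 \cup \cP_2$). Choosing $\eta$ small enough so that Lemma~\ref{lem:disamb-comm} applies handles consistency.

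For the error probability analysis, I will proceed exactly as in Lemma~\ref{lem:cdboth-ach}: pick $R$ satisfying $I(P)-\delta < R < I(P)-\tfrac{2}{3}\delta$ with $I(P) = \min_{W \in \wu, P_X=P} I(X;Y)$, then bound the contribution from (a) codewords $\vecx_i$ with $I(X;S) > \epsilon$ using \eqref{eq:nice-code-2}, (b) outputs $\vecy$ with $P_{\vecx_i,\vecs,\vecy} \notin \cC_\eta$ using the divergence bound, and (c) outputs for which condition (2) of the decoder fails, using \eqref{eq:nice-code-3} together with the case split on $R \lessgtr I(X';S)$. Each bound gives an exponentially decaying term, and the final bound $P_e^d(f,\phi,\vecs) \leq \exp(-n\epsilon/4)$ holds uniformly in $\vecs \in \cS_1^n \cup \cS_2^n$ for $\epsilon \leq \min(\eta/4, \delta/10)$ and $n$ sufficiently large.

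The only subtlety, and the place where the two cases of the lemma really differ, is that the analysis in the previous lemma used $P_{S'}$ separately in $\cP_{3-k}$ inside the case (2) argument for the rate bound via $I(X';Y) \geq I(P) - \delta/3$; here we instead allow $P_{S'} \in \cP_1 \cup \cP_2$ and use $I(P)$ defined over all of $\wu$, which is consistent with the target rate $\max_{P_X}\min_{W\in\wu}I(X;Y)$. I don't foresee any genuine obstacle: the structural reason the proof goes through is that non-any-symmetrizability is exactly the condition under which Lemma~\ref{lem:disamb-comm} rules out codeword confusion across all adversarial strategies (both cis and trans), and this is the only place where $\ww = \emptyset$ was used in Lemma~\ref{lem:cdboth-ach}. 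Since $\delta>0$ is arbitrary, achievability of $\max_{P_X}\min_{W\in\wu}I(X;Y)$ follows.
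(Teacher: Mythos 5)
Your proposal is correct and matches the paper's proof, which is stated in exactly this way: reuse the codebook of Lemma~\ref{lem:nice-code} and the decoder of Lemma~\ref{lem:cdboth-ach} with the \cs\ label dropped, with consistency now resting only on Lemma~\ref{lem:disamb-comm}. Your observation that the $i=i'$, $k\neq k'$ case (the only place $\ww=\emptyset$ was needed) becomes vacuous is precisely the reason the paper can call the proof ``analogous'' without the extra hypothesis.
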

\begin{proof}
    The proof is analogous to the proof of Lemma~\ref{lem:cdboth-ach}. 
    We use Lemma~\ref{lem:nice-code} to get a codebook with type $P_X$ which
    maximizes $I(P)$ and use the following decoder to obtain the message.
    \begin{dec*}
    Given codewords $\vecx_j$, $j=1,\ldots,M$, set $\phi(\vecy) = i$, $i \in \cM$, iff an $\vecs$ exists such that:
    \begin{enumerate}
        \item the joint type $P_{\vecx_i,\vecs,\vecy} \in \mathcal{C}_{\eta}$, and
        \item for each  $\vecx_j, j \neq i$ such that there exists $\vecs'$, 
        $P_{\vecx_j,\vecs',\vecy} \in \mathcal{C}_{\eta}$, we have $I(XY;X'|S) \leq \eta$ 
        where $P_{XX'SY} = P_{\vecx_i,\vecx_j,\vecs,\vecy}$.
    \end{enumerate}
    Set $\phi(\vecy)=1$ if no such $i$ exists.
    \end{dec*}
\end{proof}
\vspace{20pt}
Next, we show that a positive rate is attainable for `communication or \cs\ identification'
if the \cavc\ is non-trans-symmetrizable.
\begin{lemma} \label{lem:ach-auth}
    If \cavc\ is non-trans-symmetrizable then $\Cdetauth > 0$.
\end{lemma}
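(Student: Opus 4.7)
Plan for the proof. The approach is to adapt the achievability argument of Lemma~\ref{lem:cdboth-ach} to the $\por$ decoder defined by the sets $B_1,B_2$ in Section~\ref{sec:proofs}. Since only a positive rate is needed here (the full capacity expression of Theorem~\ref{thm:auth} is recovered afterwards via randomness reduction), the plan is to fix any distribution $P_X$ on $\cX$ with full support, pick a small positive rate $R$, and invoke Lemma~\ref{lem:nice-code} to obtain a codebook $\vecx_1,\ldots,\vecx_M$ of type $P_X$ with $M=\lceil\exp(nR)\rceil$ satisfying the properties of that lemma.

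The central structural consequence of non-trans-symmetrizability, for sufficiently small $\eta>0$, is the following: $m_0\in B_1$ forces $B_2\subseteq\{m_0\}$, and symmetrically $m_0\in B_2$ forces $B_1\subseteq\{m_0\}$. Indeed, if both $m_0\in B_1$ and some $m'\in B_2$ with $m'\neq m_0$ were to hold, condition~2 of $m_0\in B_1$ applied to the competitor $m'$ would give $I(X_0 Y;X'\mid S_*)\leq\eta$, and condition~2 of $m'\in B_2$ applied to the competitor $m_0$ would give $I(X' Y;X_0\mid S')\leq\eta$, where $\vecs_*\in\cS_1^n$ and $\vecs'\in\cS_2^n$ are the witnesses for condition~1 in each set (supplying $P_{X_0 S_* Y},P_{X' S' Y}\in\cC_\eta$). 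The quintuple $(X_0,X',S_*,S',Y)$ then satisfies every hypothesis of Lemma~\ref{lem:disamb-id} (the codewords share type $P_X$ with $\min_x P_X(x)>0$), which is ruled out under non-trans-symmetrizability. Assuming w.l.o.g.\ that the true CS is $\sigma_1$, the true message is $m_0$, and the adversary's state sequence is $\vecs_*\in\cS_1^n$, this structural fact reduces correctness of the decoder to a single event: $m_0\in B_1$. If that holds and $B_2=\emptyset$, the decoder outputs $\sigma_1$ (correct); if instead $m_0\in B_2$ as well, then $B_1=B_2=\{m_0\}$ and the decoder outputs $m_0$ (also correct).

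Condition~1 of $m_0\in B_1$, witnessed by $\vecs=\vecs_*$, is handled by the standard typicality bound behind~\eqref{eq:ref1}. Condition~2 fails only on the event that some $m'\neq m_0$ admits an $\vecs'\in\cS_2^n$ with $P_{\vecx_{m'},\vecs',\vecy}\in\cC_\eta$ \emph{and} the joint type $P_{\vecx_{m_0},\vecx_{m'},\vecs_*,\vecy}$ satisfies $I(X_0 Y;X'\mid S_*)>\eta$. The probability of this bad event under $W^n(\cdot\mid\vecx_{m_0},\vecs_*)$ is bounded by the same joint-type enumeration used in Lemma~\ref{lem:cdboth-ach}: the codebook property~\eqref{eq:nice-code-3} controls the codeword counts, the standard conditional-type-class estimate leading to~\eqref{cases} controls the channel mass, and splitting into the two sub-cases $R\leq I(X';S)$ and $R> I(X';S)$---with $R$ chosen small enough that the margin $\delta/3$ in~\eqref{eq:assume} remains positive---yields an $\exp(-n\epsilon/4)$ bound on the error probability, uniformly in $\vecs_*$. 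The main obstacle is verifying that non-trans-symmetrizability \emph{alone} (rather than the stronger non-any-symmetrizability used in Lemma~\ref{lem:cdboth-ach}) is what closes the Lemma~\ref{lem:disamb-id} step; this works precisely because condition~2 of $B_k$ only inspects competitors whose state sequences lie in the \emph{opposite} alphabet $\cS_{3-k}^n$, so within-CS (cis-symmetric) ambiguity merely populates a single $B_k$ with multiple messages and forces the decoder to output the correct CS rather than a wrong message.
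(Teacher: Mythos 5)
Your decoder, your use of Lemma~\ref{lem:disamb-id} to rule out distinct messages appearing simultaneously in $B_1$ and $B_2$, and the resulting reduction of correctness to the single event $m_0\in B_t$ all match the paper's argument; your observation that condition~2 of $B_k$ only inspects competitors with state sequences in the opposite alphabet $\cS_{3-k}^n$ (so cis-symmetric ambiguity merely enlarges one $B_k$ and triggers a correct \cs\ output) is exactly the right reason the disambiguation step survives under the weaker hypothesis. The gap is in the error bound for condition~2. You propose to close the sub-case $R>I(X';S)$ ``with $R$ chosen small enough that the margin $\delta/3$ in \eqref{eq:assume} remains positive.'' That margin comes from the chain $I(X';Y)-R\geq I(P)-\delta/3-R\geq\delta/3$ in the proof of Lemma~\ref{lem:cdboth-ach}, which presupposes $I(P)=\min_{W\in\wu,\,P_X=P}I(X;Y)>0$. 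That positivity is a consequence of non-\emph{any}-symmetrizability (the paper proves it as a separate lemma: if some $W_{P_S}$ with $P_S\in\cP_1\cup\cP_2$ makes $X$ and $Y$ independent, the channel is cis-symmetrizable), not of non-trans-symmetrizability. A cis-symmetrizable but non-trans-symmetrizable \cavc\ can have $I(P)=0$ for every full-support $P$; then \eqref{eq:assume} admits no positive $R$, and a competitor type $P_{X'S'Y}\in\cC_\eta$ may have $I(X';Y)$ arbitrarily close to $0$, so the exponent you rely on is not positive for any rate.

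The paper's proof replaces exactly this step. It redefines the bad set as $\cD_\eta'$ (requiring $P_{S'}\in\cP_{3-t}$, matching the decoder) and, at \eqref{eq:tweak}, bounds $I(X';XSY)\geq I(X';XY|S)=I(XY;X'|S)>\eta$ using condition~3 of $\cD_\eta'$, instead of passing to $I(X';Y)$ and invoking $I(P)$. This gives $e_{XX'SY}(i,\vecs)\leq\exp\{-n(\eta-R-3\epsilon)\}$, exponentially small once $0<R=\epsilon<\eta/5$, with no reference to the capacity expression at all --- which is precisely why only a positive rate (and not the capacity) is claimed at this stage. This substitution is the essential new ingredient of Lemma~\ref{lem:ach-auth} relative to Lemma~\ref{lem:cdboth-ach}, and your proposal does not supply it.
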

\begin{proof}
Use Lemma~\ref{lem:nice-code} to obtain a codebook at some rate $R>0$ (described later). 
\begin{dec*} \label{dec:auth-d}
Given codewords $\vecx_j$, $j=1,\ldots,M$, let $B_k$ ($k=1,2$) be the set of messages $m \in \cM$ such that 
\begin{enumerate}
    \item $\exists \vecs \in \cS_k^n$ such that $P_{\vecx_m,\vecs,\vecy} \in \mathcal{C}_{\eta}$, and
    \item for every $m'\neq m$ such that $\exists\  \vecs' \in \cS_{3-k}^n$, $P_{\vecx_{m'},\vecs',\vecy} \in \mathcal{C}_{\eta}$, we have $I(XY;X'|S) \leq \eta$ where $P_{XX'SY} = P_{\vecx_m,\vecx_{m'},\vecs,\vecy}$.
\end{enumerate}
If $B_1 = B_2 = \{m\}$, then $\por(\vecy)=m$. If for some $k\in\{1,2\}$,
$B_k=\emptyset \neq B_{3-k}$, then 
the decoder outputs the compound state $\por(\vecy)=\sigma_{3-k}$.
\end{dec*}
By Lemma~\ref{lem:disamb-id}, it is not possible to have distinct messages in the sets $B_1$ and $B_2$.
Thus, the only four possibilities are listed below
\begin{enumerate}
    \item $B_1 = B_2 = \{m\}$, $m \in \cM$,
    \item $B_1 = \emptyset$, $|B_2| \geq 1$, 
    \item $B_2 = \emptyset$, $|B_1| \geq 1$, and
    \item $B_1 = B_2 = \emptyset$.
\end{enumerate}  
Suppose the \avcstate\ sequence during
the transmission is $\vecs \in \cS_t^n,\ t\in\{1,2\}$. Using the same approach as that of the proof of Lemma~\ref{lem:cdboth-ach},
we can show that the correct message would be present in the set $B_t$ w.h.p. for sufficiently large block length.
To see this, refer to the proof of Lemma~\ref{lem:cdboth-ach} - proof till \eqref{eq:ref1} remains the same.
The slightly different decoder changes the error event slightly and we present the new condition below.

If $P_{\vecx_i,\vecs,\vecy} \in \cC_{\eta} $ and yet 
$\phi(\vecy) \neq i$, then condition (2) of the decoder must be getting violated. 
Let $\mathcal{D}_{\eta}'$ be the set of all joint distributions $P_{XX'SY}$ such that 
1) $P_{XSY} \in \cC_{\eta}$; 
2) $P_{X'S'Y} \in \cC_{\eta},\ P_{S'} \in \cP_{3-t}$; 
3) $I(XY;X'|S) > \eta$ (and $\vecx \neq \vecx'$).
With this modified $\cD_{\eta}'$ definition, the rest of the proof remains the same 
till equation \eqref{eq:tweak} where we make a slight modification as shown below,
\begin{align}
    e_{XX'SY}(i,\vecs) &\leq \exp\{ -n(I(X';XSY) - R - 3\epsilon)\}\\
    &\leq \exp\{ -n(I(X';XY|S) - R - 3\epsilon)\} \\
    &\leq \exp\{ -n(\eta - R - 3\epsilon)\}, \label{eq:tough}
\end{align}
where \eqref{eq:tough} follows from definition of $\cD_{\eta}'$. Choose $0 < R = \epsilon < \eta/5$. Therefore,
$\Cdetauth >0 $. 
\end{proof}

\begin{lemma}
    If \cavc\ is non-trans-symmetrizable then
    $$ \Cdetauth \geq \max_{P_X} \min_{W \in \ww} I(X;Y). $$
\end{lemma}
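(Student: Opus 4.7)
The plan is to bootstrap Lemma~\ref{lem:ach-auth} to the full capacity via Ahlswede's elimination, in the spirit of the randomness-reduction argument alluded to in Section~\ref{sec:proofs}. I would fix any $R$ strictly below $\max_{P_X}\min_{W\in\ww}I(X;Y)$; when $\ww = \emptyset$ the right-hand side is infinite and $R$ may be any positive real. By Theorem~\ref{thm:auth}~(i), there is a sequence of $(2^{nR},n)$ random codes $Q^{(n)}$ with $P_{\mathsf{e}}^{\mathsf{r}}(Q^{(n)}) \to 0$.

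First I would carry out Ahlswede's elimination: draw $N = n^2$ independent realizations $(f_i,\phi_i)_{i=1}^N \sim Q^{(n)}$ and use a Chernoff concentration bound together with a union bound over the polynomially many joint types to extract a realization whose uniform mixture has error at most $2 P_{\mathsf{e}}^{\mathsf{r}}(Q^{(n)})$ under every state sequence $\vecs \in \cS_1^n \cup \cS_2^n$. Next, Lemma~\ref{lem:ach-auth} supplies a rate $R_0 > 0$ deterministic `communication or \cs\ identification' code $(g_{n_0},\psi_{n_0})$, which I would use as a preamble of length $n_0 = \lceil (\log N)/R_0 \rceil = O(\log n)$.

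The concatenated encoder will partition the message set $\cM$ of size $M = 2^{nR}$ into $N$ roughly equal groups via a deterministic map $m \mapsto i(m) \in \{1,\ldots,N\}$, and transmit $(g_{n_0}(i(m)),\ f_{i(m)}(m))$. The decoder first applies $\psi_{n_0}$ to the preamble portion of $\vecy$: if it returns a \cs, that becomes the overall output; otherwise the decoded seed $\hat\imath$ is passed to $\phi_{\hat\imath}$ on the remaining $n$ outputs. Because the \cs\ is constant throughout the block of length $n+n_0$, a preamble that emits $\sigma_k$ is always acceptable under true \cs\ $\sigma_k$; the probability that $\psi_{n_0}$ instead emits a wrong seed or $\sigma_{3-k}$ vanishes by Lemma~\ref{lem:ach-auth}. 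Conditional on correct seed recovery, averaging the main decoder's error over uniformly distributed messages equals the uniform average over seeds, which the Ahlswede step bounds by $2P_{\mathsf{e}}^{\mathsf{r}}(Q^{(n)})$. The overall rate is $nR/(n+n_0) \to R$.

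The main delicate point I expect is the interface between preamble identification and main decoding under deterministic coding: since the encoder has no private randomness, the ``uniformity'' required by the Ahlswede mixture must be supplied entirely by the uniform message distribution through the fixed partition, and the option for the preamble to return a \cs\ must be absorbed harmlessly. Both issues resolve cleanly because the \cs\ is fixed across the entire block, so any preamble-declared $\sigma_k$ automatically matches the true \cs.
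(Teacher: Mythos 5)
Your proposal follows essentially the same route as the paper: the paper's entire proof is ``apply \cite[Lemma 12.8]{csiszar-korner-book} to reduce the random code of Theorem~\ref{thm:auth}~(i) to a uniform mixture over $K=n^2$ deterministic codes, then establish this small amount of shared randomness with the positive-rate deterministic code of Lemma~\ref{lem:ach-auth}'', and your observation that a preamble which emits a \cs\ is harmless (because the \cs\ is constant over the whole block, so any $\sigma_k$ declared by $\psi_{n_0}$ is automatically the true one up to the preamble's own vanishing error) is the right way to fill in the detail the paper omits. One step does not work as literally written, however: if you carve the seed out of $\cM$ by a \emph{fixed partition} into groups $\cM_1,\ldots,\cM_N$, the quantity you must control is $\frac{1}{N}\sum_i \bigl(\frac{N}{M}\sum_{m\in\cM_i}\Pr[\mathrm{err}\mid m,\ \text{code } i]\bigr)$, i.e.\ the average error of code $i$ over its own group, whereas the Ahlswede/Chernoff step only bounds the average of code $i$ over all of $\cM$; since $2P_{\mathsf{e}}^{\mathsf{r}}(Q^{(n)})$ need not decay faster than $N^{-1}=n^{-2}$, code $i$'s errors could in principle concentrate on $\cM_i$ and the group-restricted average could be order one. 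The standard fix---and what \cite[Lemma 12.8]{csiszar-korner-book} actually delivers---is to enlarge the message set to $\{1,\ldots,N\}\times\cM$, making the seed an extra message coordinate costing only $2\log n$ bits of rate, so that the set of messages served by code $i$ is all of $\{i\}\times\cM$ and the per-code average-error guarantee applies verbatim; with that adjustment your argument goes through.
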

\begin{proof}
For some achievable rate $R$ and block-length $n$ under random coding, apply \cite[Lemma 12.8]{csiszar-korner-book} 
to show the existence of a random code distributed over $K=n^2$ encoder-decoder pairs uniformly. 
This small amount of shared randomness can be established using deterministic codes given by Lemma~\ref{lem:ach-auth}.
Thus, we can show that $\Cdetauth = \Cranauth$ when the \cavc\ is non-trans-symmetrizable.  
\end{proof}

\subsection{Converses for Deterministic Coding}
The converses of random coding results in Section~\ref{subs:ran-conv} establish some of the
coverse results for deterministic coding.

\begin{lemma} \label{lem:converses}
    If \cavc\ is any-symmetrizable or $\ww \neq \emptyset$ then $\Cdetboth = 0 $.
\end{lemma}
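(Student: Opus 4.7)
The plan is to split into the two cases stated in the lemma and handle each separately.

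\textbf{Case $\ww \neq \emptyset$.} Here I would reduce to the random-coding converse. Every deterministic $(M,n)$ code is a random code $Q$ that puts all its mass on a single pair $(f,\phi)$, so $\Cdetboth \leq \Cranboth$. But Lemma~\ref{lem:ad-id} already showed that when $\ww \neq \emptyset$ the sum $P_{\mathsf{id}}^{\mathsf{r}}(Q,1)+P_{\mathsf{id}}^{\mathsf{r}}(Q,2) \geq 1$ for every $Q$, so compound-state identification (and hence joint communication \emph{and} identification) is infeasible under random coding. Thus $\Cranboth=0$ and consequently $\Cdetboth=0$.

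\textbf{Case any-symmetrizable.} The strategy is the classical Csisz\'ar--Narayan symmetrization argument, lifted to the two-compound-state setting. Fix any deterministic code $(f,\phi)$ with message set $\cM$ of size $M\geq 2$, and write $\vecx_m = f(m)$. I would bound the average error probability from below by averaging the adversarial state over a carefully chosen distribution, using the fact that $\max_{\vecs}(\cdot) \geq \mathbb{E}_{\vecs \sim P}(\cdot)$ for any distribution $P$ supported on $\cS_k^n$.

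Suppose first the CAVC is $\cS_k$-symmetrizable via $U:\cX\to\cS_k$ satisfying \eqref{eq:cissym-csiszar}. For every ordered pair $(m,m')$ with $m\neq m'$, let $P_{m'}(\vecs)=U^n(\vecs\mid\vecx_{m'})$. By \eqref{eq:cissym-csiszar} applied coordinate-wise,
\[
\sum_{\vecs} U^n(\vecs\mid\vecx_{m'})\,W^n(\vecy\mid\vecx_m,\vecs) = \sum_{\vecs} U^n(\vecs\mid\vecx_{m})\,W^n(\vecy\mid\vecx_{m'},\vecs) \quad\forall\,\vecy,
\]
so averaging over $m$ pairs, for any decoding regions $\{\phi^{-1}(m,\sigma_k)\}_m$,
\[
\frac{1}{M}\sum_m \sum_{\vecs} U^n(\vecs\mid\vecx_{m'(m)})\,W^n\!\left(\phi^{-1}(\cE_{m,k})\mid\vecx_m,\vecs\right) \geq \tfrac{1}{2},
\]
for a suitable pairing $m\mapsto m'(m)$ (this uses only that distinct messages have disjoint correct-decoding regions $\phi^{-1}(m,\sigma_k)$). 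Since the maximum over $\vecs\in\cS_k^n$ dominates this average, $P_{\mathsf{e}}^{\mathsf{d}}(f,\phi,k)\geq \tfrac{1}{2}-o(1)$, giving $\Cdetboth=0$.

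Now suppose the CAVC is trans-symmetrizable via $U:\cX\to\cS_1$, $V:\cX\to\cS_2$ satisfying \eqref{eq:transsym}. For any pair $m\neq m'$,
\[
\sum_{\vecs}U^n(\vecs\mid\vecx_{m'})W^n(\vecy\mid\vecx_m,\vecs)=\sum_{\vecs}V^n(\vecs\mid\vecx_{m})W^n(\vecy\mid\vecx_{m'},\vecs)\quad\forall\vecy,
\]
so the scenario ``send $\vecx_m$, compound state $\sigma_1$'' and ``send $\vecx_{m'}$, compound state $\sigma_2$'' produce identical output distributions. Since the correct-output sets $\phi^{-1}(m,\sigma_1)$ and $\phi^{-1}(m',\sigma_2)$ are disjoint subsets of $\widehat\cM$, at least one of the two scenarios must incur error probability $\geq\tfrac12$ on this matched output distribution. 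Averaging over a pairing of messages then gives $\max\{P_{\mathsf{e}}^{\mathsf{d}}(f,\phi,1),P_{\mathsf{e}}^{\mathsf{d}}(f,\phi,2)\}\geq \tfrac{1}{2}-o(1)$.

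\textbf{Main obstacle.} The delicate step is the pairing/averaging argument: we need a matching on $\cM$ ensuring that for at least a constant fraction of messages, the two indistinguishable scenarios force disjoint correct-decoding regions. The standard trick is to exchange a pair $(m,m')$ and use the identity of induced output distributions together with $\phi^{-1}(m,\sigma_k)\cap\phi^{-1}(m',\sigma_{k'})=\emptyset$ whenever $(m,\sigma_k)\neq(m',\sigma_{k'})$, which reduces the estimate to a two-message sub-code and yields the $\tfrac{1}{2}$ bound. The trans-symmetrizable subcase needs slightly more care because the two scenarios live in different compound states, so the lower bound transfers only to $\max\{P_{\mathsf{e}}^{\mathsf{d}}(\cdot,1),P_{\mathsf{e}}^{\mathsf{d}}(\cdot,2)\}$, which is exactly the quantity appearing in $P_{\mathsf{e}}^{\mathsf{d}}(f,\phi)$ in \eqref{eq:fp}.
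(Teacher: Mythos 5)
Your case decomposition matches the paper's: the paper's proof of Lemma~\ref{lem:converses} also splits into trans-symmetrizable, cis-symmetrizable, and $\ww\neq\emptyset$, and your handling of the last case by reducing to Lemma~\ref{lem:ad-id} (a deterministic code is a point-mass random code, and $\Cdetboth\le\Cranboth$) is sound and equivalent to the paper's direct computation with the i.i.d.\ attack $T(s|x)=P_k(s)$.

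There is, however, a genuine gap in your symmetrizability cases, precisely at the step you flagged as the main obstacle. In the display
\[
\frac{1}{M}\sum_m \sum_{\vecs} U^n(\vecs\mid\vecx_{m'(m)})\,W^n\!\left(\phi^{-1}(\cE_{m,k})\mid\vecx_m,\vecs\right) \geq \tfrac{1}{2},
\]
the state distribution $U^n(\cdot\mid\vecx_{m'(m)})$ depends on the transmitted message $m$ through the pairing. The inequality itself is true for an involutive pairing, but the next assertion --- that $\max_{\vecs\in\cS_k^n}\frac1M\sum_m W^n(\phi^{-1}(\cE_{m,k})\mid\vecx_m,\vecs)$ dominates this average --- is false in general: $\max_{\vecs}\mathbb{E}_m[\cdot]$ only dominates $\mathbb{E}_{\vecs\sim P}\mathbb{E}_m[\cdot]$ for a \emph{single} distribution $P$, not for a family $\{P_m\}$ indexed by the (unknown to the adversary) message. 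A two-message example with $W^n(E_1|\vecx_1,\vecs_1)=W^n(E_2|\vecx_2,\vecs_2)=1$ and the cross terms zero makes the message-dependent average equal to $1$ while the max-over-$\vecs$ of the message-average is $1/2$, and the discrepancy can be made arbitrary. The fix is the one the paper uses: take the single, message-independent mixture attack $R(\vecs)=\frac1M\sum_{m'}U^n(\vecs\mid\vecx_{m'})$, lower-bound $P^{\mathsf d}_{\mathsf e}(f,\phi,k)\ge\sum_{\vecs}R(\vecs)\cdot\frac1M\sum_m W^n(\phi^{-1}(\cE_{m,k})\mid\vecx_m,\vecs)$, expand into the double sum over ordered pairs $(m,m')$, drop the diagonal, and pair $(m,m')$ with $(m',m)$ via the symmetrization identity together with disjointness of $\phi^{-1}((m,\sigma_k))$ and $\phi^{-1}((m',\sigma_k))$; this yields $\frac{M-1}{2M}$ rather than exactly $\frac12$. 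The same repair is needed in your trans-symmetrizable subcase, where the paper uses $R_1(\vecs)=\frac1M\sum_i U^n(\vecs\mid\vecx_i)$ under $\sigma_1$ and $R_2(\vecs)=\frac1M\sum_j V^n(\vecs\mid\vecx_j)$ under $\sigma_2$ to obtain $P^{\mathsf d}_{\mathsf e}(f,\phi,1)+P^{\mathsf d}_{\mathsf e}(f,\phi,2)\ge\frac{M-1}{M}$. With this substitution your argument goes through and coincides with the paper's.
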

\begin{proof}
Let the codewords be $\vecx_1,..,\vecx_M$. For any distribution $R(\vecs)$ over $\cS_1^n$,  
\begin{equation} \label{eq:enc-ineq}
    P_{e}^d(f,\phi,1) \geq \sum_{\vecs} R(\vecs)P_{e}^d(f,\phi,\vecs).
\end{equation} 
Let $T^n(\vecs|\vecx) = \prod_i T(s_i|x_i)$ be some distribution specified later. Choose 
\begin{equation} \label{eq:R-choice}
    R(\vecs) = \frac{1}{M}\sum_{i=1}^M T^n(\vecs|\vecx_i).  
\end{equation}
Then combining definition of $P_{e}^d(f,\phi,1)$, \eqref{eq:enc-ineq}, and \eqref{eq:R-choice},

\begin{align}
P_{e}^d(f,\phi,1) &\geq \sum_{\vecs} \left( \frac{1}{M}\sum_{i=1}^M T^n(\vecs|\vecx_i) \right) 
        \left( \frac{1}{M} \sum_{j=1}^M W^n(\phi^{-1}((j,\sigma_1))^C|\vecx_j,\vecs) \right) \\
\label{eq:pre-cases} &= \frac{1}{M^2} \sum_{i=1}^M \sum_{j =1}^M \sum_{\vecs} T^n(\vecs|\vecx_i)  W^n(\phi^{-1}((j,\sigma_1))^C|\vecx_j,\vecs) \\
\label{eq:cases} &\geq \frac{1}{M^2} \sum_{i=1}^M \sum_{j \neq i} \sum_{\vecs} T^n(\vecs|\vecx_i)  W^n(\phi^{-1}((j,\sigma_1))^C|\vecx_j,\vecs).  
\end{align}

We can have 3 cases:
\begin{enumerate}[label=(\Alph*)]
    \item the \cavc\ is trans-symmetrizable, or,
    \item the \cavc\ is cis-symmetrizable, or,
    \item $\cW_0 \neq \emptyset$.
\end{enumerate}
For case (A), let $U(s|x)$ and $V(s|x)$ be the distributions satisfying trans-symmetrizibility condition. 
Let $T(s|x) = U(s|x)$. By trans-symmetrizability condition on \eqref{eq:cases},
\begin{align}
    \frac{1}{M^2} \sum_{i=1}^M \sum_{j \neq i} \sum_{\vecs} U^n(\vecs|\vecx_i)  W^n(\phi^{-1}((j,\sigma_1))^C|\vecx_j,\vecs) 
        &=  \frac{1}{M^2} \sum_{i=1}^M \sum_{j \neq i} \sum_{\vecs} V^n(\vecs|\vecx_j)  W^n(\phi^{-1}((j,\sigma_1))^C|\vecx_i,\vecs) \\
    &\geq  \frac{1}{M^2} \sum_{i=1}^M \sum_{j \neq i} \sum_{\vecs} V^n(\vecs|\vecx_j)  W^n(\phi^{-1}((i,\sigma_2))|\vecx_i,\vecs) \nonumber \\
    &=  \frac{M-1}{M} - \frac{1}{M^2} \sum_{i=1}^M \sum_{j \neq i} \sum_{\vecs} V^n(\vecs|\vecx_j)  W^n(\phi^{-1}((i,\sigma_2))^C|\vecx_i,\vecs) \\
    &\geq  \frac{M-1}{M} - \frac{1}{M^2} \sum_{i=1}^M \sum_{j =1}^M \sum_{\vecs} V^n(\vecs|\vecx_j)  W^n(\phi^{-1}((i,\sigma_2))^C|\vecx_i,\vecs) \\
    \text{(note that $V^n(\vecs|\vecx)$ is non-zero only over $\vecs \in \cS_2^n$)} \\
    &=  \frac{M-1}{M} - P_{e}^d(f,\phi,2). \\
    \therefore \ \  P_{e}^d(f,\phi,1) + P_{e}^d(f,\phi,2) &\geq \frac{M-1}{M}. \\
    \implies P_{e}^d(f,\phi) &\geq \frac{M-1}{2M}.
\end{align}
    
Similarly, for case (B), let $U(s|x)$ and $V(s|x)$ be the distributions satisfying cis-symmetrizibility condition 
(without loss of generality we assume $\sigma_1$-symmetrizable). Let $T(s|x) = U(s|x)$. 
By performing similar steps, one can get the following inequality 
$$ P_{e}^d(f,\phi,1) \geq \frac{M-1}{2M}.$$
$$\therefore \ \  P_{e}^d(f,\phi) \geq \frac{M-1}{2M}. $$

For case (C), say $Z_{Y|X} \in \ww$. 
Let $P_k(s)$ be a distribution over $\cS_k$ be such that $\sum_s P_k(s)W_{Y|X,S=s} = Z_{Y|X}$. 
Set $T(s|x) = P_1(s)$. Simplifying \eqref{eq:pre-cases}, we get
$$ P_{e}^d(f,\phi,1) \geq \frac{1}{M} \sum_{i=1}^M  Z^n(\phi^{-1}((i,\sigma_1))^C|\vecx_i). $$
Similarly, setting $T(s|x) = P_2(s)$, we get,
$$ P_{e}^d(f,\phi,2) \geq \frac{1}{M} \sum_{i=1}^M  Z^n(\phi^{-1}((i,\sigma_2))^C|\vecx_i). $$
Adding both,
$$P_{e}^d(f,\phi,1) + P_{es}(f,\phi,2) \geq 1. $$
$$\therefore \ \ P_{e}^d(f,\phi) \geq \frac{1}{2}. $$

Therefore, non-any-symmetrizability and $\ww = \emptyset$ is necessary for non-zero rate of 
communication and \cs\ identification. 
\end{proof}
Similar steps can be performed to show that any-symmetrizability implies $\Cdetcomm = 0$.
\vspace{20pt}

\begin{lemma} \label{lem:cdauth-pos}
If \cavc\ is trans-symmetrizable then $\Cdetauth = 0$.
\end{lemma}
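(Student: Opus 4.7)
The plan is to adapt the trans-symmetrizability attack from case (A) of Lemma~\ref{lem:converses} to the relaxed decoder of the `communication or identification' task. Let $U:\cX\to\cS_1$ and $V:\cX\to\cS_2$ be the channels satisfying the trans-symmetrizability condition \eqref{eq:transsym}. Given any deterministic code with codewords $\vecx_1,\ldots,\vecx_M$ and decoder $\phi:\cY^n\to\widehat\cM=\cM\cup\{\sigma_1,\sigma_2\}$, I would have the adversary (under \cs\ $\sigma_1$) draw $\vecs$ from the mixture $R(\vecs)=\frac{1}{M}\sum_{i=1}^M U^n(\vecs|\vecx_i)$. Since the worst-case \avcstate\ error probability dominates any randomized attack, $P_{\mathsf{e}}^{\mathsf{d}}(f,\phi,1)\ge \sum_{\vecs}R(\vecs)P_{\mathsf{e}}^{\mathsf{d}}(f,\phi,\vecs)$, and expanding the inner average yields a double sum over $(i,j)$.

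The next step is to apply the memoryless version of \eqref{eq:transsym}, namely
\[
\sum_{\vecs} U^n(\vecs|\vecx_i)W^n(\vecy|\vecx_j,\vecs)=\sum_{\vecs} V^n(\vecs|\vecx_j)W^n(\vecy|\vecx_i,\vecs),
\]
so that the lower bound becomes $\frac{1}{M^2}\sum_{i,j}\sum_{\vecs}V^n(\vecs|\vecx_j)W^n(\phi^{-1}(\cE_{j,1})|\vecx_i,\vecs)$, where $\cE_{j,1}=\widehat\cM\setminus\{j,\sigma_1\}$. Restricting to $i\neq j$ and using the key observation that $\{j,\sigma_1\}$ and $\{i,\sigma_2\}$ are disjoint subsets of $\widehat\cM$ whenever $i\neq j$, we get $\phi^{-1}(\{i,\sigma_2\})\subseteq\phi^{-1}(\cE_{j,1})$; equivalently, $W^n(\phi^{-1}(\cE_{j,1})|\vecx_i,\vecs)\ge 1-W^n(\phi^{-1}(\cE_{i,2})|\vecx_i,\vecs)$, because $\cE_{i,2}=\widehat\cM\setminus\{i,\sigma_2\}$ is the complementary decoding event. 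This disjointness is the structural reason trans-symmetrizability blocks even the relaxed task: whatever single symbol the decoder outputs can be correct under at most one of the two indistinguishable scenarios.

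Substituting and collecting terms gives
\[
P_{\mathsf{e}}^{\mathsf{d}}(f,\phi,1)\ge \frac{M-1}{M}-\sum_{\vecs}\tilde R(\vecs)\,\frac{1}{M}\sum_{i=1}^M W^n(\phi^{-1}(\cE_{i,2})|\vecx_i,\vecs),
\]
with $\tilde R(\vecs)=\frac{1}{M}\sum_j V^n(\vecs|\vecx_j)$ supported on $\cS_2^n$. Since the averaged quantity on the right is bounded above by $P_{\mathsf{e}}^{\mathsf{d}}(f,\phi,2)$ (the max over $\vecs\in\cS_2^n$ dominates any average), we obtain $P_{\mathsf{e}}^{\mathsf{d}}(f,\phi,1)+P_{\mathsf{e}}^{\mathsf{d}}(f,\phi,2)\ge\frac{M-1}{M}$, whence $P_{\mathsf{e}}^{\mathsf{d}}(f,\phi)\ge\frac{M-1}{2M}$. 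This bound does not decay with $n$ for any $M\ge 2$, so no positive rate is achievable, i.e., $\Cdetauth=0$.

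The proof is essentially a bookkeeping exercise once the right attack is chosen; the main subtlety is the disjointness argument in the second paragraph. Unlike in the plain communication converse, the decoder is now allowed to abstain by declaring a \cs, so one must verify that trans-symmetrizability confuses the decoder simultaneously about the message and the \cs, which is exactly what the disjointness of $\{j,\sigma_1\}$ and $\{i,\sigma_2\}$ captures.
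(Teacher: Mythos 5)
Your proposal is correct and is precisely the argument the paper intends: it instantiates the case~(A) attack from Lemma~\ref{lem:converses} (mixture attack $R(\vecs)=\frac{1}{M}\sum_i U^n(\vecs|\vecx_i)$, the product form of \eqref{eq:transsym}, and the swap to the $\sigma_2$ error term) for the relaxed error sets of the `or' task, arriving at the same bound $P_{\mathsf{e}}^{\mathsf{d}}(f,\phi)\ge\frac{M-1}{2M}$ that the paper cites. The disjointness of $\{j,\sigma_1\}$ and $\{i,\sigma_2\}$ for $i\neq j$ is exactly the detail needed to make the ``similar steps'' go through, and you have identified and justified it correctly.
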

Steps  similar to proof of Lemma~\ref{lem:converses} can be used to show
that trans-symmetrizability leads to the condition $P^d_e(f,\phi) \geq \frac{M-1}{2M}$.
\vspace{10pt}

\section*{Acknowledgments}
N. Sangwan and V. Prabhakaran acknowledge support of the Department of Atomic Energy, 
Government of India, under project no. RTI4001. 
N. Sangwan's work was additionally supported by the Tata Consultancy Services (TCS) 
foundation through the TCS Research Scholar Program.
Work of B. K. Dey was supported in part by Bharti Centre for Communication in IIT Bombay.
V. Prabhakaran's work was also supported by the 
Science \& Engineering Research Board, India through project MTR/2020/000308.

\bibliography{refs} 

\begin{thebibliography}{10}

\bibitem{blackwell-compound}
D.~Blackwell, L.~Breiman, and A.~J. Thomasian, ``The capacity of a class of
  channels,'' {\em The Annals of Mathematical Statistics}, vol.~30, no.~4,
  pp.~1229--1241, 1959.

\bibitem{blackwell-AVC}
D.~Blackwell, L.~Breiman, and A.~J. Thomasian, ``The capacities of certain
  channel classes under random coding,'' {\em The Annals of Mathematical
  Statistics}, vol.~31, no.~3, pp.~558--567, 1960.

\bibitem{wolf}
J.{Wolfowitz }, ``Simultaneous channels,'' 1959.

\bibitem{avc-csiszar-narayan}
I.~{Csiszar} and P.~{Narayan}, ``The capacity of the arbitrarily varying
  channel revisited: positivity, constraints,'' {\em IEEE Transactions on
  Information Theory}, vol.~34, no.~2, pp.~181--193, 1988.

\bibitem{NehaISIT21}
N.~Sangwan, M.~Bakshi, B.~K. Dey, and V.~Prabhakaran, ``Communication with
  adversary identification in byzantine multiple access channels,'' {\em IEEE
  International Symposium on Information Theory}, 2021.

\bibitem{Jahn1981}
J.~Jahn, ``Coding of arbitrarily varying multiuser channels,'' {\em IEEE Trans.
  Inf. Theory}, vol.~27, pp.~212--226, 1981.

\bibitem{PeregSteinberg2017}
U.~Pereg and Y.~Steinberg, ``The arbitrarily varying broadcast channel with
  degraded message sets with causal side information at the encoder.''
\newblock arXiv:1709.04770, 2017.

\bibitem{HofBross2006}
E.~{Hof} and S.~I. {Bross}, ``On the deterministic-code capacity of the
  two-user discrete memoryless arbitrarily varying general broadcast channel
  with degraded message sets,'' {\em IEEE Transactions on Information Theory},
  vol.~52, no.~11, pp.~5023--5044, 2006.

\bibitem{kosut-kliewer}
O.~{Kosut} and J.~{Kliewer}, ``Authentication capacity of adversarial
  channels,'' in {\em 2018 IEEE Information Theory Workshop (ITW)}, pp.~1--5,
  2018.

\bibitem{BKKGYu}
A.~Beemer, O.~Kosut, J.~Kliewer, E.~Graves, and P.~Yu, ``Structured coding for
  authenticationin the presence of a malicious adversary,'' {\em IEEE
  International Symposium on Information Theory}, 2019.

\bibitem{GYuS}
E.~Graves, P.~Yu, and P.~Spasojevic, ``Keyless authentication in the presence
  of asimultaneously transmitting adversary,'' {\em 2018 IEEE Information
  Theory Workshop (ITW)}.

\bibitem{KK2}
O.~{Kosut} and J.~{Kliewer}, ``Network equivalence for a joint
  compound-arbitrarily-varying network model,'' {\em IEEE Information Theory
  Workshop (ITW), 2016}.

\bibitem{csiszar-korner-book}
I.~Csiszar and J.~Korner, {\em Information Theory: Coding Theorems for Discrete
  Memoryless Systems}.
\newblock USA: Academic Press, Inc., 1982.

\bibitem{skala2013hypergeometric}
M.~Skala, ``Hypergeometric tail inequalities: ending the insanity,'' 2013.

\end{thebibliography}
\bibliographystyle{ieeetr}

\appendix
We now give the proof for Lemma~\ref{lb-1}.

Consider the channel $\widetilde Z_{Y|X}$ which is the weighted average of the individual 
channels $W_{Y|X,S=s}$ (weighted with respect to fraction of $s \in \cS_k$ occurrences, formalized later). 
We prove that the input $\vecx$, which is in the typical set $\tau_{X}$, and the output $\vecy$ would be jointly typical with respect to 
the distribution $P_X \times \widetilde Z_{Y|X}$.

Without loss of generality, we analyze the problem when $\vecs \in \cS_1^n$. 
Let $\cS_1 = \{S_1,S_2,\ldots,S_{T}\}$ (where $T = |\cS_1|$). Denote the indices of 
$\vecs \in \cS^n_1$ where $s = S_i$ as $J_{i}(\vecs)$, ie, 
$J_{i}(\vecs) = \{j: s_j = S_i\}$. Notice that,

\begin{align} 
    P(\vecy,\vecx|\vecs) &= \frac{1}{|\tau_{X}|}W^n(\vecy|\vecx,\vecs) \\
     &= \frac{1}{|\tau_{X}|}\prod_{i=1}^n W(y_i|x_i,s_i) \\
    \label{sq_eqn}    &= \frac{1}{|\tau_{X}|}\prod_{i=1}^T \left[\prod_{j \in J_{i}(\vecs)}W(y_j|x_j,S_i)\right].
\end{align}

Fix an $\epsilon_1$ (value described later) and from the sets $J_{i}(\vecs)$, consider the sets which have 
$|J_{i}(\vecs)|>\epsilon_1 n$, i.e., $\mathcal{G} := \{i \in \{1,2,\ldots,T\}:|J_{i}(\vecs)|>\epsilon_1 n \}$. 
$\mathcal{G}$ is non-empty for any value of $\epsilon_1 < 1/T$. Choose any $\epsilon_1 < \min\{1/T,1/T'\}$ where $T' = |\cS_2|$. 
Henceforth, we shall assume $\epsilon_1$ satisfies this condition.  
Define the `subset' vectors $ \vecx_i := \{x_j: j \in J_{i}(\vecs)\}$ and similarly $\vecy_i$. Let $\vecS_i$ be 
the vector of $|J_i(\vecs)|$ repetitions of symbol $S_i$. Then, we can write \eqref{sq_eqn} as
$$P(\vecy,\vecx|\vecs) = \frac{1}{|\tau_{X}|}\prod_{i=1}^T \left[W^{|J_{i}(\vecs)|}(\vecy_i|\vecx_i,\vecS_i)\right].$$
By Lemma~\ref{lem:type}, $\vecx_i,\ i \in \mathcal{G} $ are of type $\tau^{\epsilon_2}_{X}$  
with probability greater than $1-f(\epsilon_2)$ for arbitrarily small $\epsilon_2$ and sufficiently large $n$ as their lengths 
are at least $\epsilon_1n$ and $f(\cdot)$ satisfies $f(\epsilon_2) \to 0$ as $\epsilon_2 \to 0$. Therefore,
$ P\{ \vecx_i \in \tau^{\epsilon_2}_{X} \  \forall i \in \mathcal{G} \} \geq 1 - f_2(\epsilon_2)$
where $f_2(\cdot) = |\cG|f(\cdot)$ satisfies $f_2(\epsilon_2) \to 0$ as $\epsilon_2 \to 0$.

By conditional typicality lemma, if random variables $X,Y_k$ are distributed as $P_{XY_k} = P_X \times W_{Y|X,S=S_k}$, then
for any $\epsilon_4>0$
\begin{equation}
P\{(\vecx_i,\vecy_i) \in \tau^{\epsilon_3}_{XY_k} | \vecx_i \in \tau^{\epsilon_2}_{X} \} > 1 - \epsilon_4,  \forall i \in \mathcal{G}    
\end{equation}
for any $\epsilon_3 > \epsilon_2$ and sufficiently large $n$. Denote the event 
$\{(\vecx_i,\vecy_i) \in \tau^{\epsilon_3}_{XY_k} \forall i \in \mathcal{G} | \vecx_i \in \tau^{\epsilon_2}_{X} \forall i \in \mathcal{G} \} = \cB$. 
Similarly,
$$ P(\mathcal{B}) > 1-|\cG|\epsilon_3.$$
%
Therefore, with high probability, the $(\vecx_i,\vecy_i),\ i \in \cG$ are jointly typical according to the distribution $P_X\times W_{Y|X,S=S_k}$. 
Denote $W_{Y|X,S=S_i}$ as $Z_{Y|X}^i$ (this is a single letter channel). We now show that $(\vecx,\vecy)$ is jointly typical with 
$P_X \times \widetilde Z_{Y|X}$ with high probability, where
$$\widetilde Z_{Y|X}(b|a) = \frac{1}{\sum_{i \in \mathcal{G}}|J_{i}(\vecs)|}\sum_{i \in \mathcal{G}} Z^i_{Y|X}(b|a)|J_{i}(\vecs)|,\ (a,b)\in \cX \times \cY.$$
Clearly, $ \widetilde Z_{Y|X} \in \wa$. We need to show (w.h.p.)
$$ |\pi(a,b|\vecx,\vecy) - P_X(a)\widetilde Z_{Y|X}(b)| \leq \epsilon \ \forall (a,b) \in \mathcal{X}\times \mathcal{Y}$$
where $\pi(a,b|\vecx,\vecy)$ is the empirical distribution and $\epsilon$ is specified later. 

Since $\mathcal{G}$ contains $J_{i}(\vecs)$ which have at least cardinality of $\epsilon_1 n$, 
we can say that $\sum_{i \in \mathcal{G}^C}|J_{i}(\vecs)| \leq (T-1)\epsilon_1n$. 
Therefore, $\sum_{i \in \mathcal{G}}|J_{i}(\vecs)| > n(1-(T-1)\epsilon_1)$. Hence, w.h.p.,

\begin{align}
     \pi(a,b|\vecx,\vecy)  &= \frac{1}{n} \sum_{i=1}^K|J_{i}(\vecs)|\pi(a,b|\vecx_i,\vecy_i) \\
 &= \frac{1}{n} \left( \sum_{i \in \mathcal{G}}|J_{i}(\vecs)|\pi(a,b|\vecx_i,\vecy_i) + \sum_{i \in \mathcal{G}^C}|J_{i}(\vecs)|\pi(a,b|\vecx_i,\vecy_i) \right).
\end{align}
Further,w.h.p.,
\begin{align} 
    \frac{1}{n} \left( \sum_{i \in \mathcal{G}}|J_{i}(\vecs)|(1-\epsilon_3)P_X(a)Z^i_{Y|X}(b|a) \right) 
&\leq \pi(a,b|\vecx,\vecy) \leq \frac{1}{n} \left( \sum_{i \in \mathcal{G}}|J_{i}(\vecs)|(1+\epsilon_3)P_X(a)Z^i_{Y|X}(b|a) + \sum_{i \in \mathcal{G}^C}\epsilon_1n \right) \\
 (1-\epsilon_3)(1-(T-1)\epsilon_1) P_X(a)\widetilde Z_{Y|X}(b|a) 
&\leq \pi(a,b|\vecx,\vecy) \leq  (1+\epsilon_3)P_X(a)\widetilde Z_{Y|X}(b|a) + (T-1)\epsilon_1.
\end{align}
Therefore (w.h.p.),
\begin{align}
    |\pi(a,b|\vecx,\vecy) - P_X(a)\widetilde Z_{Y|X}(b|a) | \leq \max \{ &(\epsilon_1(T-1) +\epsilon_3 - \epsilon_1\epsilon_3(T-1)) P_X(a)\widetilde Z_{Y|X}(b|a) , \nonumber \\ &\epsilon_3 P_X(a)\widetilde Z_{Y|X}(b|a) + (T-1)\epsilon_1 \} \nonumber 
\end{align} 
\begin{equation}
    \hspace{-20pt} \leq \epsilon_3 + (T-1)\epsilon_1.
\end{equation}

Pick $\epsilon \geq \max\{\epsilon_3 + (T-1)\epsilon_1, \epsilon_3 + (T'-1)\epsilon_1\}$. Since $\epsilon_1,\epsilon_2,\epsilon_3$ and $\epsilon_4$ 
(with $\epsilon_2 < \epsilon_3$) can be set arbitrarily small for sufficiently large $n$, we can set $\epsilon$ to be arbitrarily small as well for large $n$.

\end{document}